\documentclass[%
preprint,
superscriptaddress,
 amsmath,amssymb,
 aps,
prb,
citeautoscript,
]{revtex4-2}

\usepackage[english]{babel}

\usepackage[letterpaper,top=2cm,bottom=2cm,left=2cm,right=2cm,marginparwidth=1.75cm]{geometry}
\usepackage{graphicx} 
\usepackage{placeins}
\usepackage{amsmath}
\usepackage{amssymb}
\usepackage{physics}
\usepackage{xcolor}
\DeclareRobustCommand{\TDK}[1]{{\color{blue}#1}}
\usepackage{amsthm}
\usepackage[colorlinks=true, allcolors=blue]{hyperref}
\usepackage{quantikz}
\usepackage{appendix}
\usepackage{algorithm}
\usepackage{algpseudocode}

\usepackage{tikz}
\usepackage{diagbox}
\usepackage{mathtools} 

\newtheoremstyle{defnopunct}%
  {3pt}{3pt}
  {}
  {}
  {\bfseries}
  {}
  {0.5em}
  {}

\theoremstyle{defnopunct}

\newtheorem{lemma}{Lemma}[section]
\newtheorem{theorem}{Theorem}[section]

\newtheorem{remark}{Remark}[section]
\newtheorem{assumption}{Assumption}[section]
\usepackage{comment} 
\usepackage{parskip}
\usepackage{csquotes}
\newtheorem*{mainresult}{Main result.}
\newcommand{\aQa}{\affiliation{%
    $\langle a Q a^L \rangle$ Applied Quantum Algorithms,
    Leiden University, the Netherlands}
}
\newcommand{\inner}[2]{\langle #1, #2\rangle}
\newcommand{\Ran}{\operatorname{Ran}}

\begin{document}
\title{
From Wavefunction Regularity to Eigenvector Conditioning:\\
Accuracy--Cost Trade-offs of Quantum Algorithms for Non-Hermitian Transcorrelated Hamiltonians\\
}

\author{Cheng-Lin Hong}
\affiliation{Center for Advanced Systems Understanding (CASUS), 02826 Görlitz, Germany}
\affiliation{Helmholtz-Zentrum Dresden-Rossendorf (HZDR), 01328 Dresden, Germany}

\author{Emiel Koridon}
 \aQa
 \affiliation{Covestro Deutschland AG, Leverkusen, Nordrhein-Westfalen 51373, Germany}

\author{Paul~K.~Faehrmann}
 \affiliation{Dahlem Center for Complex Quantum Systems, Freie Universit\"{a}t Berlin, 14195 Berlin, Germany}
 \affiliation{Walrus Computing, Munich, Germany}

\author{Thomas D. K\"uhne}
\affiliation{Center for Advanced Systems Understanding (CASUS), 02826 Görlitz, Germany}
\affiliation{Helmholtz-Zentrum Dresden-Rossendorf (HZDR), 01328 Dresden, Germany}
\affiliation{Institute of Artificial Intelligence, Technische Universit\"at Dresden, Helmholtzstra{\ss}e 10, 01069 Dresden, Germany}

\author{Stefano Polla}
 \affiliation{QuSoft, HIMS \& IvI, University of Amsterdam, the Netherlands}
 \aQa

\author{Werner Dobrautz}
\affiliation{Center for Advanced Systems Understanding (CASUS), 02826 Görlitz, Germany}
\affiliation{Helmholtz-Zentrum Dresden-Rossendorf (HZDR), 01328 Dresden, Germany}
\affiliation{Center for Scalable Data Analytics and Artificial Intelligence (ScaDS.AI) Dresden/Leipzig, 01069 Dresden, Germany}
\affiliation{Technical University Dresden, 01069 Dresden, Germany}

\begin{abstract}
The electron--electron Coulomb singularity forces cusp structure in electronic
wavefunctions. This cusp structure limits wavefunction regularity and
slows the convergence of expansions in smooth bases, requiring large basis sets
to reach a target accuracy. The transcorrelated (TC) method incorporates this
short-range behavior into the Hamiltonian, thereby improving the regularity of
the transformed wavefunction and accelerating convergence toward the
complete-basis-set limit. The resulting Hamiltonian is non-Hermitian and
non-normal, which limits the direct applicability of standard quantum
algorithms. For such non-normal operators, the complexity of several recent
quantum algorithms depends explicitly on quantities related to eigenvector
conditioning and eigenvalue sensitivity. Understanding these quantities is
essential for assessing both algorithmic complexity and quantum resource
requirements. In this work, we address these issues in two stages. We first
study a solvable model, which provides a controlled setting for isolating how
transcorrelation affects basis convergence and eigenvector conditioning. Within
this setting, we investigate two distinct one-parameter Jastrow correlators. We
derive  the asymptotic energy-error scaling: $N^{-1/2}$ for the bare projection
and $N^{-3/2}$ for both TC correlators at generic fixed parameters, with $N$
denoting the basis size. Furthermore, we show that this faster convergence can
yield a polynomial improvement in the asymptotic query-complexity upper bound.
We then extend the analysis to second-quantized electronic-structure
Hamiltonians and use particle-number-sector diagnostics to distinguish
target-sector from auxiliary-sector conditioning.  Our work provides a framework
for understanding the central accuracy--cost trade-off in TC Hamiltonians.
\end{abstract}

\date{\today}

\maketitle
\section{Introduction}
\label{sec:intro}
High-accuracy quantum chemistry simulations remain challenging due to
the computational cost of accurately describing electron correlation effects.
While quantum computing offers a
promising framework for simulating many-electron systems, the accuracy
achievable in both classical and quantum calculations depends on how
efficiently the electronic Hamiltonian and its states can be represented. In particular,
both classical and quantum calculations 
typically operate
on Hamiltonians projected onto finite one-electron basis sets, introducing a
systematic approximation. The corresponding many-electron space is built from
antisymmetrized products of the selected orbitals, and therefore does not
incorporate the interelectronic coordinate $r_{12}$ explicitly. The slow
convergence of the correlation energy reflects the difficulty of representing
the correct short-range behavior of the electronic wave function at
electron--electron coalescence points, known as Kato's cusp
conditions~\cite{kato_eigenfunctions_1957}.

To address this convergence issue, explicitly correlated methods incorporate
the interelectronic distance $r_{12}$ directly into the
wave-function ansatz. The idea goes back to the pioneering work of Hylleraas on
the helium atom~\cite{hylleraas_neue_1929} and was later developed by Kutzelnigg
and coworkers~\cite{kutzelnigg_r12-dependent_1985,
klopper_mibller-plesset_1987}. For detailed reviews, see
Refs.~\cite{klopper_r12_2006,kong_explicitly_2012,hattig_explicitly_2012,
gruneis_perspective_2017}. While these approaches have been highly successful,
an alternative perspective is to incorporate the effects of electron correlation directly into
the Hamiltonian via a similarity transformation, rather than modifying the
wave-function ansatz itself. This approach, known as the transcorrelated (TC)
method, was initially proposed by Hirschfelder~\cite{hirschfelder_removal_1963}
and subsequently refined by Boys and Handy~\cite{boys_condition_1969,
boys_handy_TC_1969}. In this framework, the Hamiltonian is
similarity-transformed using a Jastrow
factor~\cite{PhysRev.98.1479}, which is chosen to encode the cusp structure.

This transformation introduces two main challenges. First,
this similarity transformation generates
effective interactions up to the three-body level, which
increases the cost of constructing and
representing the Hamiltonian. In practice, this difficulty can be mitigated by
the xTC approximation~\cite{christlmaier_xtc_2023}. Second, the resulting
transcorrelated Hamiltonian is non-Hermitian
and non-normal in the standard inner
product.
Unlike normal operators, non-normal operators can have
non-orthogonal eigenvector bases, and the resulting eigenvector conditioning is
closely connected to their spectral sensitivity~\cite{higham_accuracy_2002,
trefethen_spectra_2005}. Basis-set acceleration must therefore be assessed
together with the representation and conditioning costs introduced by the
transformation.

Despite these challenges, the TC method has seen renewed attention in recent
years~\cite{ten-no_feasible_2000,hino_biorthogonal_2001,
hino_application_2002,jeszenszki_accelerating_2018,cohen_similarity_2019,
giner2021new,baiardi_transcorrelated_2020,PhysRevResearch.3.033072,
schraivogel_transcorrelated_2021,schraivogel_transcorrelated_2023, Haupt2023-uo, Liao2026}. On the
classical side, transcorrelated Hamiltonians have been treated successfully in
transcorrelated coupled-cluster approaches~\cite{PhysRevResearch.3.033072,
schraivogel_transcorrelated_2021,schraivogel_transcorrelated_2023}, in full
configuration interaction quantum Monte Carlo
(FCIQMC)~\cite{luo_combining_2018, Dobrautz2019, Guther2020, guther_binding_2021, Dobrautz2022}, selected configuration interaction~\cite{Ammar2024}, and in density-matrix
renormalization group (DMRG) methods~\cite{baiardi_transcorrelated_2020,
liao_density_2023}. More recently, TC Hamiltonians have also been studied in quantum
computing~\cite{motta_quantum_2020,
mcardle2020improvingaccuracyquantumcomputational,kumar_quantum_2022,
PhysRevResearch.5.023174,dobrautz_toward_2024,magnusson_towards_2024,
feniou_real-space_2025,uvarov_accuracy_2025}.

Earlier strategies for integrating the TC framework with quantum computing have
primarily followed two distinct directions. The first approach is based on
canonical transcorrelation \cite{yanai_canonical_2012}, in which 
the associated Baker--Campbell--Hausdorff expansion is truncated, yielding an approximate Hermitian Hamiltonian compatible with standard quantum algorithms~\cite{motta_quantum_2020,
kumar_quantum_2022}. The second focuses on variational approaches tailored to
near-term devices, which rely on algorithms such as variational quantum
imaginary-time evolution (VarQITE) to cope with the non-Hermitian
structure~\cite{mcardle2020improvingaccuracyquantumcomputational,
PhysRevResearch.5.023174,dobrautz_toward_2024,magnusson_towards_2024}.

As quantum computing progresses toward fault-tolerant regimes, new frameworks
have emerged for eigenvalue transformation and estimation beyond the Hermitian
setting. Early progress in this direction was made by
Shao~\cite{shao_computing_2022}, who developed algorithms for diagonalizable
matrices with real spectra and also considered more general settings under
additional assumptions. More recently, Low and Su introduced the quantum eigenvalue processing (QEP) framework~\cite{Low_2026}, which enables eigenvalue transformation (QEVT) and estimation (QEVE) beyond the Hermitian setting. QEVE has since been applied to transcorrelated Hamiltonians. Feniou \emph{et al.}~\cite{feniou_real-space_2025} proposed a first-quantized real-space framework in which transcorrelated Hamiltonians can be treated with QEVE, while Uvarov and Izmaylov~\cite{uvarov_accuracy_2025} subsequently provided the first resource estimates for applying QEVE to second-quantized TC Hamiltonians and compared these estimates with those for standard qubitization. Their study also emphasized that estimating the relevant condition number and understanding its behavior remain key challenges for assessing the practical success of QEVE. These challenges raise three questions that have not yet been systematically investigated in the quantum-algorithmic setting: (i) how do finite-basis projection and choice of correlator affect accuracy and eigenvector conditioning in TC Hamiltonians, (ii) how does the resulting conditioning evolve as the basis is enlarged, and (iii) under what conditions does this faster basis convergence translate into an improved query-complexity bound for a target total accuracy?

In this work, we address these questions in two stages. We first analyze an exactly solvable model, the one-dimensional harmonic oscillator with a delta-function potential. This model provides a clean, analytically tractable setting in which all questions can be addressed: we compare two different one-parameter correlators, track basis-set convergence and eigenvector conditioning as the basis is enlarged, and analyze the resulting quantum query complexity.
We then extend the analysis to our main object of interest, second-quantized electronic Hamiltonians, to understand how the mechanisms identified in the solvable model appear in electronic-structure calculations. By combining these analyses with diagnostics relevant to quantum algorithms, we provide a physically motivated perspective on the trade-off between the accuracy gains from transcorrelation and the associated algorithmic costs in electronic-structure calculations.

The rest of this paper is organized as follows. Sec.~\ref{sec:theory} develops the TC operator in real space and finite-dimensional representations and introduces
the conditioning measures used throughout. Sec.~\ref{sec:SHO-delta} applies this framework
to the exactly solvable contact-interaction model. Sec.~\ref{sec:molecular} examines accuracy--cost trade-offs and algorithm-relevant diagnostics for second-quantized electronic Hamiltonians. Sec.~\ref{sec:conclusion} concludes and outlines future directions.

\section{Theoretical Framework}
\label{sec:theory}
\subsection{Transcorrelated Hamiltonian in Real Space}
\label{sec:TC-real-space}
Consider a Schr\"odinger-type operator of the form
\begin{equation}
    \hat H=\hat T+\hat V,
\end{equation}
where
\begin{equation}
    \hat T=-\frac{1}{2}\sum_i\nabla_i^2
\end{equation}
is the kinetic-energy operator, and $\hat V$ denotes the coordinate-dependent
part of the Hamiltonian, including the external potential and interparticle
interactions. The corresponding eigenvalue problem is
\begin{equation}
    \hat H\Psi_k=E_k\Psi_k.
\end{equation}

Coulombic or contact singularities in $\hat V$ can induce nonanalytic
structure in the eigenfunctions of $\hat H$
\cite{kato_eigenfunctions_1957,albeverio_solvable_1988}. Finite expansions in
smooth basis functions do not represent this structure explicitly, which can
limit the convergence of the resulting wave functions and Ritz eigenvalues
\cite{hill_rates_1985}.

For Coulombic electronic Hamiltonians, the eigenfunctions admit a factorized
representation
\begin{equation}
    \Psi_k=\mathcal F\,\varphi_k,
\end{equation}
where $\mathcal F$ is an explicit universal factor containing the nonanalytic
coalescence structure, while the residual function $\varphi_k$ has higher
regularity~\cite{fournais_sharp_2005,yserentant_regularity_2010}. Motivated by this factorization, we
introduce a parametrized Jastrow factor $e^{\hat\tau_{\boldsymbol\theta}}$, where
$\hat\tau_{\boldsymbol\theta}$ is a real-valued, coordinate-dependent
multiplicative correlator, and define the residual function $\Phi_k^R$ by
\begin{equation}
    \Psi_k=e^{\hat\tau_{\boldsymbol\theta}}\Phi_k^R.
    \label{eq:tc-state-factorization}
\end{equation}
The parametrized factor need not coincide with the 
universal factor
$\mathcal F$; it is chosen to encode selected nonanalytic features.

Substitution of Eq.~\eqref{eq:tc-state-factorization} into the original
eigenvalue equation, followed by left multiplication by
$e^{-\hat\tau_{\boldsymbol\theta}}$, gives
\begin{equation}
    e^{-\hat\tau_{\boldsymbol\theta}}
    \hat H
    e^{\hat\tau_{\boldsymbol\theta}}
    \Phi_k^R
    =E_k\Phi_k^R,
    \label{eq:transformed-eigenvalue-equation}
\end{equation}
where the superscript $R$ identifies $\Phi_k^R$ as a right eigenfunction of the
transformed operator. The transcorrelated Hamiltonian is thus defined at the
operator level by
\begin{equation}
    \hat H_{\mathrm{TC}}(\boldsymbol\theta)
    =e^{-\hat\tau_{\boldsymbol\theta}}
     \hat H
     e^{\hat\tau_{\boldsymbol\theta}},
    \qquad
    S_{\boldsymbol\theta}=e^{\hat\tau_{\boldsymbol\theta}}.
    \label{eq:similarity_transform}
\end{equation}
Here $\boldsymbol\theta$ parametrizes the chosen correlator and
determines properties such as its functional form or range. These parameters
characterize the transformation rather than the bare Hamiltonian $\hat H$, and for admissible $\boldsymbol\theta$ the spectrum of
$\hat H_{\mathrm{TC}}(\boldsymbol\theta)$ coincides with that of $\hat H$.

For electronic-structure applications, $\hat\tau_{\boldsymbol\theta}$ is usually
chosen as a symmetric pairwise Jastrow correlator,
\begin{equation}
    \hat \tau_{\boldsymbol\theta}
    = \sum_{i<j} u_{\boldsymbol\theta}(\mathbf r_i,\mathbf r_j),
    \qquad
    u_{\boldsymbol\theta}(\mathbf r_i,\mathbf r_j)
    =u_{\boldsymbol\theta}(\mathbf r_j,\mathbf r_i).
    \label{eq:pairwise_tau}
\end{equation}
Throughout the following discussion, we write
$u_{ij} \equiv u_{\boldsymbol\theta}(\mathbf r_i,\mathbf r_j)$, where
$\mathbf r_i$ denotes the coordinate of electron $i$.
In the one-dimensional model considered later, $\hat \tau_{\boldsymbol\theta}$
becomes the
multiplication operator generated by a scalar function of one coordinate.

To obtain the explicit form of
$\hat{H}_{\mathrm{TC}}(\boldsymbol\theta)$, we employ the
Baker--Campbell--Hausdorff (BCH) expansion.
Since
$\hat{\tau}_{\boldsymbol\theta}$ is multiplicative, it commutes with all
multiplicative terms in $\hat{V}$, so that the nontrivial commutators arise from the
kinetic operator $\hat{T}$. Because $\hat{T}$
contains derivatives only up to second order, the BCH series
truncates exactly after the second commutator (see
Appendix~\ref{app:BCH_derivation}), yielding
\begin{equation}
    \label{eq:Htc-commutator}
    \hat{H}_{\mathrm{TC}}(\boldsymbol\theta)= \hat{H} + \comm{\hat{T}}{\hat{\tau}_{\boldsymbol\theta}} + \frac{1}{2} \comm{\comm{\hat{T}}{\hat{\tau}_{\boldsymbol\theta}}}{\hat{\tau}_{\boldsymbol\theta}}.
\end{equation}

The resulting Hamiltonian contains a generally non-Hermitian
first-order differential operator and a scalar correction generated by the
similarity transformation.
Substituting the explicit commutators gives
\begin{equation}
    \label{eq:tc-real-space}
    \hat{H}_{\mathrm{TC}}(\boldsymbol\theta)=
    \hat {H}
    - \sum_i (\nabla_i  \tau_{\boldsymbol\theta})\!\cdot\!\nabla_i
    - \frac{1}{2}\sum_i \nabla_i^2  \tau_{\boldsymbol\theta}
    - \frac{1}{2}\sum_i |\nabla_i  \tau_{\boldsymbol\theta}|^2,
\end{equation}
where $\tau_{\boldsymbol\theta}$ denotes the coordinate-space representation
of $\hat{\tau}_{\boldsymbol\theta}$, and $\mathbf r$ collectively denotes all particle coordinates.

It is convenient to distinguish the first-order differential operator,
the scalar correction generated by the TC transformation, and the complete
transformed scalar potential.
We define
\begin{equation}
    \mathbf{F}_{i,\boldsymbol\theta}(\mathbf{r}) \equiv \nabla_i \tau_{\boldsymbol\theta}(\mathbf{r}),
    \label{eq:first_order_field_def}
\end{equation}

\begin{equation}
    V_{\mathrm{TC,eff},\boldsymbol\theta}(\mathbf{r})
    \equiv
    - \frac{1}{2}\sum_i \left[
        \nabla_i^2 \tau_{\boldsymbol\theta}(\mathbf{r})
        + |\nabla_i \tau_{\boldsymbol\theta}(\mathbf{r})|^2
    \right].
    \label{eq:Veff_def}
\end{equation}
The complete transformed scalar potential is then
\begin{equation}
    V_{\mathrm{eff},\boldsymbol\theta}(\mathbf r)
    \equiv V(\mathbf r)
    +V_{\mathrm{TC,eff},\boldsymbol\theta}(\mathbf r),
    \label{eq:Veff_total_def}
\end{equation}

Here $V(\mathbf r)$ is the coordinate-space representation of $\hat V$.
Thus, $V_{\mathrm{TC,eff},\boldsymbol\theta}$ contains only the scalar terms
generated by the similarity transformation, whereas
$V_{\mathrm{eff},\boldsymbol\theta}$ also contains the original potential.
With these definitions, the transcorrelated Hamiltonian can be
written as
\begin{equation}
    \hat{H}_{\mathrm{TC}}(\boldsymbol\theta)
    =
    \hat H
    - \sum_i \mathbf{F}_{i,\boldsymbol\theta}(\mathbf{r}) \cdot \nabla_i
    + V_{\mathrm{TC,eff},\boldsymbol\theta}(\mathbf{r}).
    \label{eq:TC-first-order-form}
\end{equation}

Formally, a sufficient condition for admissibility is that
$S_{\boldsymbol\theta}$ is boundedly invertible on the Hilbert space
of interest.
Under this assumption, the transformation in
Eq.~\eqref{eq:similarity_transform}
can equivalently be written as
\begin{equation}
    \hat{H}_{\mathrm{TC}}(\boldsymbol\theta)
    = S_{\boldsymbol\theta}^{-1}\hat{H}S_{\boldsymbol\theta},
\end{equation}
which defines a genuine similarity transformation; consequently,
$\hat{H}_{\mathrm{TC}}(\boldsymbol\theta)$ and $\hat H$ share the
same spectrum.  We adopt this assumption throughout the operator-level
discussion; more general unbounded similarity transformations need
not preserve the spectrum~\cite{lowdin_change_1988}.

Beyond preserving the spectrum,
the bounded invertibility of $S_{\boldsymbol\theta}$ gives
$\hat{H}_{\mathrm{TC}}(\boldsymbol\theta)$ an additional structural property.
Assuming that $\hat H$ is self-adjoint, define the positive metric operator as
\begin{equation}
    \eta_{\boldsymbol\theta}
    = S_{\boldsymbol\theta}^\dagger S_{\boldsymbol\theta} .
\end{equation}
Provided the relevant domains are compatible, the TC
Hamiltonian is
\textit{quasi-Hermitian} with respect to
$\eta_{\boldsymbol\theta}$~\cite{MR187086,antoine_remarks_2014,scholtz1992quasi},
in the sense that
\begin{equation}
    \hat{H}_{\mathrm{TC}}(\boldsymbol\theta)^\dagger \eta_{\boldsymbol\theta}
    = \eta_{\boldsymbol\theta} \hat{H}_{\mathrm{TC}}(\boldsymbol\theta).
    \label{eq:quasi-hermitian-eq}
\end{equation}
Under this domain-compatibility assumption,
$\hat{H}_{\mathrm{TC}}(\boldsymbol\theta)$
may equivalently be regarded as self-adjoint with respect to the
modified inner product induced by $\eta_{\boldsymbol\theta}$.

\subsection{Finite-Dimensional Transcorrelated Hamiltonians}
\label{sec:finite-dimension-tc}
In practice, neither classical nor quantum computations are performed directly
in the infinite-dimensional Hilbert space.
Instead, one works in a finite-dimensional subspace obtained by
discretization or basis truncation. We denote this subspace by $\mathcal B$ and
let $P_{\mathcal B}$ be the orthogonal projector onto it. For a
selected correlator parameter
$\boldsymbol\theta$, the object used in computation is therefore not
$\hat H_{\mathrm{TC}}(\boldsymbol\theta)$ itself but
its finite-basis representation
\begin{equation}
    H_{\mathrm{TC}}^{\mathcal B}(\boldsymbol\theta)
    = P_{\mathcal B}\hat H_{\mathrm{TC}}(\boldsymbol\theta)P_{\mathcal B},
    \label{eq:projected_tc}
\end{equation}
with associated right eigenpairs
\begin{equation}
    H_{\mathrm{TC}}^{\mathcal B}(\boldsymbol\theta)
    |\Phi_k^{\mathcal B}(\boldsymbol\theta)\rangle
    = E_k^{\mathcal B}(\boldsymbol\theta)
    |\Phi_k^{\mathcal B}(\boldsymbol\theta)\rangle .
    \label{eq:projected_tc_eigenproblem}
\end{equation}

We illustrate the resulting structure in the molecular case, where
$\mathcal B$ is the many-electron space generated by a finite set of orthonormal
spatial orbitals $\{\phi_p(\mathbf r)\}_{p=1}^M$. The correlator
$\hat\tau_{\boldsymbol\theta}$ takes the pairwise Jastrow form of
Eq.~\eqref{eq:pairwise_tau}.
For this pairwise correlator, the additional terms in
Eq.~\eqref{eq:tc-real-space} can be grouped into the
$\boldsymbol\theta$-dependent effective two- and three-body operators
(see Appendix~\ref{app:pairwise_derivation}):
\begin{subequations}
\begin{align}
    \label{eq:K_operator_exact}
    \hat{K}_{\boldsymbol\theta}(\mathbf{r}_i, \mathbf{r}_j)
    &= \frac{1}{2}\Big(
        \nabla_i^2 u_{ij}
        + \nabla_j^2 u_{ij}
        + |\nabla_i u_{ij}|^2
        + |\nabla_j u_{ij}|^2
    \Big) \nonumber \\
    &\quad + \nabla_i u_{ij}\cdot\nabla_i
        + \nabla_j u_{ij}\cdot\nabla_j , \\
    \label{eq:L_operator_exact}
    \hat{L}_{\boldsymbol\theta}(\mathbf{r}_i, \mathbf{r}_j, \mathbf{r}_k)
    &= \nabla_i u_{ij}\cdot\nabla_i u_{ik}
     + \nabla_j u_{ji}\cdot\nabla_j u_{jk}
     + \nabla_k u_{ki}\cdot\nabla_k u_{kj},
\end{align}
\end{subequations}
where $u_{ij} \equiv u_{\boldsymbol\theta}(\mathbf r_i, \mathbf r_j)$
as defined in Eq.~\eqref{eq:pairwise_tau}.

Let $a_{p\sigma}^\dagger$ and $a_{p\sigma}$ denote the
fermionic creation and annihilation operators associated with orbital
$\phi_p$ and spin $\sigma$. Taking the matrix elements of the one-, two-, and
three-body operators in this orbital basis gives the second-quantized
representation~\cite{cohen_similarity_2019}
\begin{equation}
\begin{aligned}
    H_{\mathrm{TC}}^{\mathcal B}(\boldsymbol\theta)
    =\;
    &\sum_{pq}\sum_{\sigma}
        h_{pq}\, a_{p\sigma}^\dagger a_{q\sigma} \\
    &+ \frac{1}{2}\sum_{pqrs}\sum_{\sigma\tau}
        \bigl(V_{pqrs}-K_{pqrs}(\boldsymbol\theta)\bigr)\,
        a_{p\sigma}^\dagger a_{q\tau}^\dagger a_{s\tau} a_{r\sigma} \\
    &- \frac{1}{6}\sum_{pqrstu}\sum_{\sigma\tau\lambda}
        L_{pqrstu}(\boldsymbol\theta)\,
        a_{p\sigma}^\dagger a_{q\tau}^\dagger a_{r\lambda}^\dagger
        a_{u\lambda} a_{t\tau} a_{s\sigma},
\end{aligned}
\end{equation}
where $h_{pq}$ and $V_{pqrs}$ denote the usual one- and two-electron
integrals, while
\begin{align}
    K_{pqrs}(\boldsymbol\theta)
    &= \mel{\phi_p\phi_q}{\hat{K}_{\boldsymbol\theta}}{\phi_r\phi_s}, \\
    L_{pqrstu}(\boldsymbol\theta)
    &= \mel{\phi_p\phi_q\phi_r}{\hat{L}_{\boldsymbol\theta}}{\phi_s\phi_t\phi_u},
    \label{eq:3BL}
\end{align}
are the effective two- and three-body integrals. 

The finite-basis representation $H_{\mathrm{TC}}^{\mathcal
B}(\boldsymbol\theta)$ depends on two distinct inputs: (i) the finite space
$\mathcal B$, which specifies the representable subspace, and (ii) the
correlator parameter $\boldsymbol\theta$, which determines the underlying
non-unitary transformation. Both inputs affect the structure of the
finite-dimensional representation.
At finite $\mathcal B$, the
eigenvalues $E_k^{\mathcal B}(\boldsymbol\theta)$ need not coincide with those
of $\hat H$. Under the usual spectral-convergence assumptions, they converge to
the corresponding eigenvalues of $\hat H$ along a basis sequence approaching
the complete-basis-set limit.
The transcorrelated Hamiltonian is designed to accelerate this convergence.
The underlying mechanism is the regularity gain of
Eq.~\eqref{eq:tc-state-factorization}, which can improve approximability in
suitable finite-dimensional spaces~\cite{yserentant_mixed_2011}, although
the realized convergence rate still depends on $\mathcal B$ and
$\boldsymbol\theta$.

However, this acceleration comes at a structural cost. The projected TC
operator $H_{\mathrm{TC}}^{\mathcal B}(\boldsymbol\theta)$
is generally non-Hermitian and non-normal. In contrast, the bare projection
$P_{\mathcal B}\hat H P_{\mathcal B}$ remains Hermitian and therefore normal.
It has orthonormal eigenvectors
and real eigenvalues, and its Ritz values satisfy Rayleigh--Ritz upper bounds.
These properties need not hold in the projected TC setting, where the
eigenbasis is generically non-orthogonal.

\subsection{Conditioning and Eigenvalue Sensitivity}
\label{sec:cond}
A finite-dimensional operator $A$ admits an orthonormal eigenbasis if and
only if it is normal ($A^\dagger A = A A^\dagger$). The projected operator
$H_{\mathrm{TC}}^{\mathcal B}(\boldsymbol\theta)$ is generally non-normal, not
merely non-Hermitian, and therefore admits no orthonormal eigenbasis. The
departure from normality has direct
numerical consequences and can amplify
eigenvalue sensitivity to small perturbations, including numerical
errors~\cite{higham_accuracy_2002,trefethen_spectra_2005}.
This feature is relevant
to recent quantum algorithms that estimate
eigenvalues of non-normal operators: a measure of non-normality
enters their
query-complexity bounds. Such bounds have been
established in
\cite{shao_computing_2022,Low_2026,3n8f_k8pl,zhang2025heisenberglimit} under
suitable assumptions on the input operator.

The quantity appearing in several of these bounds is the condition number
of an eigenvector matrix.
For a diagonalizable matrix $A \in \mathbb{C}^{N\times N}$, we
write
\begin{equation}
  A=V\Lambda V^{-1},
\end{equation}
where $\Lambda$ is diagonal and $V$ is a matrix whose columns are right
eigenvectors of $A$. Throughout this work the condition number of an
eigenvector matrix $V$ is taken in the spectral norm,
\begin{equation}
    \kappa(V) \equiv \kappa_2(V)=\|V\|_2\,\|V^{-1}\|_2.
\end{equation}

The eigenvector matrix $V$ is not unique: its columns may be rescaled
independently, and within a degenerate eigenspace any linearly independent basis
may be chosen. Both choices can change $\kappa(V)$. A convention-independent
quantity is obtained by minimizing over all such choices, yielding the \emph{optimal}
condition number
\begin{equation}
    \kappa^{*} \;=\; \inf_{V} \kappa(V),
\end{equation}
where the infimum is taken over all eigenvector matrices $V$ that
diagonalize $A$.
Evaluating this infimum generally requires an optimization, even when all
eigenvalues are distinct~\cite{doi:10.1137/S0363012992238680}. In practice we
therefore normalize each right eigenvector to unit $2$-norm, as is standard. For
a distinct spectrum this determines $\kappa(V)$ uniquely, and the resulting value
exceeds $\kappa^{*}$ by at most a factor of
$\sqrt{N}$~\cite{van_der_sluis_condition_1969}. However, for a degenerate
spectrum this convention is not sufficient, since it does not select a basis
within a degenerate eigenspace. Different bases of that eigenspace can give different
values of $\kappa(V)$, which can be made arbitrarily large. Such a value is still
an upper bound on $\kappa^{*}$, but a large one does not by itself establish poor
intrinsic conditioning. The ambiguity originates in the degeneracy rather than in
the non-normality of $A$; see Appendix~\ref{app:condition-number}.

To resolve this ambiguity, for each distinct eigenvalue $\lambda_i$
($i=1,\ldots,r$) we further choose an orthonormal basis $Q_i$ of the exact
eigenspace $\ker(A-\lambda_i I)$ and collect these blocks in
\begin{equation}
    V_{\mathrm{os}}=\bigl[\,Q_1\ \cdots\ Q_r\,\bigr] .
    \label{eq:Vos}
\end{equation}
Although $V_{\mathrm{os}}$ itself is still not unique, this choice further
reduces the freedom within each eigenspace: arbitrary invertible changes of basis
are restricted to unitary ones, which preserve all singular
values~\cite{golub_matrix_2013}. Consequently, $\kappa(V_{\mathrm{os}})$ is
independent of the particular orthonormal basis chosen within each exact
eigenspace.

To relate the convention-defined $\kappa(V_{\mathrm{os}})$ to the intrinsic
optimum $\kappa^{*}$, let $P_i$ denote the spectral projector onto the
$m_i$-dimensional eigenspace $\ker(A-\lambda_i I)$; it is fixed by $A$ alone.
The corresponding projector norms satisfy the
bounds~\cite{doi:10.1137/0720040,10.1145/152613.152617}
\begin{equation}
   \max_i \|P_i\|_2 \;\leq\;
   \kappa^{*} \;\leq\;
   \kappa(V_{\mathrm{os}})
   \;\leq\; r \cdot \max_i \|P_i\|_2 .
   \label{eq:conditioning-bounds}
\end{equation}
Thus, $\max_i\|P_i\|_2$ and $\kappa(V_{\mathrm{os}})$ provide,
respectively, an intrinsic lower bound and a convention-defined upper bound on
$\kappa^{*}$.

When $\lambda_i$ is simple ($m_i=1$), the norm of its spectral
projector reduces to the eigenvalue condition number. Let $x$ and $y$ denote
the associated right and left eigenvectors of $A$, satisfying
\begin{equation}
   A x = \lambda_i x, \qquad A^\dagger y = \bar\lambda_i y.
\end{equation}
The eigenvalue condition number is then defined by
\begin{equation}
   \kappa(\lambda_i; A) \;=\; \frac{\|x\|_2\, \|y\|_2}{|y^\dagger x|}.
\end{equation}
A closely related quantity in the physics literature is the
Petermann factor~\cite{Petermann_factor}, which quantifies
left-right eigenvector non-orthogonality; for a simple eigenvalue,
it is the square of $\kappa(\lambda_i; A)$.

In addition, if $A$ commutes with a Hermitian symmetry operator $G$, the
orthogonal eigenspaces of $G$ reduce $A$, giving
$A=\bigoplus_\alpha A_\alpha$. If
$A_\alpha=V_\alpha\Lambda_\alpha V_\alpha^{-1}$, then
$V=\bigoplus_\alpha V_\alpha$ is an associated full-space eigenvector matrix and
\begin{equation}
    \kappa(V)
    =
    \frac{\max_\alpha\sigma_{\max}(V_\alpha)}
         {\min_\alpha\sigma_{\min}(V_\alpha)}
    \;\geq\;
    \max_\alpha\kappa(V_\alpha) .
    \label{eq:direct-sum-cond}
\end{equation}

Finally, any computed $\kappa(V)$ is an upper bound on $\kappa^{*}$ and may
be used in its place in the complexity bounds cited
above~\cite{Low_2026}; the resulting resource estimate then depends on the value
used.

\section{Exactly Solvable Model}
\label{sec:SHO-delta}

Before turning to electronic Hamiltonians, we analyze the
one-dimensional harmonic oscillator with a contact
interaction~\cite{avakian_spectroscopy_1987, patil_harmonic_2006,viana-gomes_solution_2011}.
Three properties make this model a controlled setting for the TC analysis:
(i) it is exactly solvable and, for any finite contact strength, has a simple
spectrum, so the target problem has no degenerate-eigenspace ambiguity; (ii) the
Hamiltonian preserves parity, so the even- and odd-parity sectors are decoupled;
and (iii) the contact interaction leaves the odd sector unchanged but generates
a cusp in the even sector. Since the parity sectors are decoupled and the cusp
is confined to the even sector, the exactly known odd harmonic-oscillator sector
can be set aside before constructing the TC representation. The analysis can
then focus on the interacting even sector, where the simple spectrum permits
direct examination of how cusp removal affects finite-basis convergence and
eigenvector conditioning.

In dimensionless harmonic-oscillator units, the model is described by the
Hamiltonian
\begin{equation}
\label{eq:sho-hamiltonian}
\hat H = -\frac{1}{2}\frac{d^2}{dx^2} + \frac{1}{2}x^2 + g\,\delta(x),
\end{equation}
where $g$ denotes the strength of the delta potential.  The exact wavefunction
$\Psi(x)$ satisfies the jump condition at the origin: 
\begin{equation}
    \left.\frac{d\Psi}{dx}\right|_{0^+}-\left.\frac{d\Psi}{dx}\right|_{0^-}
=2g\,\Psi(0).
\end{equation}
Since the potential is even in $x$, eigenfunctions have definite parity. In the
even-parity sector, this reduces to
\begin{equation}
\left.\frac{d\Psi}{d|x|}\right|_{|x|=0^+}=g\,\Psi(0).
\end{equation}

Because the odd-parity eigenfunctions vanish at the origin, they are not
affected by the delta potential and retain the simple-harmonic-oscillator
energies
\begin{equation}
E_n^{\mathrm{odd}}=2n+\frac{3}{2}, \qquad n=0,1,2,\dots .
\end{equation}
The physical Hamiltonian therefore decomposes as $\hat H=\hat
H_{\mathrm{even}}\oplus\hat H_{\mathrm{SHO}}|_{\mathcal H_{\mathrm{odd}}}$,
where $\hat H_{\mathrm{SHO}}=-\tfrac12 d^2/dx^2+\tfrac12x^2$. The odd block is
exactly known and decoupled from the even sector: it is Hermitian and diagonal
in the odd SHO basis. We remove this block before introducing the TC transformation and
restrict the subsequent finite-basis analysis to the even sector.

The even-parity states are shifted by the delta potential, and their energies
$E_n^{\mathrm{even}}$ are determined by the transcendental equation
\begin{equation}
\label{eq:sho-even-spectrum}
\frac{\Gamma\!\left(\frac{3}{4}-\frac{E_n^{\mathrm{even}}}{2}\right)}
     {\Gamma\!\left(\frac{1}{4}-\frac{E_n^{\mathrm{even}}}{2}\right)}
= -\frac{g}{2},
\qquad n=0,1,2,\dots ,
\end{equation}
where $\Gamma(\cdot)$ denotes the Gamma function.

Variants of this model have been studied both without and with the TC
transformation. Grining \emph{et al.}~\cite{grining_many_2015} studied the
corresponding many-fermion Hamiltonian with the same harmonic confinement and
contact interaction, without TC, using a one-particle SHO basis. Jeszenszki
\emph{et al.}~\cite{jeszenszki_accelerating_2018} instead applied TC to a
homogeneous gas with no external potential ($V=0$), represented in a plane-wave
basis. Both many-particle settings can contain degenerate eigenspaces, adding
basis freedom within each degenerate eigenspace. By contrast, the simple
spectrum of our model at finite $g$ provides a controlled setting for analyzing
both basis-set convergence and eigenvector conditioning.

\subsection{Projected Transcorrelated Hamiltonian}
\label{sec:projected-sho-tc}
\subsubsection{Standard method}
To analyze basis truncation in a controlled way, we project the Hamiltonian in
Eq.~\eqref{eq:sho-hamiltonian} onto a finite-dimensional subspace. For this
model, the natural choice of basis is the eigenbasis of the one-dimensional
simple harmonic oscillator (SHO):
\begin{equation}
    \phi_p(x)=\frac{1}{\sqrt{2^p p!\sqrt{\pi}}}H_p(x)e^{-x^2/2},
    \qquad p=0,1,2,\ldots,
\end{equation}
where $H_p(x)$ is the $p$th Hermite polynomial. Since $\phi_p(0)=0$ for odd
$p$, the contact interaction acts only in the even-parity block. We therefore
retain the $N_{\mathrm{even}}$-dimensional space
$\mathcal B_{N_{\mathrm{even}}}^{(+)}
=\operatorname{span}\{\phi_{2n}:n=0,\ldots,N_{\mathrm{even}}-1\}$ and, throughout
this subsection, use $m,n=0,\ldots,N_{\mathrm{even}}-1$ to index this retained
basis.

The bare Hamiltonian then has matrix elements
\begin{equation}
    H_{mn}^{(\mathrm{even})}
    =\matrixel{\phi_{2m}}{\hat H}{\phi_{2n}}
    =\left(2n+\frac12\right)\delta_{mn}
    +g\,\phi_{2m}(0)\phi_{2n}(0).
\end{equation}

\subsubsection{Transcorrelated method}
On this even-parity space, the correlator
$\hat\tau_{\boldsymbol\theta}$ for the 1D harmonic oscillator with a contact
interaction is the multiplication operator generated by a real-valued
even scalar function $u(x)$, so the transformation preserves parity. The
correlation function $u(x)$ is chosen so that
the delta-function term in Eq.~\eqref{eq:sho-hamiltonian} is
canceled in the transformed Hamiltonian. 
This requirement does not specify $u(x)$ uniquely. Different admissible choices
can produce different basis-set convergence behavior
(Sec.~\ref{sec:finite-dimension-tc}) and eigenvector conditioning
(Sec.~\ref{sec:cond}) in the projected matrices
$H_{\mathrm{TC}}^{\mathcal B}(\boldsymbol\theta)$.
To assess these dependencies, we compare the exponential form of
Ref.~\cite{baiardi_explicitly_2022} and the $\operatorname{erfc}$-based form of
Ref.~\cite{giner2021new}:
\begin{align}
    u_1(x) &= g |x| e^{-\mu |x|}, \label{eq:jastrow_exp-1} \\
    u_2(x) &= g \left[ |x| \operatorname{erfc}(\mu |x|)
    - \frac{1}{\sqrt{\pi}\mu} e^{-(\mu x)^2} \right].
    \label{eq:jastrow_erf-1}
\end{align}
We refer to the TC Hamiltonians generated by $u_1$
and $u_2$ as TC-exp and TC-erfc, respectively. The two correlation functions
satisfy $u_j'(0^+)=g$, which gives the delta-function cancellation used below.
At fixed $g$, each depends only on the positive range parameter $\mu$. Their
explicit first and second derivatives reduce the required SHO matrix elements
to the formulas derived in Appendix~\ref{app:sho-integrals}.

Substitution into Eq.~\eqref{eq:Htc-commutator} together with
Eq.~\eqref{eq:sho-hamiltonian} yields
\begin{equation}
\hat H_{\mathrm{TC}}
=
-\frac12\frac{d^2}{dx^2}
+\frac12x^2
+g\delta(x)
-\frac12 u''(x)
-u'(x)\frac{d}{dx}
-\frac12 \bigl(u'(x)\bigr)^2.
\label{eq:sho_htc_explicit}
\end{equation}

\begin{samepage}
We can further rewrite Eq.~\eqref{eq:sho_htc_explicit} in the same structural
form introduced in Eq.~\eqref{eq:TC-first-order-form}.
We define
\begin{equation}
\begin{aligned}
F(x)&\equiv u'(x),\\
V_{\mathrm{TC,eff}}(x)
&\equiv g\delta(x)-\frac12u''(x)-\frac12F(x)^2\\
&=-\frac12u''_{\mathrm{reg}}(x)-\frac12F(x)^2,\\
V_{\mathrm{eff}}(x)
&\equiv \frac12x^2+V_{\mathrm{TC,eff}}(x).
\end{aligned}
\label{eq:sho_Veff_total_def}
\end{equation}

Here $u''=u''_{\mathrm{reg}}+2g\delta$ in the distributional sense, so
the contact term cancels the distributional part of $-\tfrac12u''$. Because
we use $\hat H_{\mathrm{SHO}}$ as the reference operator in this section,
$V_{\mathrm{TC,eff}}$ denotes the regular scalar correction that remains after
this exact cancellation, whereas $V_{\mathrm{eff}}$ is the complete
transformed scalar potential.
Then Eq.~\eqref{eq:sho_htc_explicit} becomes
\begin{equation}
    \label{eq:sho_htc_compact}
    \hat H_{\mathrm{TC}}
    =\hat H_{\mathrm{SHO}}
      -F(x)\frac{d}{dx}
      +V_{\mathrm{TC,eff}}(x).
\end{equation}

Here $\hat H_{\mathrm{SHO}}=\hat T+\tfrac12x^2$.
\end{samepage}

Projecting Eq.~\eqref{eq:sho_htc_compact} onto the retained even SHO basis gives
\begin{equation}
\begin{aligned}
    (H_{\mathrm{TC}})_{mn}
    &=
    \left(2n+\frac12\right)\delta_{mn}
    - \matrixel{\phi_{2m}}{F(x)\frac{d}{dx}}{\phi_{2n}}
    + \matrixel{\phi_{2m}}{V_{\mathrm{TC,eff}}(x)}{\phi_{2n}}.
\end{aligned}
    \label{eq:sho_tc_matrix}
\end{equation}
The matrix elements of the regular scalar correction are therefore
\begin{equation}
\begin{aligned}
    V_{\mathrm{TC,eff},mn}
    &\equiv
    \matrixel{\phi_{2m}}{V_{\mathrm{TC,eff}}(x)}{\phi_{2n}}
    =\int_{-\infty}^{\infty}
    \phi_{2m}(x)\phi_{2n}(x)\,V_{\mathrm{TC,eff}}(x)\,dx.
\end{aligned}
\end{equation}
For the first-order differential operator, it is convenient to use the standard
derivative relation for harmonic-oscillator basis functions,
\begin{equation}
    \frac{d}{dx}\phi_p(x)
    =
    \sqrt{\frac{p}{2}}\,\phi_{p-1}(x)
    -
    \sqrt{\frac{p+1}{2}}\,\phi_{p+1}(x).
\end{equation}
This reduces the corresponding matrix elements to linear combinations of
multiplicative matrix elements of $F(x)$:
\begin{equation}
\begin{split}
D_{mn} \equiv & -\matrixel{\phi_{2m}}{F(x)\frac{d}{dx}}{\phi_{2n}} \\
= & -\sqrt{n}\,\matrixel{\phi_{2m}}{F(x)}{\phi_{2n-1}} \\
& + \sqrt{\frac{2n+1}{2}}\,\matrixel{\phi_{2m}}{F(x)}{\phi_{2n+1}}.
\label{eq:sho_Dmn}
\end{split}
\end{equation}
Hence, the projected transcorrelated Hamiltonian takes the form
\begin{equation}
\begin{aligned}
    (H_{\mathrm{TC}})_{mn}
    &=
    \left(2n+\frac12\right)\delta_{mn} + D_{mn}
    +V_{\mathrm{TC,eff},mn}.
\end{aligned}
\end{equation}

\subsubsection{Structure of the Correlation Functions}
The two correlation functions $u_1$ and $u_2$ defined in
Eqs.~\eqref{eq:jastrow_exp-1} and~\eqref{eq:jastrow_erf-1} both satisfy the
delta-function cancellation condition. However, this does not determine the full structure of the
transcorrelated Hamiltonian.  To compare the two choices at the operator level,
we focus on the position-dependent coefficient $F(x) = u'(x)$ in
Eq.~\eqref{eq:sho_htc_compact}. The first-order  differential operator $-F(x)\,d/dx$ is the source of
non-Hermiticity in the transformed Hamiltonian, so $F(x)$
captures the spatial range over which non-Hermiticity is
introduced.

\begin{figure}[ht!]
    \centering
    \includegraphics[width=\linewidth]{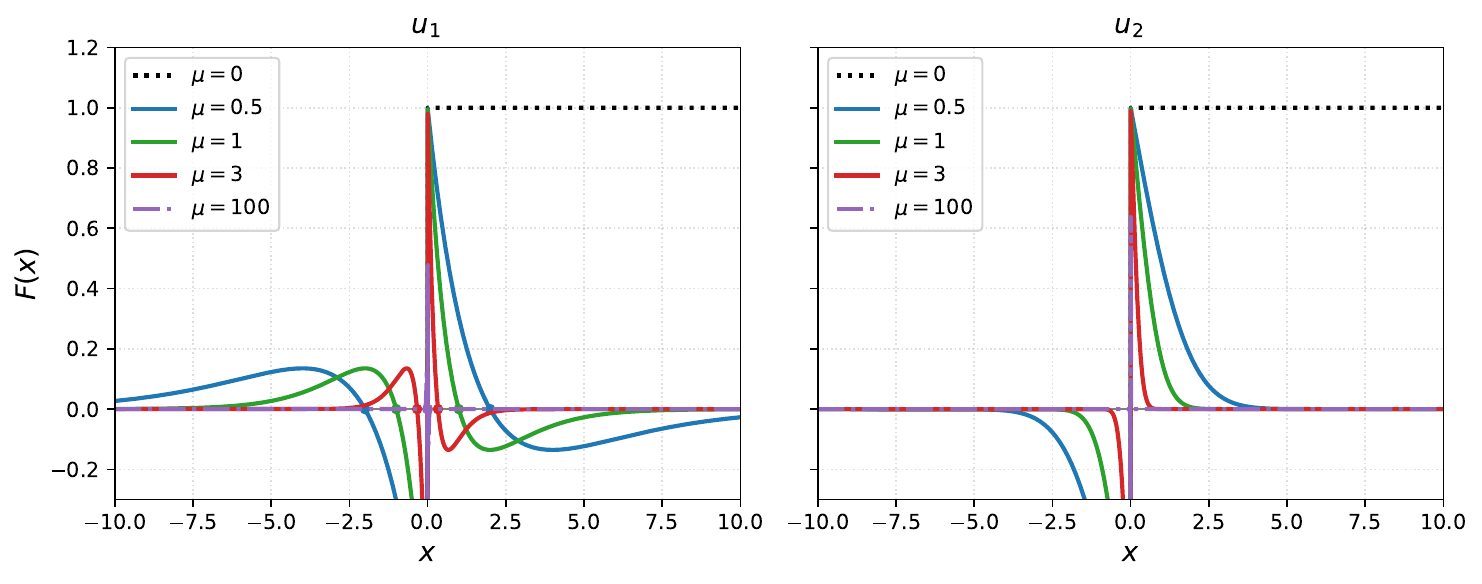}
    \caption{Comparison of the position-dependent coefficient $F(x)=u'(x)$ for the
    two correlation functions (left: $u_1$, right: $u_2$) at $g=1$. The
    parameter $\mu$ controls the effective real-space range over which $F(x)$
    has non-negligible magnitude: smaller $\mu$ produces a broader
    non-Hermitian contribution, whereas larger $\mu$ localizes the contribution
    near the origin. The plotted $\mu=0$ curve denotes the $\mu\to0^+$ limit of
    $F(x)$.}

    \label{fig:F_comparison}
\end{figure}

Figure~\ref{fig:F_comparison} compares these ranges directly. The following
results examine how the resulting correlator-dependent operator structures are
reflected in the finite-dimensional Hamiltonians.

\subsection{Results for the 1D harmonic oscillator with a contact interaction}
The real-space comparison above isolates the operator-level distinction between
the two correlators. In a finite calculation, however, this distinction is
not accessed directly from $F(x)$ itself, but through the projected matrix
elements of the full transcorrelated Hamiltonian. Therefore, we now turn to the
finite even-parity SHO representation and examine how the two correlator choices
affect the resulting projected spectra and eigenvector conditioning.

For the numerical studies in this section, we fix $g=1$, corresponding to
a repulsive contact interaction that produces a one-dimensional analogue of the
electron--electron cusp generated by Coulomb repulsion. For every reported
combination of basis size and Jastrow parameter $\mu$, the projected Hamiltonian
matrix has real entries, and all computed eigenvalues are real and distinct. In this setting, we
define the ground-state energy as the lowest eigenvalue and the ground state as
its associated right eigenvector. The same model can also be used to study
excited states; the present analysis focuses on the ground state.

Figure~\ref{fig:convergence_tc} analyzes the convergence behavior of the
absolute energy error, $\Delta E = |E_0^{(N)}-E_0|$, for the lowest even-parity
state at $g=1$. Here, $N\equiv N_{\mathrm{even}}$, and $E_0^{(N)}$ denotes the lowest eigenvalue obtained by
projecting the Hamiltonian onto a truncated basis of $N_{\text{even}}$
even-parity harmonic-oscillator functions, and $E_0$ represents the exact
ground-state energy given by the $n=0$ solution of
Eq.~\eqref{eq:sho-even-spectrum}. The approximately linear
large-$N_{\text{even}}$ behavior on the log--log scale is consistent with a
power-law decay of the error $\Delta E \sim N_{\text{even}}^s$,
i.e., algebraic convergence with exponent
$s$, which is negative for a decaying error (the insets report, e.g.,
$s\approx-0.5$ bare and $s\approx-1.5$ for TC). The fitted slopes $s$,
extracted from the large-$N_{\text{even}}$ regime ($N_{\text{even}} > 450$),
therefore provide numerical estimates of the convergence rates for the bare
Hamiltonian and both projected TC Hamiltonians. The noticeable bend in the
TC-erfc curve is consistent with the non-Hermitian character of the projected
TC Hamiltonian, for which the variational monotonicity of Hermitian Ritz values
is not guaranteed, so $E_0^{(N)}$ may cross
$E_0$ and approach it from below, and $\Delta E$ need not decrease
monotonically.

\begin{figure}[ht!]
    \centering
    \includegraphics[width=\linewidth]{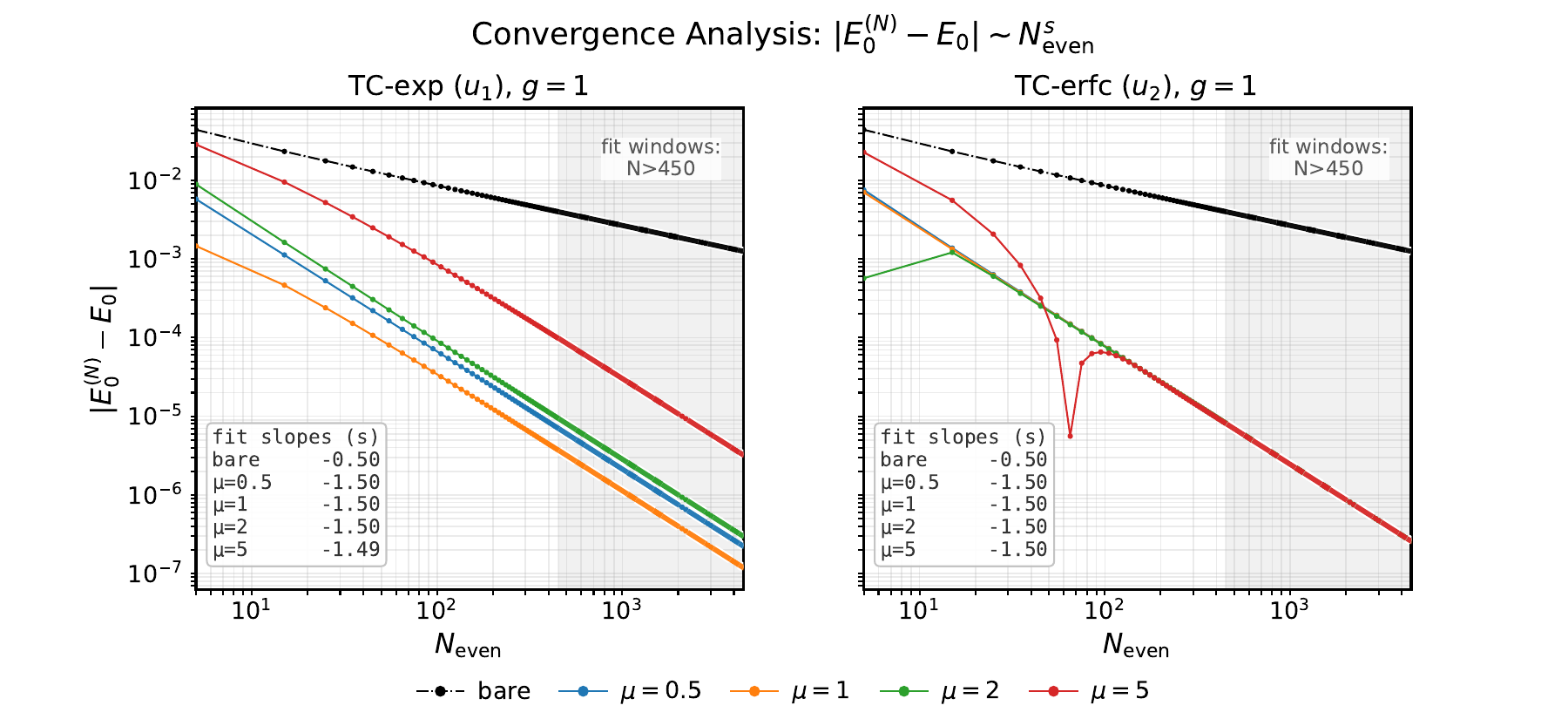}
    \caption{Convergence analysis of the absolute error in the lowest
    even-parity energy, $|E_0^{(N)} - E_0|$, for the one-dimensional harmonic
    oscillator with a contact interaction at $g=1$. The error is plotted as a
    function of the number of even-parity basis functions $N_{\text{even}}$ on a
    log-log scale. The left and right panels show the TC-exp and TC-erfc results generated by the
    correlation functions $u_1$ and $u_2$, respectively. The bare projected Hamiltonian is
    shown as a black dash-dotted line, while the transcorrelated results for
    various $\mu$ values are plotted as solid colored lines. The gray shaded
    regions denote the fit window ($N_{\text{even}} > 450$) used to extract the
    power-law exponent $s$ reported in the insets.}
    \label{fig:convergence_tc}
\end{figure}

The log-log fits in Fig.~\ref{fig:convergence_tc} estimate the convergence
rate, but they do not by themselves separate the rate from the amplitude (prefactor) of the asymptotic tail. For example, the large-$N_{\mathrm{even}}$ TC-erfc curves
nearly collapse as $\mu$ is varied, while the TC-exp curves retain a
 vertical separation dependent on $\mu$. This indicates that the two correlator
forms differ in the amplitude of the asymptotic error rather than in the
leading power itself. To further expose this second piece of information, we plot the
rescaled error 
\[
    A_j(\mu;N):=N_{\mathrm{even}}^{3/2}|E_0^{(N)}-E_0|.
\]
When the asymptotic $N_{\mathrm{even}}^{-3/2}$ regime is reached, this quantity
approaches the corresponding leading prefactor. The analysis in Fig.~\ref{fig:prefactor_tc} connects the common $N_{\mathrm{even}}^{-3/2}$ behavior to cusp cancellation and relates the correlator dependence of the leading explicit-tail
amplitude to the coefficient in the local expansion of the exact
right transcorrelated eigenfunction.

\begin{figure}[ht!]
    \centering
    \includegraphics[width=\linewidth]{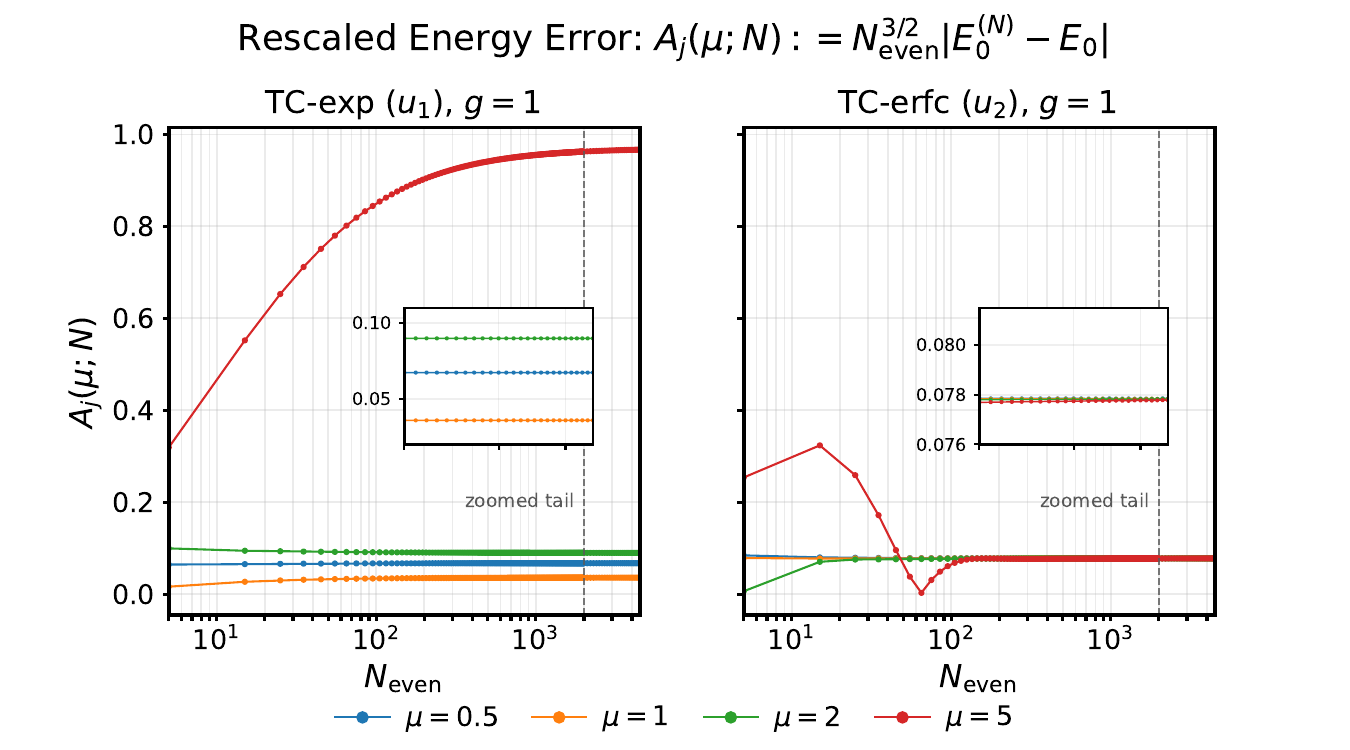}
    \caption{Rescaled energy error
    $A_j(\mu;N)=N_{\mathrm{even}}^{3/2}|E_0^{(N)}-E_0|$ for TC-exp and
    TC-erfc at $g=1$. The TC-erfc curves tend to a common asymptotic plateau,
    consistent with a $\mu$-independent leading error amplitude. The TC-exp
    plateaus depend strongly on $\mu$.}
    \label{fig:prefactor_tc}
\end{figure}

Indeed, both the generic convergence rate and the 
leading asymptotic prefactor can be derived analytically as follow:
\begin{mainresult}
Under the reduced-inverse stability assumption stated in
Appendix~\ref{app:sho-delta-details}, for fixed $\mu>0$ the projected TC
ground-state energy satisfies
\[
 E_0^{(N)}-E_0=-\frac{g\Psi_0(0)^2\beta_j(\mu)}{\pi}N_{\mathrm{even}}^{-3/2}
 +O(N_{\mathrm{even}}^{-2}),
\]
where $\Psi_0$ is the exact ground state and
\[
 \beta_{\exp}(\mu)=\frac{g(2E_0+g^2)}{3}-\frac{g\mu^2}{2},
 \qquad
 \beta_{\mathrm{erfc}}=\frac{g(2E_0+g^2)}{3}.
\]
Hence the asymptotic absolute-error prefactor is
$A_j(\mu)=|g|\Psi_0(0)^2|\beta_j(\mu)|/\pi$.
The full derivation is given in Appendix~\ref{app:sho-delta-details}.
\end{mainresult}

The results above and the conditional asymptotic analysis characterize the favorable side of the transcorrelated
projection: once the cusp has been incorporated into the operator, the
energy-error convergence improves from $N_{\mathrm{even}}^{-1/2}$ to $N_{\mathrm{even}}^{-3/2}$, and the
remaining correlator dependence is pushed into the asymptotic prefactor. This
improvement, however, is obtained through a non-unitary similarity
transformation. The resulting Hamiltonian is therefore 
non-Hermitian, and energy convergence alone does not determine whether the
projected problem is numerically tractable. The corresponding cost is measured
by the non-orthogonality of the eigenvectors. We therefore turn to the global
condition number $\kappa_2(V)$ of the diagonalizing matrix.

\begin{figure}[ht!]
    \centering
    \includegraphics[width=\linewidth]{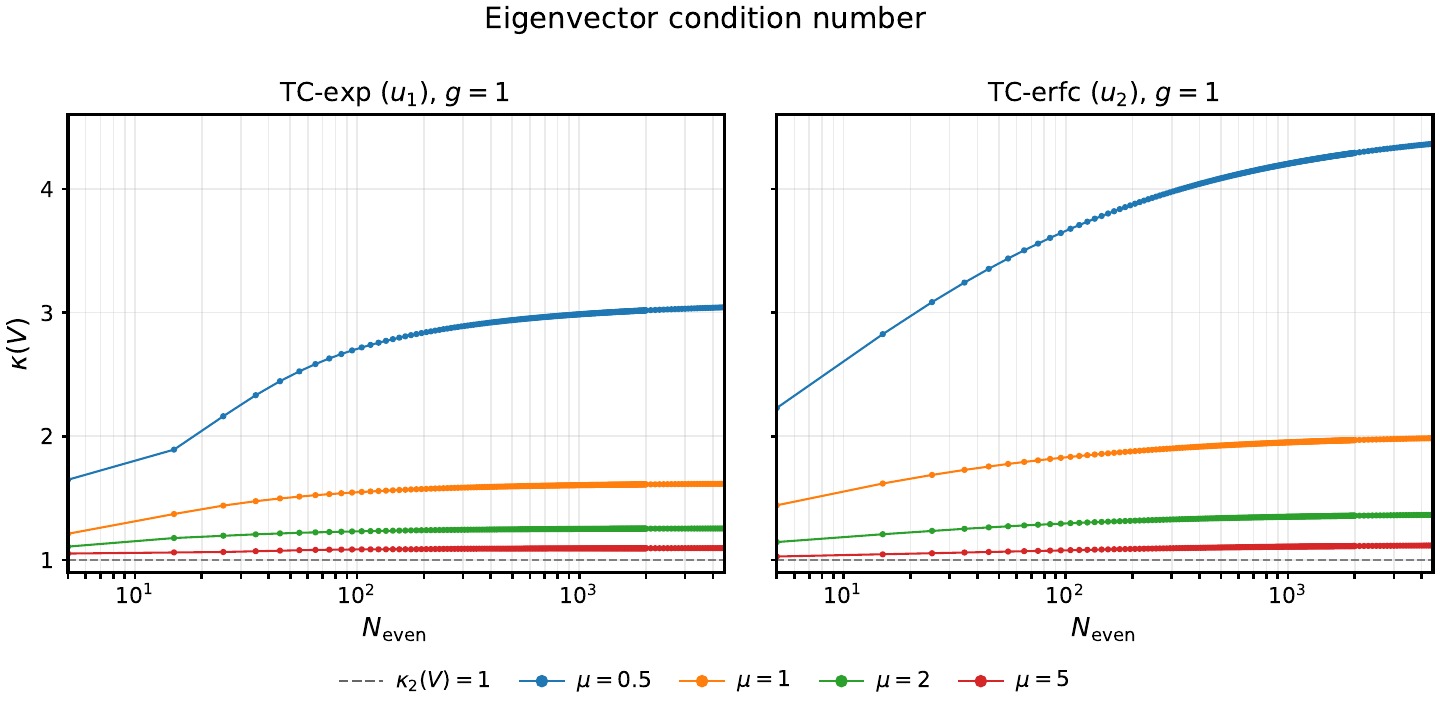}
    \caption{Global eigenvector condition number $\kappa_2(V)$ for the projected
    transcorrelated Hamiltonian at $g=1$.}
    \label{fig:condition_tc}
\end{figure}

For the nondegenerate projected spectrum considered here, unit-$2$-norm
normalization removes the eigenvector-scaling ambiguity, making $\kappa_2(V)$
unique. Under this convention, Figure~\ref{fig:condition_tc} shows that the
projected transcorrelated matrices remain moderately conditioned over the full
basis range studied here. The data suggest slower-than-linear growth 
of $\kappa_2(V)$ with $N_{\mathrm{even}}$ over this range, although they do not
establish its asymptotic scaling. Furthermore, the conditioning differs between the two correlator forms. For each correlator form, the decrease in $\kappa_2(V)$ with increasing $\mu$ mirrors the real-space localization of $F(x)$ shown in
Fig.~\ref{fig:F_comparison}: as $F(x)$ becomes more localized, the projected
matrices become better conditioned. At fixed $\mu$, the TC-erfc matrices
generally have larger condition numbers than the TC-exp matrices.

Taken together, these results show that the most favorable convergence prefactor and the smallest eigenvector condition number do not necessarily occur for the same correlator form at the same value of $\mu$. The solvable model already exhibits the central trade-off that will reappear in the molecular setting: faster basis convergence can come with a conditioning cost, and that cost depends sensitively on the correlator form and its range parameter.

\subsection{Query Complexity for Solvable Model}
Having identified this trade-off, we now ask when the faster basis convergence
translates into an improved asymptotic query bound for estimating the exact
ground-state energy $E_0$ to accuracy $\varepsilon$ using QEVE.  Let $E_{0,\mathcal R}^{(N)}$ denote the eigenvalue of the projected
Hamiltonian $H_{\mathcal R}^{(N)}$, where
$\mathcal R\in\{\mathrm{bare},\mathrm{TC}\}$, and let $\widehat E_0$
denote the value estimated by the quantum algorithm.
The total error satisfies
\begin{equation}
    |\widehat E_0-E_0|
    \leq
    \underbrace{|E_{0,\mathcal R}^{(N)}-E_0|}_{\delta_{\mathcal R}(N)}
    +
    \underbrace{|\widehat E_0-E_{0,\mathcal R}^{(N)}|}_{\leq\varepsilon_{\mathrm{alg}}}.
    \label{eq:sho-total-error}
\end{equation}

For a fixed $\theta\in(0,1)$, we choose $N$ such that
$\delta_{\mathcal R}(N)\leq\theta\varepsilon$ and set
$\varepsilon_{\mathrm{alg}}=(1-\theta)\varepsilon$,
ensuring \mbox{$|\widehat E_0-E_0|\leq\varepsilon$}.
Assume that the projected Hamiltonian matrices used in the estimation
are diagonalizable with real spectra.  Under the same assumptions and the oracle-query model of Theorem~3 of Ref.~\cite{Low_2026},   
QEVE requires
$\widetilde{\mathcal O}
(\alpha_{\mathcal R,N}\kappa_{\mathcal R,N}/
\varepsilon_{\mathrm{alg}})$ oracle queries at fixed success probability.
Here $\kappa_{\mathcal R,N}$ is an available upper bound on the
optimal eigenvector condition number.

For the correlators considered here, at fixed finite $g$ and
$0<\mu<\infty$, the SHO contribution has norm $\mathcal O(N)$, while the projected bare or TC correction has norm
$\mathcal O(\sqrt N)$. Hence,
$\|H_{\mathcal R,N}\|=\mathcal O(N)$ for both representations. Within the above oracle query model, we can take $\alpha_{\mathcal R,N}=\mathcal O(N)$.
For either representation, suppose that
$\delta_{\mathcal R}(N)=\mathcal O(N^{-p})$ and
$\kappa_{\mathcal R,N}=\mathcal O(N^a)$, with $p>0$ and $a\geq0$.
A sufficient basis size is then $N=\mathcal O(\varepsilon^{-1/p})$,
which gives
\begin{equation}
    Q_{\mathcal R}(\varepsilon)
    =\widetilde{\mathcal O}
    \left(C(\theta) \varepsilon^{-\left(1+\frac{1+a}{p}\right)}\right).
    \label{eq:sho-general-query-scaling}
\end{equation}
Under the basis-convergence assumptions above, the bare and TC
representations have $p=1/2$ and $p=3/2$, respectively.
With $\kappa_{\mathrm{bare},N}=1$ and fixed $\theta$, we absorb $C(\theta)$ into the implicit
constants and obtain
\begin{equation}
    Q_{\mathrm{bare}}(\varepsilon)
    =\widetilde{\mathcal O}(\varepsilon^{-3}),
    \qquad
    Q_{\mathrm{TC}}(\varepsilon)
    =\widetilde{\mathcal O}(\varepsilon^{-(5+2a)/3}).
    \label{eq:sho-conditional-query-scaling}
\end{equation}
For TC, bounded conditioning ($a=0$) and linear growth in $N$ ($a=1$)
give $\widetilde{\mathcal O}(\varepsilon^{-5/3})$ and
$\widetilde{\mathcal O}(\varepsilon^{-7/3})$, respectively.
When $a<2$, TC with QEVE yields a polynomial improvement in the query-complexity upper bound over QPE applied to the bare Hamiltonian.

Overall, the solvable model provides a clean and controlled setting for
understanding how the accuracy--cost trade-off of the transcorrelated method
arises, allowing us to address the three questions posed in
Sec.~\ref{sec:intro}.

\section{Application to Electronic Hamiltonians}
\label{sec:molecular}
We now turn to second-quantized electronic Hamiltonians to examine how the
mechanisms identified in the solvable model appear in electronic structure
calculations and what implications they have for quantum algorithms.
We begin with the two-electron He atom, for which the
three-body TC term is absent, and first examine basis-set convergence and basis
reduction. We then study the accuracy--cost--conditioning trade-off
with the basis fixed at cc-pVDZ before turning to larger basis sets. Finally, we
move beyond two electrons to Be and LiH, where the full TC
Hamiltonian contains three-body contributions and can be compared with the xTC
approximation.

\subsection{Computational Details}
All molecular calculations were performed with
PySCF~\cite{sun_recent_2020}. The molecular pair correlators were defined by the analytic
forms in Eqs.~\eqref{eq:jastrow_exp-1} and \eqref{eq:jastrow_erf-1}, with $|x|$
replaced by the interelectronic distance $r_{ij}$ and $g=1/2$ to impose the
electron--electron cusp~\cite{baiardi_explicitly_2022,giner2021new}.
The corresponding molecular Hamiltonians are denoted TC-exp and TC-erfc,
respectively. Following
Ref.~\cite{cohen_similarity_2019}, the additional TC contributions were
evaluated directly in the canonical RHF orbital basis by numerical real-space
quadrature on PySCF atom-centered DFT grids at level~5 with default pruning. The
xTC calculations for Be and LiH followed Ref.~\cite{christlmaier_xtc_2023}.
The fermionic Hamiltonians were mapped to qubits using the Jordan--Wigner
transformation. 
For $\kappa(V_{\mathrm{os}})$, 
eigenvalues were initially grouped into
numerical clusters using a relative tolerance of $10^{-8}$, and orthonormal cluster bases were constructed by
singular-value decomposition with a relative rank threshold of $10^{-12}$.
For real-spectrum calculations, the Hamiltonian restricted to each cluster
subspace was rediagonalized. The resulting eigenvectors were orthonormalized
only within subclusters defined by a stricter relative degeneracy tolerance.
The final numerical eigenvector matrix was checked through the residual of the
eigenvalue equation, the off-diagonal part of the transformed Hamiltonian, and
agreement with the original eigenvalues.

\subsection{Basis-set convergence and basis reduction}
\label{sec:basis-conv}
We first isolate the accuracy gain provided by the transcorrelated
transformation. For the helium atom, we compare bare- and TC-FCI ground-state
energies across the cc-pVXZ basis set sequence.

\begin{figure}[!htbp]
    \centering
    \includegraphics[width=\linewidth]{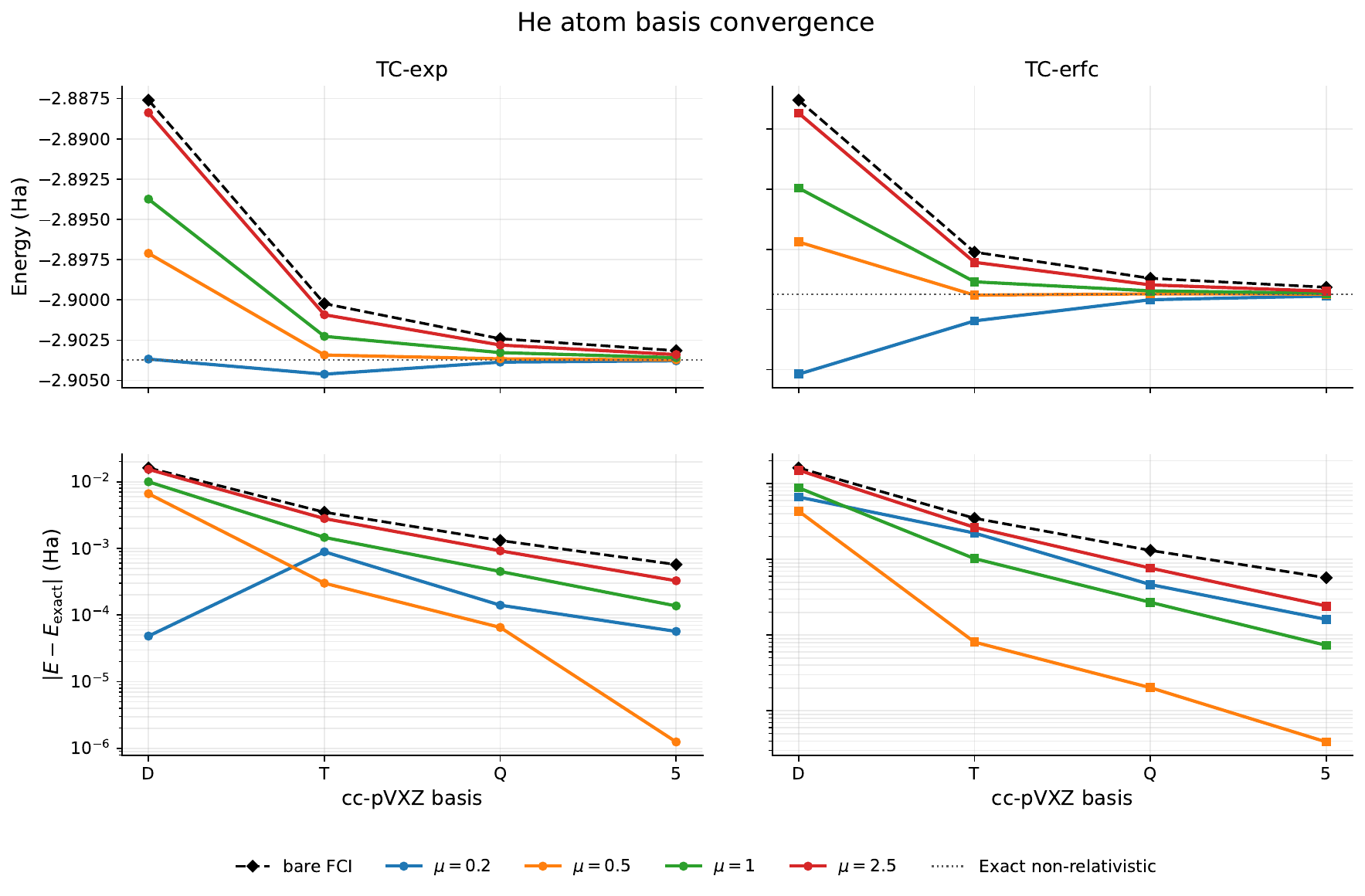}
        \caption{Basis-set convergence of the helium atom ground-state energy using the TC method.  The top panels show the total energies and the bottom panels show the absolute errors relative to the exact non-relativistic CBS energy as a function of the cardinal number $X$ of the cc-pV$X$Z basis sets. Results are shown for TC-exp ($u_1$; left) and TC-erfc ($u_2$; right) across various values of $\mu$. The bare FCI results (dashed black lines with diamonds) and the exact limit (dotted horizontal lines, obtained from \cite{PhysRevA.44.7071}) are included for reference.}
    \label{fig:he-u1u2-convergence-plot}
\end{figure}
Figure~\ref{fig:he-u1u2-convergence-plot} reports the all-electron TC
ground-state energies of helium and their absolute errors relative to the exact
non-relativistic reference value of Ref.~\cite{PhysRevA.44.7071} for the
cc-pV$X$Z ($X\in\{\mathrm{D},\mathrm{T},\mathrm{Q},5\}$) basis sets. The TC Hamiltonian
is non-Hermitian, so its finite-basis eigenvalues need not converge
monotonically. For example, the $\mu=0.2$ curve crosses below the CBS energy
before approaching it from below. We therefore use the absolute error to assess
convergence. At this level, the TC transformation has two complementary
effects: it reduces the absolute error at fixed cardinal number $X$, and that
error decreases more rapidly across the cc-pV$X$Z sequence.
However, because the
underlying one-particle spaces are not nested (that is, the cc-pVXZ basis sets are not constructed by successively adding functions to a fixed smaller basis), this sequence provides an
empirical basis-set comparison rather than the controlled truncation analysis
available for the solvable model. The remaining question is whether these gains
persist when cost-relevant diagnostics of the second-quantized, non-normal TC
Hamiltonian are taken into account.

\FloatBarrier

\subsection{Accuracy--cost diagnostics}
The preceding results quantify the accuracy gain associated with basis
reduction. Such a reduction is attractive for quantum algorithms: in standard second-quantized encodings, the number of spin-orbitals directly determines the number of logical qubits required to encode the quantum state of the system. 
Although this state-register size alone does not determine the total logical-qubit requirement (ancilla registers required for algorithm implementation can contribute substantially \cite{leeEven2021,lowDenser2026}), reducing the orbital space can still yield a relevant resource advantage.
This reduction can also lead to lower Hamiltonian-representation costs, although these remain algorithm- and implementation-dependent.
However, the
finite-basis TC Hamiltonian is non-normal. This non-normality may introduce
compensating costs. 
We therefore assess this gain together with
cost-relevant properties of the transformed Hamiltonian.
\subsubsection{Fixed-basis cost-relevant diagnostics}
To analyze this trade-off, we scan the correlator parameter $\mu$ in the
cc-pVDZ basis and track the absolute energy error relative to the CBS limit
together with three cost-relevant diagnostics. This scan therefore
characterizes the fixed-basis accuracy--cost trade-off relative to bare
cc-pVDZ. The three diagnostics are as follows: (i) For a Pauli decomposition $\hat H=\sum_i c_i\hat
P_i$, the Pauli one-norm of the qubit Hamiltonian is $\lambda(H)=\sum_i |c_i|$, where the sum runs
over the non-identity Pauli terms. 
(The constant identity coefficient $c_0$ is
excluded, so that $\lambda$ measures only the non-constant part of $\hat H$).
The one-norm sets the normalization scale in Pauli-based linear combination of unitaries block encoding of the Hamiltonian. 
For algorithms using such an encoding (for example, Hamiltonian simulation or quantum phase estimation), the leading query cost typically scales linearly with
the normalization factor $\lambda$~\cite{PhysRevResearch.3.033127}. (ii) The overlap between the
Hartree--Fock determinant and the right TC ground state provides an
initial-state diagnostic. Throughout this work, the reference state is the
canonical Hartree--Fock determinant of the bare Hamiltonian as produced by
PySCF~\cite{sun_recent_2020} with default settings; it is not re-optimized for
$\hat H_{\mathrm{TC}}$. This choice provides a uniform baseline across $\mu$.
A systematic study of re-optimized references lies beyond the present scope.
(iii) Eigenvector conditioning quantifies the non-normal amplification relevant
to non-Hermitian eigenvalue algorithms. Figure~\ref{fig:new-tc_summary-1}
reports the energy error, Hartree--Fock overlap, and one-norm diagnostics; the
conditioning diagnostic is analyzed separately below.

\begin{figure}[!htbp]
    \centering
    \includegraphics[width=\linewidth]{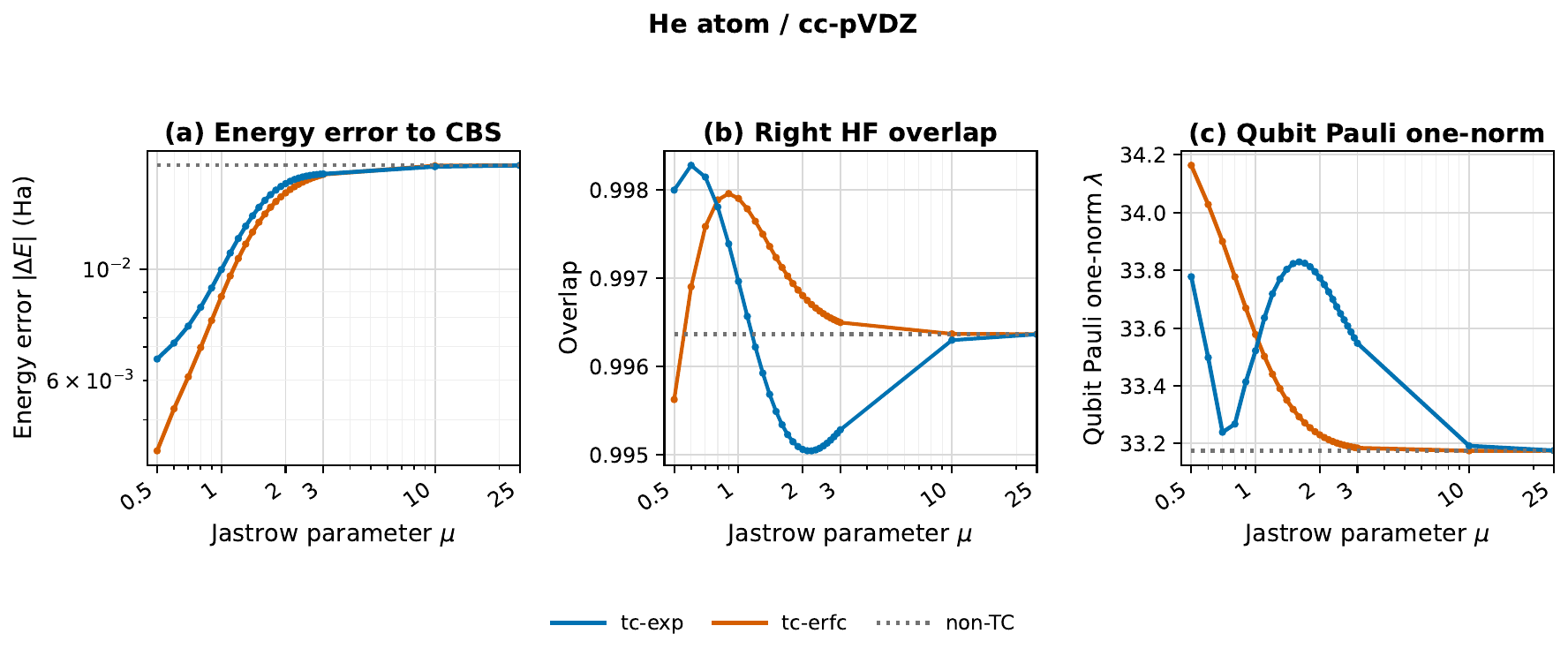}
        \caption{He/cc-pVDZ diagnostics versus the correlator parameter~$\mu$. Blue and orange curves correspond to TC-exp and TC-erfc, respectively. (a) Absolute energy error with respect to the CBS limit. The dotted line indicates the non-TC cc-pVDZ reference. (b) Initial overlap of the right ground state of $\hat H_{\mathrm{TC}}$ with the Hartree--Fock reference state. (c) Qubit Pauli $1$-norm $\lambda$ as a norm-based cost metric.
        The dotted line represents the corresponding non-TC $1$-norm baseline. }
    \label{fig:new-tc_summary-1}
\end{figure}

Figure~\ref{fig:new-tc_summary-1} reports the energy error together with
two of the three cost-relevant diagnostics introduced above. These quantities
display distinct dependences on the correlator parameter $\mu$ and its analytic
form. Panel (a) shows that the absolute energy error decreases as $\mu$ is
reduced for both correlator forms. TC-erfc shows a more nearly monotonic
approach to the CBS limit as $\mu$ decreases. Panel (b) reports the
overlap between the non-TC Hartree--Fock determinant and the right ground state
of $\hat H_{\mathrm{TC}}$.
The overlap remains high across the scan. For TC-erfc, it increases
monotonically as $\mu$ is lowered from $3.0$ to $0.9$, reaches its maximum at
$\mu=0.9$, and then decreases as $\mu$ is lowered further to $0.5$.
Panel (c) shows that the qubit one-norm $\lambda(\hat H)$
matches the non-TC baseline at large $\mu$ for both correlator forms.
Decreasing $\mu$ extends the spatial range of the correlator and generally
increases the qubit one-norm.
The
detailed $\mu$-dependence varies between the two forms: $\lambda(\hat H)$
increases monotonically as $\mu$ decreases for TC-erfc, whereas it is
non-monotonic for TC-exp.
Overall, the energy error and the two cost-relevant
diagnostics respond differently to the correlator form and parameter.
Their combined comparison is given below after the conditioning measure
has been fixed.

We next examine the non-normal conditioning of the TC qubit Hamiltonian. As
discussed in Sec.~\ref{sec:cond}, unlike the nondegenerate solvable model,
unit-$2$-norm normalization alone does not fix the eigenvector condition number
when degenerate eigenspaces are present.
The basis within each
degenerate eigenspace must also be fixed. Here and below, $V_R$ denotes the
right-eigenvector matrix of the TC qubit Hamiltonian
returned by the SciPy non-Hermitian eigensolver, with each column normalized to
unit $2$-norm. We therefore
orthonormalize each degenerate eigenspace, collect the resulting vectors in
$V_{\mathrm{os}}$ as in Eq.~\eqref{eq:Vos}, and compute
$\kappa(V_{\mathrm{os}})$.

\begin{figure}[!htbp]
    \centering
    \includegraphics[width=\linewidth]{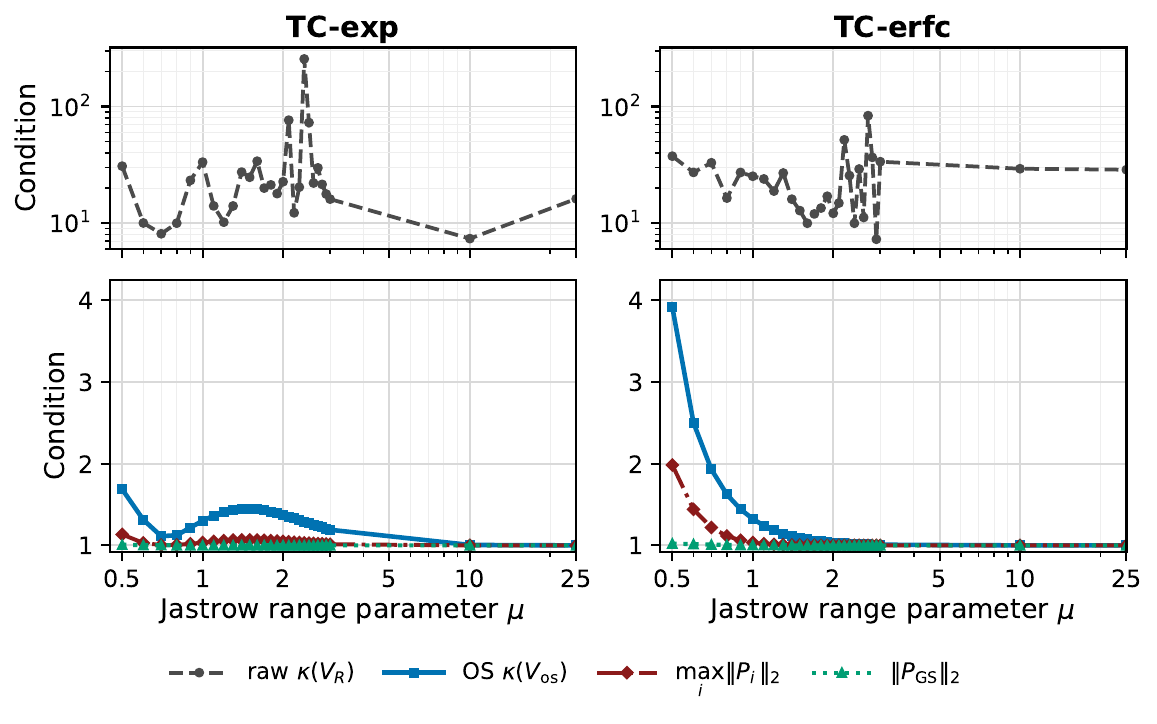}
    \caption{Condition numbers and spectral-projector norms versus the
    correlator parameter $\mu$ for TC-exp (left) and TC-erfc (right).
    Top (log scale): the condition number $\kappa(V_R)$ of the
    right-eigenvector matrix of the TC qubit Hamiltonian returned by the SciPy
    non-Hermitian eigensolver, with each column normalized to unit $2$-norm.
    Bottom (linear scale): the condition number $\kappa(V_{\mathrm{os}})$
    of the right-eigenvector matrix after orthonormalizing each
    degenerate subspace (blue, squares), the largest spectral projector
    norm $\max_i\|P_i\|_2$ (dark red, diamonds), and the ground-state
    projector norm $\|P_{\mathrm{GS}}\|_2$ (green, triangles).}
    \label{fig:new-tc_summary-2}
\end{figure}

Figure~\ref{fig:new-tc_summary-2} compares four conditioning quantities as
functions of $\mu$ for TC-exp and TC-erfc: the raw condition number
$\kappa(V_R)$, the eigenspace-orthonormalized condition number
$\kappa(V_{\mathrm{os}})$, the largest spectral-projector norm
$\max_i\|P_i\|_2$, and the ground-state projector norm
$\|P_{\mathrm{GS}}\|_2$. The upper panels show that $\kappa(V_R)$ has no clear
trend. After the basis within each degenerate eigenspace is orthonormalized,
$\kappa(V_{\mathrm{os}})$ in the lower panels is consistently smaller than
$\kappa(V_R)$ and shows a clearer dependence on $\mu$. By
Eq.~\eqref{eq:conditioning-bounds}, $\max_i\|P_i\|_2$ and
$\kappa(V_{\mathrm{os}})$ provide, respectively, an intrinsic lower bound and a
computable, convention-defined upper bound on the optimal condition number
$\kappa^*$, which generally requires a separate optimization to evaluate. The
two plotted quantities therefore bracket $\kappa^*$. The ground-state projector
norm instead characterizes the target state. Because the He ground-state
eigenvalue is simple in the finite-basis Hamiltonian, $\|P_{\mathrm{GS}}\|_2$
coincides with the simple-eigenvalue condition number
$\kappa(\lambda_0;\hat H_{\mathrm{TC}})$ of Sec.~\ref{sec:cond}. Together, these
quantities distinguish solver-basis sensitivity, intrinsic whole-spectrum
conditioning, and ground-state conditioning as the correlator is varied. For
TC-erfc, decreasing $\mu$ produces a monotonic increase in both
$\kappa(V_{\mathrm{os}})$ and $\max_i\|P_i\|_2$, whereas TC-exp shows no such
monotonic trend. To assess whether this eigenbasis sensitivity is specific to
the He calculations, we also re-analyzed the full-TC and xTC data of
Ref.~\cite{uvarov_accuracy_2025} using the same eigenspace-orthonormalized
convention; details are reported in Appendix~\ref{app:uvarov-reanalysis}.

These observations indicate that raw eigenvector condition numbers could be
inflated by the basis chosen within degenerate eigenspaces. This point is
directly relevant to resource estimates because the
query-complexity analyses of
Refs.~\cite{shao_computing_2022,Low_2026,zhang2025heisenberglimit} use an
\textit{optimal} condition number, or a known upper bound on it, rather than the
condition number of an arbitrary eigenvector basis returned by a numerical
solver. Concretely, Ref.~\cite{Low_2026} defines the Jordan condition number of
the input matrix as $\inf_V\|V\|\,\|V^{-1}\|$ over all similarity
transformations that bring it to Jordan form. The query complexity of its
Chebyshev-based eigenvalue algorithms is expressed through a known upper bound
$\kappa_V$ on this infimum. A more careful, basis-aware analysis is therefore
required before translating the conditioning of the TC Hamiltonian into
resource estimates.

\begin{figure}[!htbp]
    \centering
    \includegraphics[width=\linewidth]{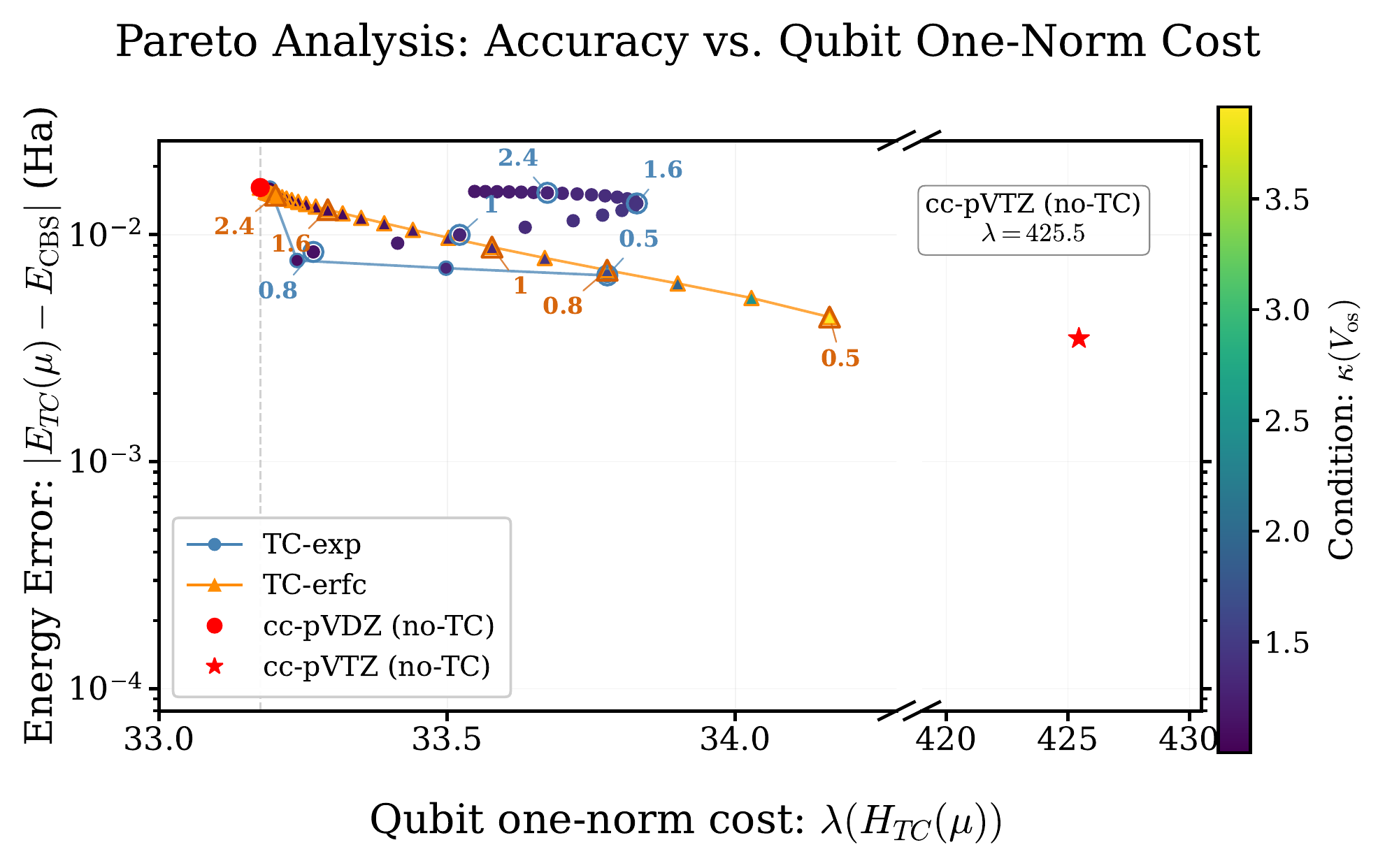}
    \caption{Pareto-front comparison of accuracy and cost-relevant diagnostics
    for He/cc-pVDZ. The vertical axis shows the absolute energy error
    $|E_{\mathrm{TC}}(\mu)-E_{\mathrm{CBS}}|$, and the horizontal axis shows the
    Hamiltonian Pauli $1$-norm $\lambda(\hat H_{\mathrm{TC}}(\mu))$. The color
    scale encodes $\kappa(V_{\mathrm{os}})$, and selected $\mu$ values are
    annotated. Bare cc-pVDZ (red circle) and cc-pVTZ (red star) values are
    included for comparison. The broken horizontal axis accommodates the bare
    cc-pVTZ value, $\lambda\approx425.5$.}
    \label{fig:new-tc_summary-3}
\end{figure}

Figure~\ref{fig:new-tc_summary-3} uses the basis-fixed conditioning measure
to combine the cc-pVDZ energy error with the cost-relevant quantities entering
the complexity bounds in a single accuracy--cost--conditioning comparison.
Within the parameter range included in this cost scan, the TC points improve
upon the bare cc-pVDZ energy error but do not reach the bare cc-pVTZ error.
Across the scan, applying the TC transformation within the same cc-pVDZ
space (5 spatial orbitals) increases the Pauli one-norm by at most $3.0\%$
relative to the bare cc-pVDZ Hamiltonian. By contrast, enlarging the bare
orbital space to cc-pVTZ (14 spatial orbitals) gives a one-norm that is $12.8$
times the bare cc-pVDZ value. The
fixed-basis accuracy gain is therefore accompanied by only a small change in
Hamiltonian one-norm over this parameter range.

\FloatBarrier

\subsubsection{Conditioning across orbital-basis sizes and symmetry sectors}
Having established the fixed-basis accuracy--cost comparison, we next ask
how the conditioning behaves as the one-particle basis is enlarged and what
role the physical particle-number sector plays in the condition number of the
full qubit matrix. After projection onto a finite one-particle space, the
TC-modified one- and two-electron integrals define a
particle-number-conserving qubit Hamiltonian on the full Fock space. Although
the physical He problem is confined to the $N=2$ sector, the full qubit
matrix therefore also contains other particle-number sectors that can affect
its condition number. Recalling Eq.~\eqref{eq:direct-sum-cond}, the condition
number of the assembled eigenvector matrix is bounded below by the largest
sectorwise condition number. An analysis at fixed particle number therefore
provides a proxy for assessing how conditioning changes as the
one-particle basis is enlarged. 

\begin{figure}[!htbp]
    \centering
    \includegraphics[width=\linewidth]{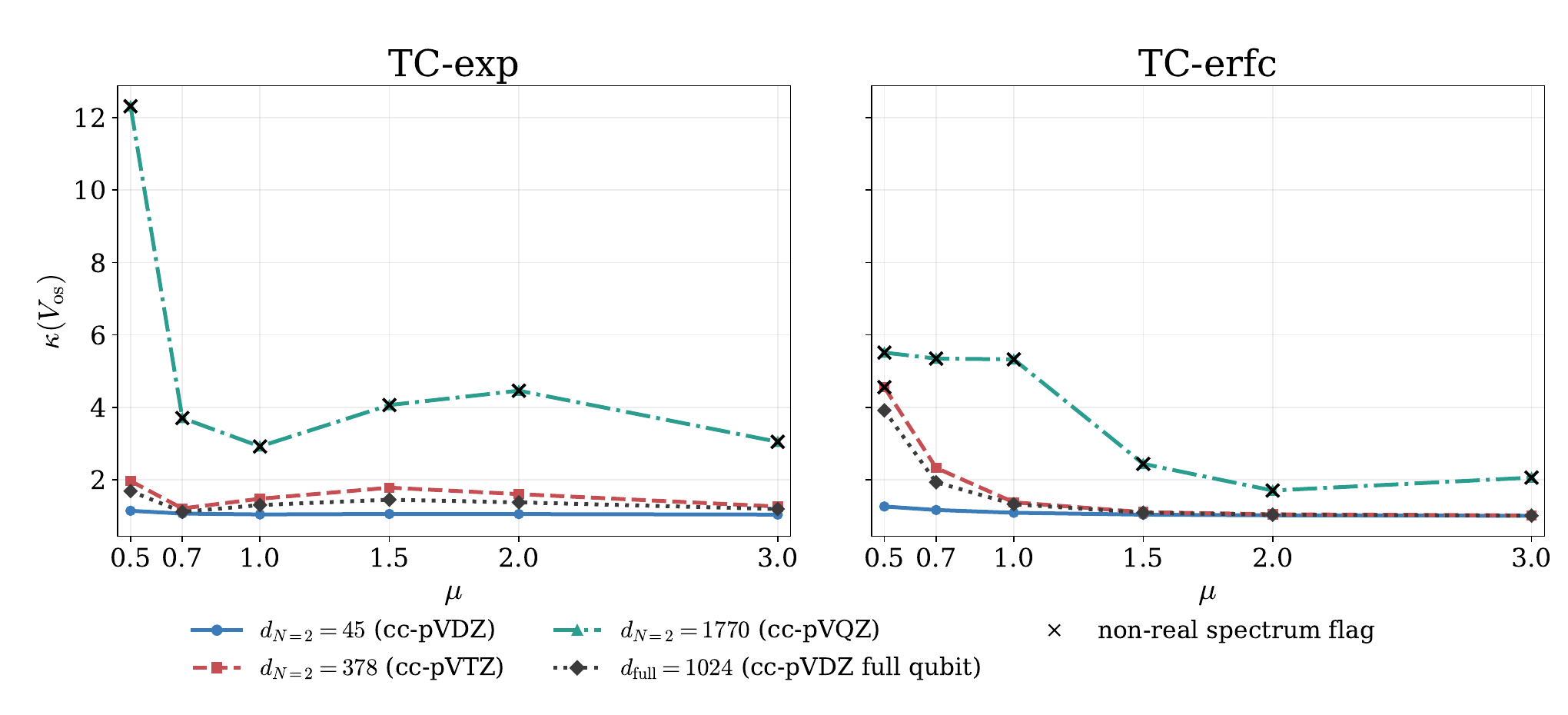}
    \caption{Eigenvector condition number
    $\kappa_{\mathrm{os}}^{(N=2)}$ for the He transcorrelated Hamiltonian as a
    function of the correlator parameter $\mu$, restricted to the physical $N=2$
    particle-number sector at three basis-set sizes: cc-pVDZ (blue circles,
    $d_{N=2}=45$), cc-pVTZ (red squares, $d_{N=2}=378$), and cc-pVQZ (green
    triangles, $d_{N=2}=1770$). The cc-pVDZ
    full-qubit-space value $\kappa_{\mathrm{os}}^{\mathrm{full}}$
    ($d_{\mathrm{full}}=1024$, black diamonds) from
    Fig.~\ref{fig:new-tc_summary-2} is included for comparison. Left: TC-exp;
    right: TC-erfc. The $\times$ marker denotes a non-real spectrum flag,
    discussed in the main text.
}
    \label{fig:cond_sector}
\end{figure}

Figure~\ref{fig:cond_sector} compares the basis-fixed eigenvector
condition number in the physical $N=2$ sector across cc-pVDZ, cc-pVTZ, and
cc-pVQZ, together with the corresponding full-qubit-space value at cc-pVDZ.
The full-qubit-space comparison is limited to cc-pVDZ because exact
diagonalization is not feasible for the larger basis sets. At
cc-pVDZ, $\kappa_{\mathrm{os}}^{\mathrm{full}}$ is consistently larger than
$\kappa_{\mathrm{os}}^{(N=2)}$, showing that the assembled full qubit space is
more poorly conditioned than the physical sector alone. For $\mu>1$, the increase in
$\kappa_{\mathrm{os}}^{(N=2)}$ from cc-pVDZ to cc-pVTZ remains modest: across
the two correlator forms, the cc-pVTZ/cc-pVDZ ratio ranges from $1.01$ to
$1.69$. At $\mu\leq1$, the basis dependence is less uniform.  The increase in
$\kappa_{\mathrm{os}}^{(N=2)}$ becomes more
noticeable from cc-pVTZ to cc-pVQZ, where every plotted $\mu$ value carries the
non-real-spectrum flag. The ground-state eigenvalue remains real in all of
these cases. Because these correlation-consistent orbital spaces are not
nested, unlike the systematically enlarged SHO basis, the three basis sets do
not form a controlled scaling sequence. We therefore regard the observed
increase as a diagnostic trend rather than a scaling law; a systematic study
using nested orbital spaces is left for future work.

To further examine the spectral structure associated with the
non-real-spectrum flags observed above, we focus on the He/TC cc-pVQZ
$N=2$-sector Hamiltonian at $\mu=3.0$.
Panels~(a) and (b) of Fig.~\ref{fig:cond_sector-2} plot its full
eigenvalue spectrum in the complex plane. Within this finite basis, the
bound-state window contains only the real ground state (red star)\footnote{The cc-pVQZ basis used here does
not include diffuse augmentation and cannot represent the higher (Rydberg)
bound states of the He atom; the bound-state window $\mathrm{Re}\,E < -2$~Ha of
the projected $N=2$-sector Hamiltonian therefore contains only the ground
state.}.
The value $-2$~Ha is the first ionization threshold of the underlying
Hermitian He Hamiltonian, corresponding to $\mathrm{He}^{+}+e^{-}$; by the
Hunziker--van Winter--Zhislin (HVZ) theorem, it is the lower edge of the
essential spectrum~\cite{cycon_schrodinger_1987}. Some finite-basis
eigenstates above this threshold may be pseudocontinuum states associated
with the ionization continuum. In the finite-basis TC spectra shown here, all
non-real eigenpairs ($|\mathrm{Im}\,E|>10^{-6}$~Ha, marked by $\times$) lie
above the first ionization threshold, whereas the ground-state eigenvalue
remains real. 
Panels~(c) and (d) track how the conditioning of the retained right-eigenvector matrix and the number 
of non-real eigenpairs change as the energy cutoff is increased.

\begin{figure}[!htbp]
    \centering
    \includegraphics[width=\linewidth]{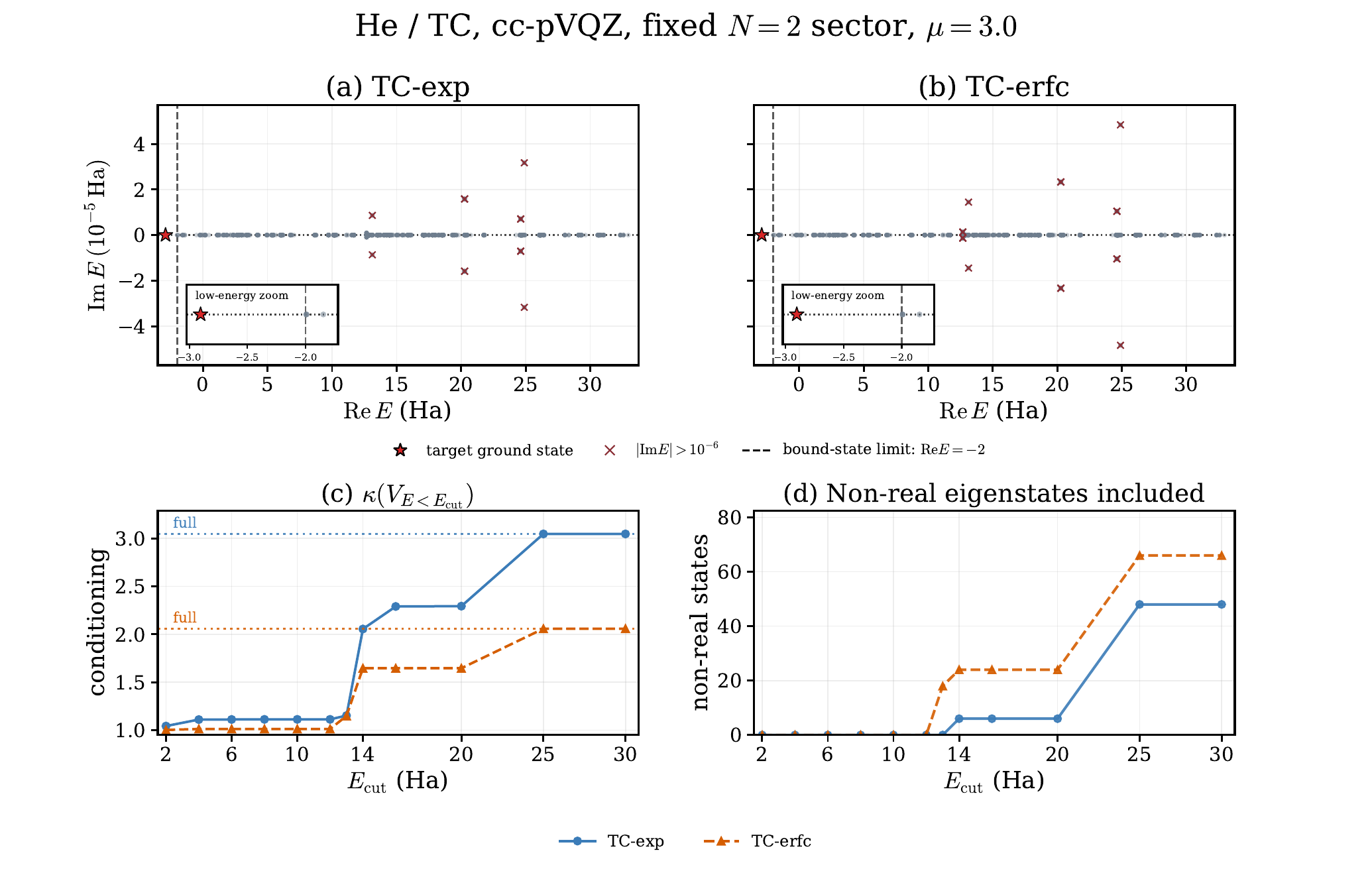}
    \caption{Spectrum and energy-resolved conditioning of the He TC Hamiltonian
in the cc-pVQZ basis set, fixed $N=2$ sector and $\mu = 3.0$. (a)--(b)
Complex-plane spectrum for TC-exp and TC-erfc; the ground state is marked by a
red star, the vertical dashed line indicates the bound-state threshold
$\mathrm{Re}\,E = -2$~Ha, and $\times$ denotes eigenpairs with $|\mathrm{Im}\,E|
> 10^{-6}$~Ha. The insets show a low-energy zoom around the ground state. (c)
Conditioning number $\kappa(V_{E<E_{\mathrm{cut}}})$ of the cumulatively
retained right eigenvector matrix, with $V_{E<E_{\mathrm{cut}}}$ defined via
Eq.~\eqref{eq:V-ecut}, as a function of the energy cutoff $E_{\mathrm{cut}}$;
horizontal dotted lines mark the full-spectrum value. (d) Number of included
eigenpairs with non-real eigenvalues up to $E_{\mathrm{cut}}$. }
    \label{fig:cond_sector-2}
\end{figure}

Panel~(c) reports the condition number
$\kappa(V_{E<E_{\mathrm{cut}}})$ as right invariant subspaces are cumulatively
included in order of increasing energy\footnote{In the numerical
implementation, the cutoff is applied to eigenvalue clusters rather than to
individual eigenvectors. A cluster $C$ is retained when
$\min_{\lambda\in C}\mathrm{Re}\,\lambda \le E_{\mathrm{cut}}$, and the entire
right invariant subspace associated with that cluster is included. Within each
retained cluster, we first construct an SVD-orthonormal basis $Q_C$ for the
right eigenspace and then form $V_{E<E_{\mathrm{cut}}}$ by concatenating these
$Q_C$ blocks. Thus the notation $E<E_{\mathrm{cut}}$ is a shorthand for this
cluster-preserving selection rule.} 
Specifically, we write
\begin{equation}
    V_{E<E_{\mathrm{cut}}}
    =
    \bigl[\, Q_C : \min_{\lambda\in C}\mathrm{Re}\,\lambda \le E_{\mathrm{cut}} \,\bigr]
    \in \mathbb{C}^{D\times m(E_{\mathrm{cut}})} ,
    \label{eq:V-ecut}
\end{equation}
where $D$ is the dimension of the fixed-$N=2$ sector and $m(E_{\mathrm{cut}})\le
D$ is the total dimension of the retained invariant subspaces. The matrix
$V_{E<E_{\mathrm{cut}}}$ is therefore generally rectangular. We define its
conditioning by the singular-value ratio $\kappa(V_{E<E_{\mathrm{cut}}}) =
\sigma_{\max}(V_{E<E_{\mathrm{cut}}})/\sigma_{\min}(V_{E<E_{\mathrm{cut}}})$,
which reduces to the standard $\|V\|_2\,\|V^{-1}\|_2$ when the full square
eigenvector matrix is retained. This ratio measures the linear independence of
the retained right invariant subspaces. For a window containing only the
normalized ground-state vector, it is identical one.

Panel~(d) counts the non-real eigenpairs retained at each $E_{\mathrm{cut}}$ and
therefore identifies when the branches seen in panels~(a) and (b) enter
$V_{E<E_{\mathrm{cut}}}$. Panels~(c) and (d) together show that the main
increases in the cumulative condition number occur over the same cutoff ranges
in which additional non-real eigenpairs are included. For $E_{\mathrm{cut}}\leq
12$~Ha, $\kappa(V_{E<E_{\mathrm{cut}}})$ remains close to unity and no retained
eigenpair exceeds the threshold $|\mathrm{Im}\,E|>10^{-6}$~Ha. The first
substantial increase in conditioning occurs over $E_{\mathrm{cut}}=13$--$14$~Ha,
where the first eigenpairs flagged as non-real also enter the retained space.
Both quantities exhibit a second increase near $E_{\mathrm{cut}}=25$~Ha. Because
varying $E_{\mathrm{cut}}$ changes only the retained set of eigenpairs, rather
than the Hamiltonian itself, this correspondence does not establish a causal
relation between the two phenomena.

\subsection{Beyond two electrons and the xTC approximation}
We now extend the helium analysis in two directions. First, we consider
systems with more than two electrons, for which the full TC Hamiltonian
contains the three-body operator $\hat L_{\boldsymbol\theta}$ of
Sec.~\ref{sec:finite-dimension-tc}. To test whether the same
accuracy--cost--conditioning trade-off persists in this setting, we consider
the beryllium atom, the simplest closed-shell atom with more than two
electrons. We use the minimal STO-6G basis and compare the results with the
bare STO-6G and cc-pVDZ references. Second, we replace the full TC
Hamiltonian with the xTC approximation and test it for both Be/STO-6G and
LiH/STO-6G.

\begin{figure}[!htbp]
    \centering
    \includegraphics[width=\linewidth]{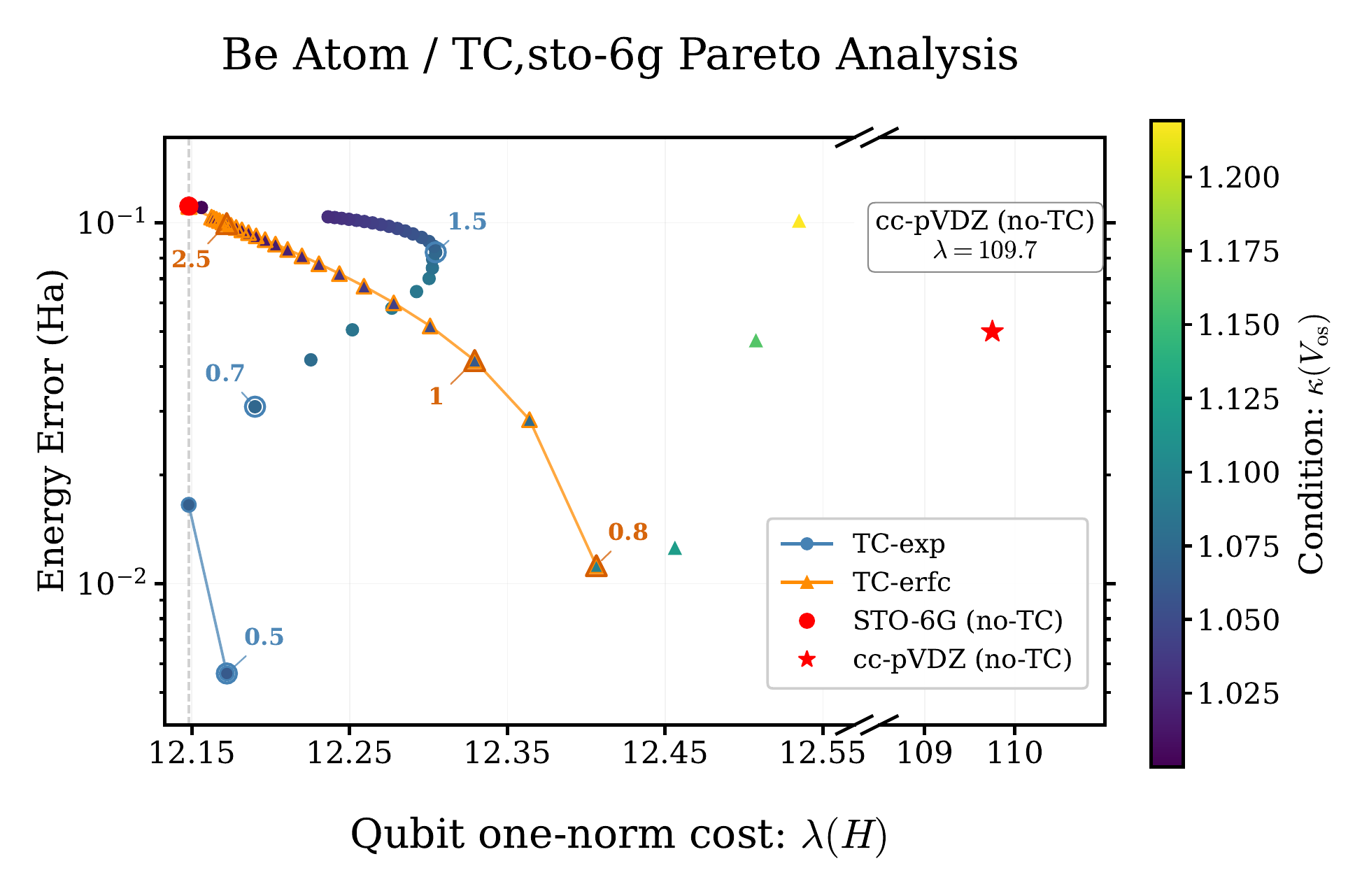}
    \caption{Pareto analysis for Be/STO-6G. Red markers denote
    the bare STO-6G and cc-pVDZ references, while the colored TC points
    represent the $\mu$ scan; selected
    values of $\mu$ are annotated. The color scale
    indicates $\kappa(V_{\mathrm{os}})$.
}
    \label{fig:Be-atom-tc}
\end{figure}

Figure~\ref{fig:Be-atom-tc} shows the analogous accuracy--cost Pareto analysis for
Be, with the bare STO-6G and cc-pVDZ references serving as quantitative
benchmarks. The bare STO-6G FCI point has an error of
$1.11\times 10^{-1}$~Ha relative to the non-relativistic CBS reference \cite{PhysRevA.44.7071}, while the
bare cc-pVDZ point has an error of $5.00\times10^{-2}$~Ha at
$\lambda(\hat H)\simeq 109.7$. Several TC/STO-6G parameter choices already
attain errors no larger than the cc-pVDZ value at a one-norm close to the STO-6G scale: for
TC-exp, $\mu=0.5$--$0.8$ gives errors from $5.6\times10^{-3}$ to
$4.2\times10^{-2}$~Ha with $\lambda(\hat H)\simeq 12.15$--$12.23$, and for
TC-erfc, $\mu=0.6$--$1.0$ gives errors from $1.1\times10^{-2}$ to
$4.7\times10^{-2}$~Ha with $\lambda(\hat H)\simeq 12.33$--$12.51$. Thus, the
TC/STO-6G Hamiltonians can match the cc-pVDZ energy accuracy with a Pauli one-norm that is approximately one order of magnitude smaller.
Across the parameter range
studied, $\kappa(V_{\mathrm{os}})$ remains in the range $1.0$--$1.2$, and the
points that attain cc-pVDZ accuracy lie in the narrower range
$\kappa(V_{\mathrm{os}})\simeq 1.06$--$1.16$. Together with the He results, this supports the favorable fixed-basis accuracy--one-norm--conditioning trade-off can also occur beyond two-electron systems. 
The correlator form continues to influence the location of each $\mu$ scan on the Pareto front.

We next compare the full TC Hamiltonian with the xTC approximation
\cite{christlmaier_xtc_2023}. We perform this comparison for Be/STO-6G
and its molecular counterpart, LiH/STO-6G, at the near-equilibrium bond length
$R=3.0$ bohr. Together, these cases extend the comparison from a closed-shell
atom to a minimal two-center molecular system.
In the full TC Hamiltonian, the three-body operator $\hat L_{\boldsymbol\theta}$
is retained explicitly, whereas xTC replaces it with its normal-ordered
effective one- and two-body contributions. For each system, we compare four quantities:
the ground-state energy difference
$|E_{\mathrm{xTC}}-E_{\mathrm{TC}}|$, the number of non-identity Pauli terms,
the Pauli one-norm, and $\kappa(V_{\mathrm{os}})$.

\begin{figure}[!htbp]
    \centering
    \includegraphics[width=\linewidth]{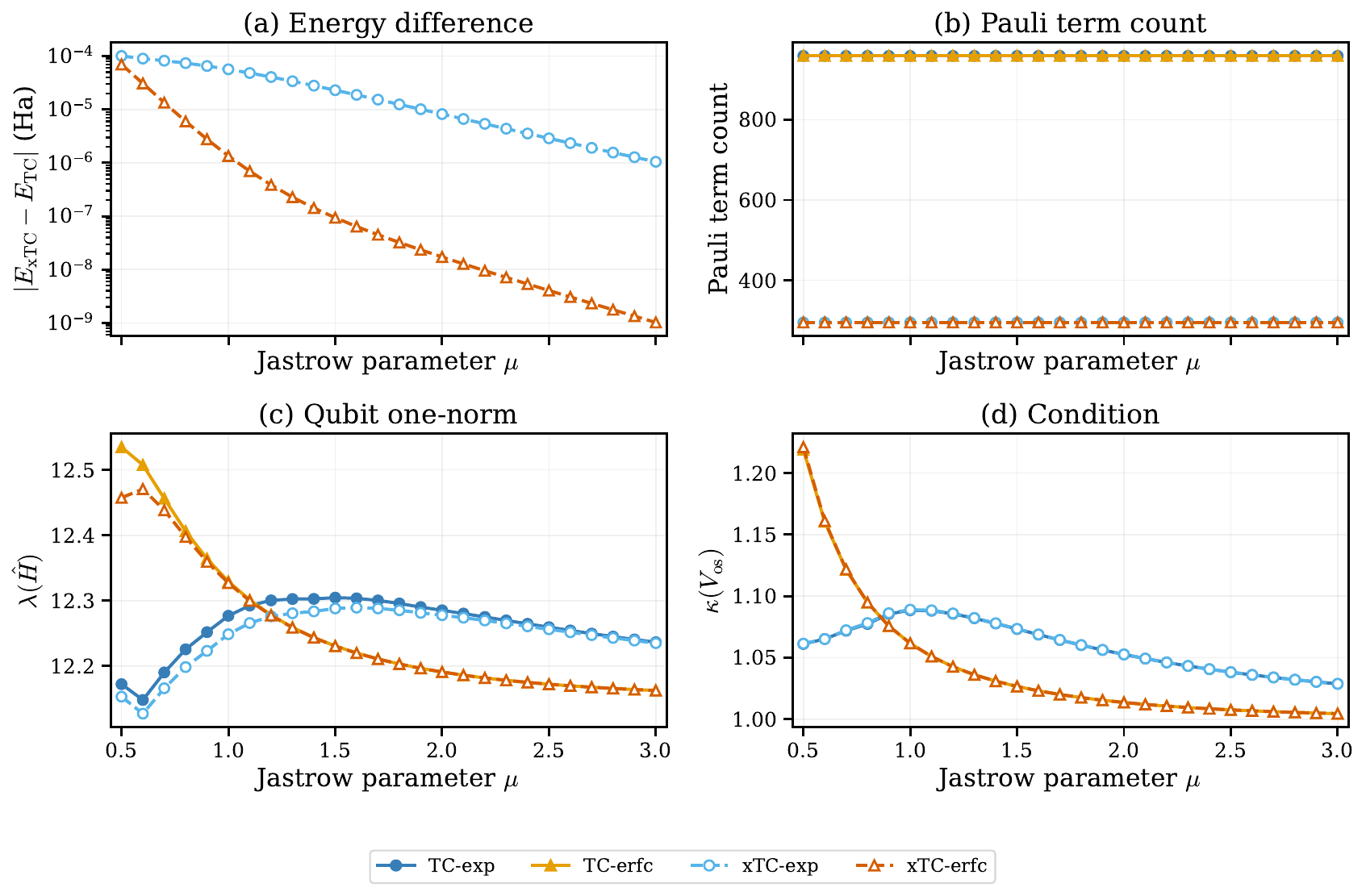}
    \caption{Comparison of full TC and xTC for the Be atom in
    the STO-6G basis. (a) Ground-state energy difference
    $|E_{\mathrm{xTC}}-E_{\mathrm{TC}}|$. (b) Number of non-identity Pauli terms
    in the qubit Hamiltonian. (c) Qubit Hamiltonian Pauli-LCU one-norm
    $\lambda(\hat H)$. (d) Condition number $\kappa(V_{\mathrm{os}})$.
    Filled markers with solid lines denote full TC,
    whereas open markers with dashed lines
    denote xTC; circles correspond to TC-exp and triangles to TC-erfc.
}
    \label{fig:Be-xtc}
\end{figure}

\begin{figure}[!htbp]
    \centering
    \includegraphics[width=\linewidth]{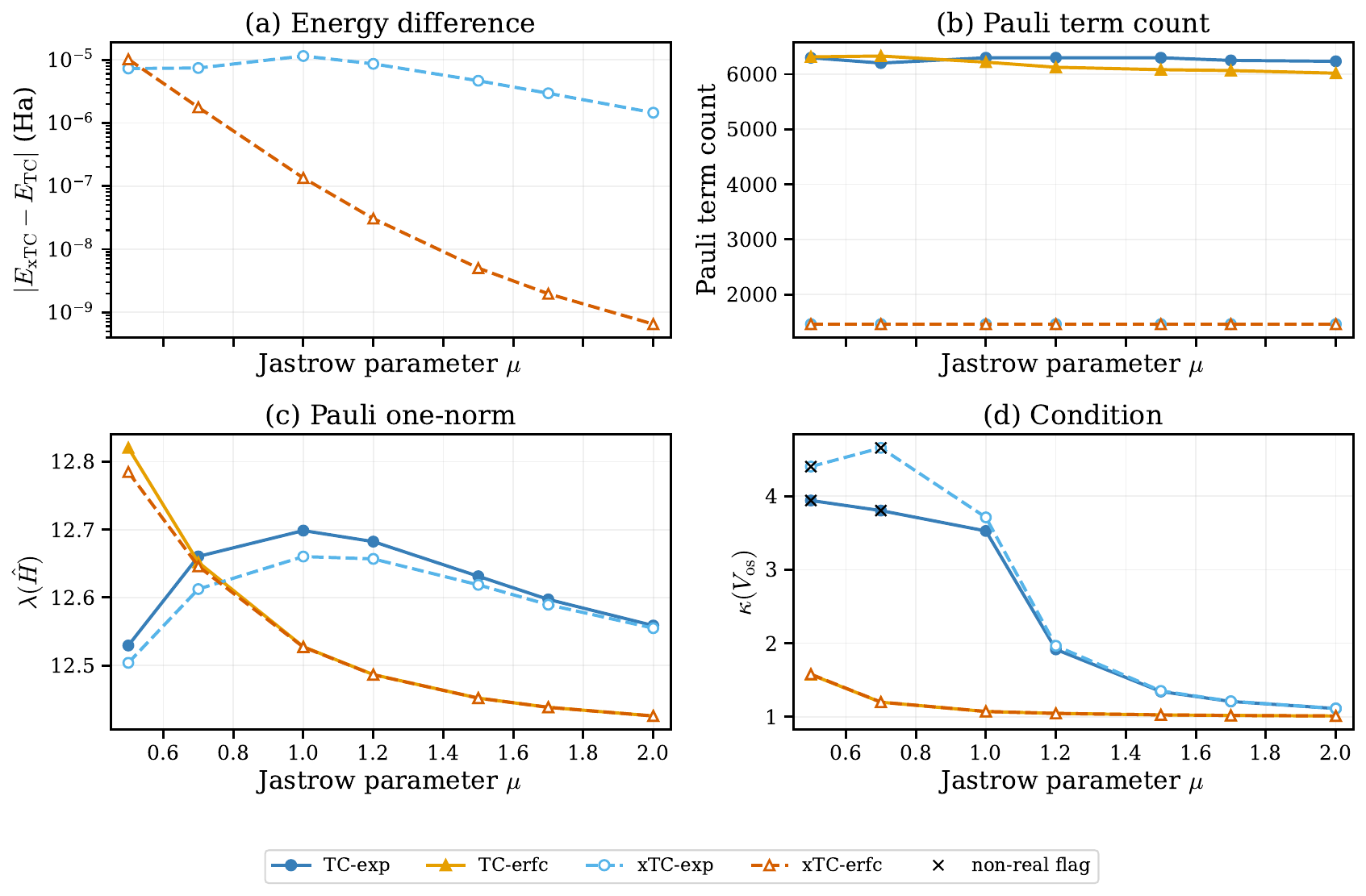}
    \caption{Comparison of full TC and xTC for the LiH molecule
    in the STO-6G basis at $R=3.0$ bohr. (a) Ground-state energy difference
    $|E_{\mathrm{xTC}}-E_{\mathrm{TC}}|$. (b) Number of non-identity Pauli terms
    in the qubit Hamiltonian. (c) Pauli one-norm of the qubit Hamiltonian
    $\lambda(\hat H)$. (d) Condition number $\kappa(V_{\mathrm{os}})$.
    Filled markers with solid lines denote full TC,
    whereas open markers with dashed lines
    denote xTC; circles correspond to TC-exp and triangles to TC-erfc.
}
    \label{fig:lih-xtc}
\end{figure}

The Be and LiH results in Figs.~\ref{fig:Be-xtc}
and~\ref{fig:lih-xtc} show close agreement between the xTC and full-TC
ground-state energies,
while the changes in conditioning are system-dependent.
Quantitatively, over the plotted parameter ranges, the maximum ground-state
energy deviation, $\max_\mu |E_{\mathrm{xTC}}-E_{\mathrm{TC}}|$, is
$1.01\times10^{-4}$~Ha for Be and $1.2\times10^{-5}$~Ha for LiH. These maxima
occur for TC-exp at $\mu=0.5$ for Be and $\mu=1.0$ for LiH.
The Pauli one-norm is also slightly lower for xTC throughout the
plotted $\mu$ ranges. For Be, the one-norm spans
$\lambda(\hat H)=12.15$--$12.53$ for full TC and $12.13$--$12.47$ for xTC.
Eliminating the residual explicit three-body operator in xTC reduces the
number of non-identity Pauli terms from
$959$ to $295$. For LiH, the corresponding one-norm ranges are
$12.43$--$12.82$ for full TC and $12.43$--$12.78$ for xTC. The Pauli term count
drops from $6018$--$6330$ to $1462$. For Be, the condition-number range is
$\kappa(V_{\mathrm{os}})=1.00$--$1.22$ for both full TC and xTC. For LiH, the
corresponding range changes from $1.01$--$3.94$ for full TC to $1.01$--$4.66$
for xTC, with an increase in its upper endpoint.

However, given the He analysis in
Fig.~\ref{fig:cond_sector}, this conditioning comparison should
be interpreted with care. The xTC Hamiltonian is obtained by normal ordering with respect to a
fixed-particle-number Hartree--Fock determinant and retaining the resulting
effective one- and two-body terms.
Because the condition numbers reported here are computed over the full
block structure, they can receive contributions from sectors outside the target
physical particle-number sector. The differences induced by xTC could therefore
arise partly from these auxiliary sectors. Without a target-sector calculation,
they cannot be attributed specifically to changes in the conditioning of the
target low-energy problem. Future work should therefore assess target-sector
conditioning for xTC and for other reduced Hamiltonians, such as those obtained
by active-space selection.

\section{Conclusion and Outlook}
\label{sec:conclusion}
In this paper, we examined the trade-off between accuracy and cost in transcorrelated Hamiltonians and its implications for quantum algorithms. We first used a solvable model to isolate the mechanisms behind this trade-off in a controlled setting. We then studied second-quantized electronic Hamiltonians to analyze how accuracy, the Pauli one-norm, and eigenvector conditioning behave in finite orbital spaces. The two settings exhibit a partial correspondence. Comparing them therefore helps distinguish intrinsic features from setting-specific effects and addresses the challenge emphasized in Ref.~\cite{uvarov_accuracy_2025}: understanding the behavior of the condition number relevant to non-Hermitian quantum algorithms. Furthermore, in the solvable model, we show that TC improves wavefunction regularity, which can yield a polynomial improvement in the QEVE query bound. For electronic calculations, we identify three setting-specific phenomena: (i) degeneracies require a consistent convention for the conditioning measure, such as \(\kappa(V_{\mathrm{os}})\); (ii) particle-number sectors can differ remarkably in conditioning; and (iii) the non-nested orbital spaces limit the direct asymptotic scaling interpretation. Despite these phenomena, numerical experiments on small systems (He, Be, and LiH) show that TC and xTC can improve finite-basis accuracy with a mild increase in the Pauli one-norm and \(\kappa(V_{\mathrm{os}})\), illustrating a trade-off similar to that of the solvable model.

These findings motivate several complementary directions for improving the TC representation. 
Extending to more flexible forms~\cite{Haupt2023-uo,Haupt2025,Simula2025,Haupt2026}, the systematically refinable
representations ~\cite{feniou_real-space_2025,bachmayr_hyperbolic_2012}. Restricting the encoded problem to physically relevant subspaces,
through spin-sector restriction~\cite{Dobrautz2019b,Dobrautz2021,Dobrautz2022b,Gandon2025}
or active-space selection~\cite{Roos1980,LiManni2020,LiManni2023},
may further reduce representation costs and clarify the conditioning relevant to the target problem.  

Finally, understanding how the relevant conditioning scales with basis size and physical system size is 
only an intermediate step toward assessing its algorithmic impact. 
Quantifying the impact of these findings on end-to-end quantum resource estimates remains an open problem. In particular, the distinction observed here between full-space and target-sector conditioning suggests that symmetry-aware analyses may yield tighter QEVE resource bounds than estimates based solely on global conditioning, but the quantitative impact remains to be established.

\section*{acknowledgments}

We acknowledge funding from the German Federal Ministry of Research, Technology and Space (BMFTR) under the research program Quantensysteme and funding measure Quantum Futur 3 for project No. 13N17229 as well as the Helmholtz Association via the Initiative and Networking Fund for projects No. VH-NG-21-08 under the Helmholtz Investigator program as well Projects KA-QUS-02 (qFLOW) and KA-QUS-03 (QT-Batt) under the Helmholtz Quantum Use case call. PKF acknowledges funding from the BMFTR (HYBRID and QPIC-1).

\section*{AI Disclosure}
Claude Opus 4.6 to 5.5, GPT-5.5, GPT-5.6 Sol, and GPT-6 Astra were used for technical discussion, literature searches, proof checking, writing support, and developing and checking analysis, verification, and plotting scripts. The authors conceived the research questions and theoretical direction, directed and verified all AI-assisted output, and take full responsibility for the content of this manuscript.

\appendix

\section{Derivation of the Baker--Campbell--Hausdorff Expansion of the TC Hamiltonian}
\label{app:BCH_derivation}

In this appendix, we provide the detailed derivation of the transcorrelated Hamiltonian using the Baker--Campbell--Hausdorff (BCH) expansion.
Throughout this appendix and Appendix~\ref{app:pairwise_derivation} the
correlator parameter $\boldsymbol\theta$ is suppressed, so that
$\hat\tau\equiv\hat\tau_{\boldsymbol\theta}$ and
$u_{ij}\equiv u_{\boldsymbol\theta}(\mathbf r_i,\mathbf r_j)$.
Starting from the definition $\hat{H}_{\mathrm{TC}} = e^{-\hat{\tau}} \hat{H} e^{\hat{\tau}}$, the expansion is given by:
\begin{equation}
    \hat{H}_{\mathrm{TC}} = \hat{H} + [\hat{H}, \hat{\tau}] + \frac{1}{2!} [[\hat{H}, \hat{\tau}], \hat{\tau}] + \dots
\end{equation}
Since the potential operator $\hat{V}$ and the correlator $\hat{\tau}$ are both multiplicative operators in the position representation, they commute ($[\hat{V}, \hat{\tau}] = 0$). 
Thus, the commutators reduce to terms involving only the kinetic energy operator $\hat{T} = -\frac{1}{2}\sum_i \nabla_i^2$:
\begin{equation}
    [\hat{H}, \hat{\tau}] = [\hat{T}, \hat{\tau}].
\end{equation}
Applying the first commutator to an arbitrary test function $\Psi$:
\begin{align}
    [\hat{T}, \hat{\tau}]\Psi &= \hat{T}(\hat{\tau}\Psi) - \hat{\tau}(\hat{T}\Psi) \nonumber \\
    &= -\frac{1}{2}\sum_i \left[ \nabla_i^2(\tau\Psi) - \tau(\nabla_i^2 \Psi) \right].
\end{align}
Using the Laplacian product rule $\nabla^2(fg) = (\nabla^2 f)g + 2(\nabla f)\cdot(\nabla g) + f(\nabla^2 g)$, we expand the first term:
\begin{equation}
    \nabla_i^2(\tau\Psi) = (\nabla_i^2\tau)\Psi + 2(\nabla_i\tau)\cdot(\nabla_i\Psi) + \tau(\nabla_i^2\Psi).
\end{equation}
Substituting this back, the $\tau(\nabla_i^2\Psi)$ terms cancel:
\begin{align}
    [\hat{T}, \hat{\tau}]\Psi &= -\frac{1}{2}\sum_i \left[ (\nabla_i^2\tau)\Psi + 2(\nabla_i\tau)\cdot(\nabla_i\Psi) \right] \nonumber \\
    &= \left( -\frac{1}{2}\sum_i (\nabla_i^2\tau) - \sum_i (\nabla_i\tau)\cdot\nabla_i \right) \Psi.
\end{align}
Since $\Psi$ is arbitrary, we obtain the operator identity:
\begin{equation}
    [\hat{T}, \hat{\tau}] = -\frac{1}{2}\sum_i (\nabla_i^2\tau) - \sum_i (\nabla_i\tau)\cdot\nabla_i.
    \label{eq:app_first_comm}
\end{equation}

The second commutator is defined as $[[\hat{T}, \hat{\tau}], \hat{\tau}]$. 
Using Eq.~\eqref{eq:app_first_comm}, and noting that the scalar term $-\frac{1}{2}\sum_i \nabla_i^2\tau$ commutes with $\hat{\tau}$, we only need to evaluate:
\begin{equation}
    [[\hat{T}, \hat{\tau}], \hat{\tau}] = \left[ -\sum_i (\nabla_i\tau)\cdot\nabla_i, \hat{\tau} \right].
\end{equation}
Applying this to a test function $\Psi$:
\begin{align}
    \left[ -\sum_i (\nabla_i\tau)\cdot\nabla_i, \hat{\tau} \right] \Psi
    &= -\sum_i (\nabla_i\tau)\cdot\nabla_i(\tau\Psi) \nonumber \\
    &\quad - \hat{\tau}\left( -\sum_i (\nabla_i\tau)\cdot\nabla_i \Psi \right).
\end{align}
Expanding the gradient $\nabla_i(\tau\Psi) = (\nabla_i\tau)\Psi + \tau(\nabla_i\Psi)$:
\begin{align}
    &= -\sum_i (\nabla_i\tau)\cdot\left[ (\nabla_i\tau)\Psi + \tau(\nabla_i\Psi) \right] + \sum_i \tau (\nabla_i\tau)\cdot(\nabla_i\Psi) \nonumber \\
    &= -\sum_i |\nabla_i\tau|^2 \Psi - \sum_i \tau(\nabla_i\tau)\cdot(\nabla_i\Psi) \nonumber \\
    &\quad + \sum_i \tau (\nabla_i\tau)\cdot(\nabla_i\Psi) \nonumber \\
    &= -\sum_i |\nabla_i\tau|^2 \Psi.
\end{align}
Thus, the second commutator is a purely multiplicative operator:
\begin{equation}
    [[\hat{T}, \hat{\tau}], \hat{\tau}] = -\sum_i |\nabla_i\tau|^2.
    \label{eq:app_second_comm}
\end{equation}
Since this result depends only on position coordinates, it commutes with $\hat{\tau}$, ensuring that all higher-order commutators vanish.

Substituting Eqs.~\eqref{eq:app_first_comm} and~\eqref{eq:app_second_comm}
into the truncated series $\hat H+[\hat T,\hat\tau]+\tfrac12[[\hat T,\hat\tau],\hat\tau]$
gives the closed form quoted as Eq.~\eqref{eq:tc-real-space} of the main text,
\begin{equation}
    \hat{H}_{\mathrm{TC}}
    = \hat{H}
    - \sum_i (\nabla_i\tau)\cdot\nabla_i
    - \frac{1}{2}\sum_i \left[ \nabla_i^2\tau + |\nabla_i\tau|^2 \right] ,
    \label{eq:app_htc_closed}
\end{equation}
where the second term is the non-Hermitian drift operator built from the
field $\mathbf F_{i,\boldsymbol\theta}$ of Eq.~\eqref{eq:first_order_field_def}
and the third is the TC-generated scalar correction
$V_{\mathrm{TC,eff},\boldsymbol\theta}$ of Eq.~\eqref{eq:Veff_def}.

\section{Derivation of Two- and Three-Body Operators}
\label{app:pairwise_derivation}
Here, we derive the explicit form of the TC Hamiltonian assuming a symmetric pairwise Jastrow correlator:
\begin{equation}
    \hat{\tau} = \sum_{i<j} u(\mathbf{r}_i, \mathbf{r}_j).
\end{equation}
The gradient of $\tau$ with respect to electron $i$ is given by:
\begin{equation}
    \nabla_i \tau = \sum_{j \neq i} \nabla_i u_{ij},
\end{equation}
where we use the shorthand $u_{ij} \equiv u(\mathbf{r}_i, \mathbf{r}_j)$. Substituting this into the general BCH expansion derived in Appendix~\ref{app:BCH_derivation}, we analyze the terms individually.

The non-self-adjoint drift term becomes:
\begin{equation}
    - \sum_i (\nabla_i\tau)\cdot\nabla_i = - \sum_i \sum_{j \neq i} (\nabla_i u_{ij}) \cdot \nabla_i.
\end{equation}
The Laplacian term acts linearly:
\begin{equation}
    -\frac{1}{2}\sum_i \nabla_i^2\tau = - \frac{1}{2} \sum_i \sum_{j \neq i} \nabla_i^2 u_{ij}.
\end{equation}
The squared gradient term requires expanding the square of the sum:
\begin{align}
    -\frac{1}{2}\sum_i |\nabla_i\tau|^2 &= - \frac{1}{2} \sum_i \left| \sum_{j \neq i} \nabla_i u_{ij} \right|^2 \nonumber \\
    &= - \frac{1}{2} \sum_i \sum_{j \neq i} |\nabla_i u_{ij}|^2 \nonumber \\
    &\quad - \frac{1}{2} \sum_i \sum_{j \neq i} \sum_{k \neq i, j}
        \nabla_i u_{ij} \cdot \nabla_i u_{ik} .
\end{align}
Here, the first term represents diagonal interactions ($j=k$, two-body), while the second term represents cross interactions ($j \neq k$, three-body).
Combining all terms and regrouping summations over unique pairs ($i<j$) and triplets ($i<j<k$) yields the operators $\hat{K}$ and $\hat{L}$ presented in the main text.
In the three-body term the ordered pairs $(j,k)$ and $(k,j)$ contribute
equally at fixed $i$, so each unordered triplet is counted twice; this cancels
the prefactor $-\tfrac12$ and $\hat L_{\boldsymbol\theta}$ enters with unit
coefficient, whereas the factors $\tfrac12$ are retained in
$\hat K_{\boldsymbol\theta}$. Explicitly,
\begin{equation}
    \hat H_{\mathrm{TC}} - \hat H
    = - \sum_{i<j} \hat K_{\boldsymbol\theta}(\mathbf r_i,\mathbf r_j)
      - \sum_{i<j<k} \hat L_{\boldsymbol\theta}(\mathbf r_i,\mathbf r_j,\mathbf r_k) .
    \label{eq:app_KL_assembled}
\end{equation}

\section{A concrete non-normal example}
\label{app:condition-number}
Consider a real, non-normal matrix 
\begin{equation}
M=\begin{pmatrix}
1 & 0 & 1\\
0 & 1 & 0\\
0 & 0 & 2
\end{pmatrix}.
\end{equation}
It is non-normal $(MM^\dagger \neq M^\dagger M )$ but diagonalizable.
Its eigenvalues are \(\{1,1,2\}\). The eigenspace associated with the
degenerate eigenvalue \(\lambda=1\) is
\begin{equation}
\mathcal E_1
=
\{(x,y,0)^T:x,y\in\mathbb R\}
=
\mathrm{span}\{e_1,e_2\}.
\end{equation}
A right eigenvector associated with \(\lambda=2\) is \((1,0,1)^T\), which
we normalize as
\begin{equation}
v_3=\frac{1}{\sqrt2}(1,0,1)^T.
\end{equation}

To illustrate the remaining basis freedom inside the degenerate
eigenspace, define
\begin{equation}
v_1=e_1,
\qquad
v_2(\theta)=\cos\theta\,e_1+\sin\theta\,e_2,
\end{equation}
with $\theta\in(0,\pi)$, so that $v_1$ and $v_2(\theta)$ are linearly
independent. Both vectors have unit Euclidean norm. The corresponding
eigenvector matrix is
\begin{equation}
V(\theta)
=
\bigl[v_1\ \ v_2(\theta)\ \ v_3\bigr]
=
\begin{pmatrix}
1 & \cos\theta & 1/\sqrt2\\
0 & \sin\theta & 0\\
0 & 0 & 1/\sqrt2
\end{pmatrix}.
\end{equation}
It satisfies
\begin{equation}
M V(\theta)
=
V(\theta)\,\mathrm{diag}(1,1,2).
\end{equation}
Thus, for every $\theta\in(0,\pi)$, $V(\theta)$ is a valid
diagonalizing eigenvector matrix with unit-norm columns.

The unit-norm convention does not, however, fix the value of the condition
number, since $\kappa_2(V(\theta))$ still varies strongly with $\theta$.

For this example, the optimal condition number $\kappa^*$
introduced in Sec.~\ref{sec:cond} can also be computed explicitly. 
Let $\widetilde V$ be any diagonalizing eigenvector matrix, with its columns
ordered so that the first two columns correspond to $\lambda=1$ and the
third column corresponds to $\lambda=2$. Then
\begin{equation}
M
=
\widetilde V\,\mathrm{diag}(1,1,2)\,\widetilde V^{-1}.
\end{equation}
We now shift and rescale $M$ so that its two eigenvalues are moved to $+1$ and
$-1$. This operation does not touch the eigenvectors, so the same $\widetilde V$
diagonalizes the result:
\begin{equation}
3I-2M
=
\widetilde V\,\mathrm{diag}(1,1,-1)\,\widetilde V^{-1}.
\end{equation}
An orthogonal eigenbasis would now give $\|3I-2M\|_2=1$; any excess over unity
therefore measures the non-orthogonality of the columns of $\widetilde V$, which is
what turns this norm into a lower bound on $\kappa_2(\widetilde V)$.
Using $\|\mathrm{diag}(1,1,-1)\|_2=1$, we obtain
\begin{equation}
\|3I-2M\|_2
\leq
\|\widetilde V\|_2\,\|\widetilde V^{-1}\|_2
=
\kappa_2(\widetilde V).
\end{equation}
This inequality implies that $\|3I-2M\|_2$ serves as a universal lower bound 
for $\kappa_2(\widetilde V)$, for every admissible eigenvector matrix $\widetilde V$.

A direct calculation gives
\begin{equation}
(3I-2M)^T(3I-2M)
=
\begin{pmatrix}
1&0&-2\\
0&1&0\\
-2&0&5
\end{pmatrix}.
\end{equation}
The eigenvalues of this matrix are
\begin{equation}
(1+\sqrt2)^2,
\qquad
1,
\qquad
(\sqrt2-1)^2.
\end{equation}
Hence
\begin{equation}
\|3I-2M\|_2=1+\sqrt2.
\end{equation}
Thus every diagonalizing eigenvector matrix satisfies
\begin{equation}
\kappa_2(\widetilde V)\geq 1+\sqrt2,
\end{equation}
and therefore
\begin{equation}
\kappa^*\geq 1+\sqrt2.
\end{equation}

In this particular example, this lower bound is attained by the gauge-fixing
prescription of Eq.~\eqref{eq:Vos}: orthonormalizing inside each eigenspace, with
$Q_1=[\,e_1\ e_2\,]$ and $Q_2=v_3$, gives $V_{\mathrm{os}}=V(\pi/2)$, that is,
\begin{equation}
V(\pi/2)
=
\begin{pmatrix}
1&0&1/\sqrt2\\
0&1&0\\
0&0&1/\sqrt2
\end{pmatrix}.
\end{equation}
The squared singular values of $V(\pi/2)$ are the eigenvalues of
\begin{equation}
V(\pi/2)^T V(\pi/2)
=
\begin{pmatrix}
1&0&1/\sqrt2\\
0&1&0\\
1/\sqrt2&0&1
\end{pmatrix}.
\end{equation}
These eigenvalues are
\begin{equation}
1+\frac{1}{\sqrt2},
\qquad
1,
\qquad
1-\frac{1}{\sqrt2}.
\end{equation}
Therefore,
\begin{equation}
\kappa_2(V(\pi/2))
=
\sqrt{
\frac{1+1/\sqrt2}{1-1/\sqrt2}
}
=
1+\sqrt2.
\end{equation}
Combining the lower bound with this attainable value yields
\begin{equation}
\kappa^*=1+\sqrt2.
\end{equation}
The orthonormalized construction of Eq.~\eqref{eq:Vos}, which is only
suboptimal in general, is thus optimal for this matrix:
$\kappa_2(V_{\mathrm{os}})=\kappa^*$.

Finally, this value is consistent with the spectral-projector bounds
discussed in Sec.~\ref{sec:cond}. The matrix $M$ has $r=2$ distinct eigenvalues,
$\lambda_1=1$ and $\lambda_2=2$; the corresponding spectral projectors are
\begin{equation}
P_2
=
\frac{M-I}{2-1}
=
M-I
=
\begin{pmatrix}
0&0&1\\
0&0&0\\
0&0&1
\end{pmatrix},
\qquad
P_1
=
I-P_2
=
\begin{pmatrix}
1&0&-1\\
0&1&0\\
0&0&0
\end{pmatrix}.
\end{equation}
These are spectral projectors, not orthogonal projectors. In particular,
$P_1$ projects onto
$\mathcal E_1=\mathrm{span}\{e_1,e_2\}$ along the eigenspace
$\mathrm{span}\{(1,0,1)^T\}$ associated with $\lambda=2$, which is not
orthogonal to $\mathcal E_1$. Directly,
\begin{equation}
P_1^T P_1
=
\begin{pmatrix}
1&0&-1\\
0&1&0\\
-1&0&1
\end{pmatrix},
\qquad
P_2^T P_2
=
\begin{pmatrix}
0&0&0\\
0&0&0\\
0&0&2
\end{pmatrix}.
\end{equation}
The largest eigenvalue of both $P_1^T P_1$ and $P_2^T P_2$ is $2$.
Hence
\begin{equation}
\|P_1\|_2=\|P_2\|_2=\sqrt2.
\end{equation}
Therefore,
\begin{equation}
\max_i\|P_i\|_2=\sqrt2.
\end{equation}
The condition numbers found above therefore satisfy
\begin{equation}
\sqrt2
=
\max_i\|P_i\|_2
<
\kappa_2(V_{\mathrm{os}})
=
\kappa^*
=
1+\sqrt2
<
2\sqrt2
=
r\,\max_i\|P_i\|_2,
\end{equation}
in agreement with the bounds in Sec.~\ref{sec:cond}.

\section{Analytical SHO + delta potential matrix elements}
\label{app:sho-integrals}

This appendix derives the matrix-element formulas used in Sec.~\ref{sec:projected-sho-tc}.  The derivation starts from full-line matrix elements.  The contact distribution is canceled before any parity argument is used.  Only after this cancellation do the remaining regular full-line integrals reduce to positive-half-axis scalar moments.  Throughout this appendix the correlator range parameter satisfies $\mu>0$.

\subsection{Full-line identities and contact cancellation}
\label{appsec:sho-full-line}

The normalized SHO basis functions are
\begin{equation}
    \phi_n(x)=\frac{1}{\sqrt{2^n n!\sqrt\pi}}H_n(x)e^{-x^2/2},
    \qquad n=0,1,2,\ldots,
    \label{eq:app-sho-basis}
\end{equation}
with parity
\begin{equation}
    \phi_n(-x)=(-1)^n\phi_n(x)
    \label{eq:app-sho-parity}
\end{equation}
and derivative identity
\begin{equation}
    \phi'_n(x)=\sqrt{\frac n2}\,\phi_{n-1}(x)
    -\sqrt{\frac{n+1}{2}}\,\phi_{n+1}(x).
    \label{eq:app-sho-derivative}
\end{equation}
Their values at the origin are
\begin{equation}
    \phi_{2m}(0)=(-1)^m\frac{\sqrt{(2m)!}}{2^m m!\pi^{1/4}},
    \qquad \phi_{2m+1}(0)=0 .
    \label{eq:app-sho-origin}
\end{equation}

Let $u(x)=v(|x|)$ be an even Jastrow exponent satisfying $v'(0^+)=g$.  Then
\begin{equation}
    u''(x)=u''_{\mathrm{reg}}(x)+2g\delta(x)
    \qquad \text{in }\mathcal D'(\mathbb R).
    \label{eq:app-u-second-dist}
\end{equation}
Consequently, the physical contact term $g\delta(x)$ of the Hamiltonian
and the distributional part of $-\tfrac12u''$ cancel identically: for arbitrary SHO functions,
\begin{equation}
\begin{aligned}
    &\int_{-\infty}^{\infty}\phi_m(x)
      \left[g\delta(x)-\frac12\,2g\delta(x)\right]
      \phi_n(x)\,dx \\
    &\qquad =g\phi_m(0)\phi_n(0)-g\phi_m(0)\phi_n(0)=0.
\end{aligned}
\label{eq:app-delta-cancel-matrix}
\end{equation}
Thus the even-sector matrix contains no separate contact term once the regular operator has been formed.  With $F(x)=u'(x)$ and
\begin{equation}
    V_{\mathrm{TC,eff}}(x)
    =-\frac12u''_{\mathrm{reg}}(x)-\frac12F(x)^2,
    \label{eq:app-Veff-def}
\end{equation}
the even-sector TC matrix is
\begin{equation}
    (H_{\mathrm{TC}})_{mn}
    =\left(2n+\frac12\right)\delta_{mn}+D_{mn}
    +V_{\mathrm{TC,eff},mn},
    \qquad m,n=0,1,\ldots,N_{\mathrm{even}}-1,
    \label{eq:app-even-matrix}
\end{equation}
where
\begin{align}
    D_{mn}
    &=-\int_{-\infty}^{\infty}\phi_{2m}(x)F(x)\phi'_{2n}(x)\,dx,
    \label{eq:app-D-full-line}\\
    V_{\mathrm{TC,eff},mn}
    &=\int_{-\infty}^{\infty}\phi_{2m}(x)
    V_{\mathrm{TC,eff}}(x)\phi_{2n}(x)\,dx.
    \label{eq:app-V-full-line}
\end{align}
Define the full-line field integral
\begin{equation}
    \mathcal F_{pq}=\int_{-\infty}^{\infty}\phi_p(x)F(x)\phi_q(x)\,dx .
    \label{eq:app-field-integral}
\end{equation}
Using Eq.~\eqref{eq:app-sho-derivative}, Eq.~\eqref{eq:app-D-full-line} becomes
\begin{equation}
    D_{mn}=-\sqrt n\,\mathcal F_{2m,2n-1}
    +\sqrt{\frac{2n+1}{2}}\,\mathcal F_{2m,2n+1},
    \label{eq:app-D-reduction}
\end{equation}
where the first term is omitted for $n=0$.

\subsection{TC-exp matrix elements}
\label{appsec:tc-exp}

For $u_1(x)=g|x|e^{-\mu|x|}$,
\begin{align}
    F_1(x)&=g\operatorname{sgn}(x)(1-\mu|x|)e^{-\mu|x|},
    \label{eq:app-F-exp}\\
    u''_{1,\mathrm{reg}}(x)&=g e^{-\mu|x|}(-2\mu+\mu^2|x|).
    \label{eq:app-u1-second-reg}
\end{align}
Hence
\begin{equation}
\begin{aligned}
    V_{\mathrm{TC,eff}}^{(1)}(x)
    &=g\mu e^{-\mu|x|}
      -\frac12g\mu^2|x|e^{-\mu|x|}
      -\frac12g^2e^{-2\mu|x|} \\
    &\quad +g^2\mu|x|e^{-2\mu|x|}
      -\frac12g^2\mu^2|x|^2e^{-2\mu|x|}.
\end{aligned}
\label{eq:app-Veff-exp}
\end{equation}
Introduce full-line exponential moments
\begin{align}
    A^{(a)}_{pq}(b)
    &=\int_{-\infty}^{\infty}\phi_p(x)\phi_q(x)|x|^a e^{-b|x|}\,dx,
    \label{eq:app-A-moment}\\
    B^{(a)}_{pq}(b)
    &=\int_{-\infty}^{\infty}\phi_p(x)\phi_q(x)\operatorname{sgn}(x)|x|^a e^{-b|x|}\,dx.
    \label{eq:app-B-moment}
\end{align}
Then the TC-exp scalar correction is
\begin{equation}
\begin{aligned}
    V_{\mathrm{TC,eff},mn}^{(1)}
    &=g\mu A^{(0)}_{2m,2n}(\mu)
      -\frac12g\mu^2 A^{(1)}_{2m,2n}(\mu)
      -\frac12g^2 A^{(0)}_{2m,2n}(2\mu) \\
    &\quad +g^2\mu A^{(1)}_{2m,2n}(2\mu)
      -\frac12g^2\mu^2 A^{(2)}_{2m,2n}(2\mu),
\end{aligned}
\label{eq:app-V-exp-matrix}
\end{equation}
and the field integral is
\begin{equation}
    \mathcal F^{(1)}_{pq}=gB^{(0)}_{pq}(\mu)-g\mu B^{(1)}_{pq}(\mu).
    \label{eq:app-F-exp-matrix}
\end{equation}
The first-order contribution follows from Eq.~\eqref{eq:app-D-reduction}, namely
\begin{equation}
    D^{(1)}_{mn}
    =-\sqrt n\,\mathcal F^{(1)}_{2m,2n-1}
    +\sqrt{\frac{2n+1}{2}}\,\mathcal F^{(1)}_{2m,2n+1}.
    \label{eq:app-D-exp-matrix}
\end{equation}

\subsection{TC-erfc matrix elements}
\label{appsec:tc-erfc}

For
\begin{equation}
    u_2(x)=g\left[|x|\operatorname{erfc}(\mu|x|)
    -\frac{1}{\sqrt\pi\mu}e^{-\mu^2x^2}\right],
    \label{eq:app-u-erfc}
\end{equation}
one obtains
\begin{align}
    F_2(x)&=g\operatorname{sgn}(x)\operatorname{erfc}(\mu|x|),
    \label{eq:app-F-erfc}\\
    u''_{2,\mathrm{reg}}(x)&=-\frac{2g\mu}{\sqrt\pi}e^{-\mu^2x^2},
    \label{eq:app-u2-second-reg}\\
    V_{\mathrm{TC,eff}}^{(2)}(x)
    &=\frac{g\mu}{\sqrt\pi}e^{-\mu^2x^2}
    -\frac{g^2}{2}\operatorname{erfc}^2(\mu|x|).
    \label{eq:app-Veff-erfc}
\end{align}
Introduce the full-line moments
\begin{align}
    G_{pq}(\alpha)
    &=\int_{-\infty}^{\infty}\phi_p(x)\phi_q(x)e^{-\alpha x^2}\,dx,
    \label{eq:app-Gpq}\\
    C_{pq}(\mu)
    &=\int_{-\infty}^{\infty}\phi_p(x)\phi_q(x)\operatorname{erfc}^2(\mu|x|)\,dx,
    \label{eq:app-Cpq}\\
    R_{pq}(\mu)
    &=\int_{-\infty}^{\infty}\phi_p(x)\phi_q(x)\operatorname{sgn}(x)\operatorname{erfc}(\mu|x|)\,dx.
    \label{eq:app-Rpq}
\end{align}
Then
\begin{align}
    V_{\mathrm{TC,eff},mn}^{(2)}
    &=\frac{g\mu}{\sqrt\pi}G_{2m,2n}(\mu^2)-\frac{g^2}{2}C_{2m,2n}(\mu),
    \label{eq:app-V-erfc-matrix}\\
    \mathcal F^{(2)}_{pq}&=gR_{pq}(\mu),
    \label{eq:app-F-erfc-matrix}\\
    D^{(2)}_{mn}
    &=-g\sqrt n\,R_{2m,2n-1}(\mu)
      +g\sqrt{\frac{2n+1}{2}}\,R_{2m,2n+1}(\mu).
    \label{eq:app-D-erfc-matrix}
\end{align}

\subsection{Reduction to scalar Hermite moments}
\label{appsec:moment-reduction}

Write
\begin{equation}
    H_p(x)H_q(x)=\sum_{r=0}^{p+q}c_r^{(p,q)}x^r.
    \label{eq:app-hermite-product}
\end{equation}
The coefficients may be obtained from
\begin{equation}
    H_n(x)=\sum_{r=0}^{n}h_{n,r}x^r,
    \qquad
    h_{n,r}=\begin{cases}
    \displaystyle (-1)^{(n-r)/2}\frac{n!2^r}{((n-r)/2)!r!},
    & n-r\ge0\text{ and }n-r\text{ even},\\[0.8em]
    0,&\text{otherwise},
    \end{cases}
    \label{eq:app-hermite-coefficients}
\end{equation}
and the convolution
\begin{equation}
    c_r^{(p,q)}=\sum_{s=0}^{r}h_{p,s}h_{q,r-s}.
    \label{eq:app-product-convolution}
\end{equation}
The SHO normalization factor is
\begin{equation}
    \mathcal N_{pq}=\frac{1}{\sqrt{2^{p+q}p!q!\pi}}.
    \label{eq:app-Npq}
\end{equation}

For the exponential moments, define
\begin{equation}
    M_k(b)=\int_0^\infty x^k e^{-x^2-bx}\,dx,
    \qquad b>0 .
    \label{eq:app-Mk}
\end{equation}
Using parity only after the full-line expressions have been defined gives
\begin{align}
    A^{(a)}_{pq}(b)
    &=\mathcal N_{pq}\bigl[1+(-1)^{p+q}\bigr]
      \sum_{r=0}^{p+q}c_r^{(p,q)}M_{r+a}(b),
    \label{eq:app-A-eval}\\
    B^{(a)}_{pq}(b)
    &=\mathcal N_{pq}\bigl[1-(-1)^{p+q}\bigr]
      \sum_{r=0}^{p+q}c_r^{(p,q)}M_{r+a}(b).
    \label{eq:app-B-eval}
\end{align}
The scalar moments have the closed form
\begin{equation}
    M_k(b)=\frac{e^{b^2/4}}{2}
    \sum_{j=0}^{k}\binom{k}{j}\left(-\frac b2\right)^{k-j}
    \Gamma\left(\frac{j+1}{2},\frac{b^2}{4}\right),
    \label{eq:app-Mk-closed}
\end{equation}
and the recurrence
\begin{align}
    M_0(b)&=\frac{\sqrt\pi}{2}e^{b^2/4}\operatorname{erfc}\left(\frac b2\right),
    \label{eq:app-M0}\\
    M_1(b)&=\frac12-\frac b2M_0(b),
    \label{eq:app-M1}\\
    M_{k+1}(b)&=\frac{k}{2}M_{k-1}(b)-\frac b2M_k(b),
    \qquad k\ge1.
    \label{eq:app-M-recurrence}
\end{align}
The recurrence follows by integrating the derivative of $x^ke^{-x^2-bx}$ over $[0,\infty)$.

For the Gaussian part of TC-erfc,
\begin{equation}
    G_k(\alpha)=\int_0^\infty x^k e^{-(1+\alpha)x^2}\,dx
    =\frac12(1+\alpha)^{-(k+1)/2}\Gamma\left(\frac{k+1}{2}\right),
    \label{eq:app-Gk}
\end{equation}
and
\begin{equation}
    G_{pq}(\alpha)=\mathcal N_{pq}\bigl[1+(-1)^{p+q}\bigr]
    \sum_{r=0}^{p+q}c_r^{(p,q)}G_r(\alpha).
    \label{eq:app-Gpq-eval}
\end{equation}
For the erfc moments, define
\begin{align}
    E_k(\mu)&=\int_0^\infty x^k e^{-x^2}\operatorname{erfc}(\mu x)\,dx,
    \label{eq:app-Ek}\\
    E_k^{(2)}(\mu)&=\int_0^\infty x^k e^{-x^2}\operatorname{erfc}^2(\mu x)\,dx.
    \label{eq:app-E2k}
\end{align}
Then
\begin{align}
    R_{pq}(\mu)&=\mathcal N_{pq}\bigl[1-(-1)^{p+q}\bigr]
    \sum_{r=0}^{p+q}c_r^{(p,q)}E_r(\mu),
    \label{eq:app-Rpq-eval}\\
    C_{pq}(\mu)&=\mathcal N_{pq}\bigl[1+(-1)^{p+q}\bigr]
    \sum_{r=0}^{p+q}c_r^{(p,q)}E_r^{(2)}(\mu).
    \label{eq:app-Cpq-eval}
\end{align}
The single-erfc moments are initialized by
\begin{align}
    E_0(\mu)&=\frac{\sqrt\pi}{2}-\frac{1}{\sqrt\pi}\arctan\mu
    =\frac{1}{\sqrt\pi}\arctan\left(\frac1\mu\right),
    \label{eq:app-E0}\\
    E_1(\mu)&=\frac12\left(1-\frac{\mu}{\sqrt{1+\mu^2}}\right),
    \label{eq:app-E1}
\end{align}
and obey
\begin{equation}
    E_k(\mu)=\frac{k-1}{2}E_{k-2}(\mu)-\frac{\mu}{\sqrt\pi}G_{k-1}(\mu^2),
    \qquad k\ge2.
    \label{eq:app-E-recurrence}
\end{equation}
For the squared-erfc moments, define
\begin{equation}
    \widetilde E_k(\mu)=\int_0^\infty x^k e^{-(1+\mu^2)x^2}\operatorname{erfc}(\mu x)\,dx,
    \label{eq:app-Etilde}
\end{equation}
and
\begin{equation}
    \widetilde G_k(\mu)=\frac12(1+2\mu^2)^{-(k+1)/2}
    \Gamma\left(\frac{k+1}{2}\right).
    \label{eq:app-Gtilde}
\end{equation}
The modified erfc moments have initial values
\begin{align}
    \widetilde E_0(\mu)&=\frac{1}{\sqrt\pi\sqrt{1+\mu^2}}
    \arctan\left(\frac{\sqrt{1+\mu^2}}{\mu}\right),
    \label{eq:app-Etilde0}\\
    \widetilde E_1(\mu)&=\frac{1}{2(1+\mu^2)}
    \left(1-\frac{\mu}{\sqrt{1+2\mu^2}}\right),
    \label{eq:app-Etilde1}
\end{align}
and recurrence
\begin{equation}
    \widetilde E_k(\mu)=\frac{k-1}{2(1+\mu^2)}\widetilde E_{k-2}(\mu)
    -\frac{\mu}{\sqrt\pi(1+\mu^2)}\widetilde G_{k-1}(\mu),
    \qquad k\ge2.
    \label{eq:app-Etilde-recurrence}
\end{equation}
Finally,
\begin{align}
    E_0^{(2)}(\mu)&=\frac{2}{\sqrt\pi}
    \left[\arctan\sqrt{1+2\mu^2}-\arctan\mu\right],
    \label{eq:app-E20}\\
    E_1^{(2)}(\mu)&=\frac12-
    \frac{2\mu}{\pi\sqrt{1+\mu^2}}
    \arctan\left(\frac{\sqrt{1+\mu^2}}{\mu}\right),
    \label{eq:app-E21}\\
    E_k^{(2)}(\mu)&=\frac{k-1}{2}E_{k-2}^{(2)}(\mu)
    -\frac{2\mu}{\sqrt\pi}\widetilde E_{k-1}(\mu),
    \qquad k\ge2.
    \label{eq:app-E2-recurrence}
\end{align}
At large $p+q$ the convolution \eqref{eq:app-product-convolution} mixes
large terms of alternating sign, so in floating-point arithmetic the sums
\eqref{eq:app-A-eval}, \eqref{eq:app-B-eval}, \eqref{eq:app-Gpq-eval},
\eqref{eq:app-Rpq-eval}, and \eqref{eq:app-Cpq-eval} lose accuracy; they are
best evaluated in exact or extended-precision arithmetic.
Equations~\eqref{eq:app-even-matrix}, \eqref{eq:app-V-exp-matrix}, \eqref{eq:app-D-exp-matrix}, \eqref{eq:app-V-erfc-matrix}, and \eqref{eq:app-D-erfc-matrix} give the complete matrix-element computation.

\section{Details of the solvable-model analysis}

\label{app:sho-delta-details}
This appendix reports the details of the asymptotic convergence analysis
underlying Sec.~\ref{sec:SHO-delta}, together with the assumptions on which it
relies. The relevant prior analyses are those of Grining
\emph{et al.}~\cite{grining_many_2015} for the bare problem and Jeszenszki
\emph{et al.}~\cite{jeszenszki_accelerating_2018} for the transcorrelated
framework. In the two-particle
settings considered in both works, the external potential is either absent or
separable, allowing the Hamiltonian to be decomposed into center-of-mass and
relative-coordinate parts. The interacting relative Hamiltonian can therefore
be viewed as a variant of the solvable model in Sec.~\ref{sec:SHO-delta}. For
Grining \emph{et al.}, it has the same SHO-plus-contact form, up to coordinate,
coupling, and energy-zero conventions, whereas Jeszenszki \emph{et al.} studied
a distinct translation-invariant contact problem with periodic boundary
conditions. Despite these differences in the regular potential and global
boundary conditions, the relative-coordinate subproblems considered in both
works exhibit a local singularity at particle coalescence that is similar in
form to the singularity of our model at $x=0$. This type of local
nonanalyticity underlies the slow convergence of their bare basis expansions.
Grining \emph{et al.} expanded the relative state in a truncated basis of $N$
SHO functions and derived a leading bare error of order $N^{-1/2}$ for both the
Rayleigh quotient of the truncated exact state and the variational eigenvalue
of the projected Hamiltonian. Jeszenszki \emph{et al.} instead retained $M$
single-particle plane waves, whose largest momentum satisfies
$k_{\max}\propto M$, and obtained energy-error rates of $M^{-1}$ for the bare
problem and $M^{-3}$ after TC.

Neither analysis covers the case studied here: the TC Hamiltonian projected onto
the even SHO basis, for which we derive the energy-error rate and its leading
prefactor. To this end, we follow Hill's Rayleigh--Ritz
analysis~\cite{hill_rates_1985}, with the two-sided quotient appropriate to the
non-normal setting~\cite{ostrowski_convergence_1959}, and adopt the weak
formulation and asymmetric estimates of Bachmayr~\cite{bachmayr_hyperbolic_2012}
and Yserentant~\cite{yserentant_mixed_2011}, adapted to the present model. The
analysis rests on a single {assumption} specifying a simple Ritz branch
converging to $E_0$ and a uniform bound on its reduced resolvent
(Assumption~\ref{asm:branch}).

\subsection{Setting, weak formulation, and basic estimates}
\label{app:E:setup}

We first recall the setting. Let
$\hat H=\hat H_{\mathrm{SHO}}+g\delta(x)$ with
$g\in\mathbb R\setminus\{0\}$. We restrict our analysis to the even sector.
The ground-state eigenvalue $E_0$ is simple, and its eigenfunction
$\Psi_0$ is even and satisfies the contact condition $\Psi_0'(0^+)=g\Psi_0(0)$.
We choose $\Psi_0$ real and fix its sign by $\Psi_0(0)>0$.

The exact ground state of $\hat H$ is $\Psi_0(x)=C D_\nu(\sqrt2|x|)$, where
$D_\nu$ is the parabolic-cylinder function of order $\nu=E_0-\tfrac12$
and $C$ is chosen so that $\|\Psi_0\|^2=1$.
Here $E_0$ is the lowest solution of Eq.~\eqref{eq:sho-even-spectrum}.
For the properties of parabolic-cylinder functions, see
DLMF~\cite[\S\S~\href{https://dlmf.nist.gov/12.2}{12.2},
\href{https://dlmf.nist.gov/12.8}{12.8}, and
\href{https://dlmf.nist.gov/12.9}{12.9}]{NIST:DLMF}.
For $x>0$, these give
\[
 \begin{aligned}
 \Psi_0'(x)&=x\Psi_0(x)-\sqrt2 C D_{\nu+1}(\sqrt2x),\\
 D_\nu(\sqrt2x)&\sim(\sqrt2x)^\nu e^{-x^2/2}\qquad(x\to+\infty).
 \end{aligned}
\]
The asymptotic formula also applies at the fixed order $\nu+1$.
Repeated differentiation of $\Psi_0''=(x^2-2E_0)\Psi_0$ expresses every
higher derivative as a linear combination of $\Psi_0$ and $\Psi_0'$
with polynomial coefficients.

For fixed $\mu>0$, let $u$ denote either of the two correlators used in this
work; both of them are real, even, bounded, and Lipschitz, with $u'(0^+)=g$.
In what follows, we shift $u\to u-u(0)$ so that $u(0)=0$; only the erfc form requires this
shift. We set $F=u'$ and denote by $u''_{\mathrm{reg}}$ the ordinary second
derivative away from the origin; we then have $u''=u''_{\mathrm{reg}}+2g\delta$
in the distributional sense. Both $F$ and $u''_{\mathrm{reg}}$ are bounded
on the two half-lines. The TC Hamiltonian can then be written as
\begin{equation}
\hat H_{\mathrm{TC}}= e^{-u} \hat H e^{u} =\hat H_{\mathrm{SHO}}+\hat W,
\end{equation}
with
$V_{\mathrm{TC,eff}}=-\tfrac12u''_{\mathrm{reg}}-\tfrac12F^2$ and
$\hat W=-F\partial_x+V_{\mathrm{TC,eff}}$.
The transformed ground states are $\Phi_0^{R,L}=e^{\mp u}\Psi_0$ with
$\inner{\Phi_0^L}{\Phi_0^R}=\|\Psi_0\|^2=1$.
The right state satisfies $(\Phi_0^R)'(0^+)=0$, whereas the left state
satisfies $(\Phi_0^L)'(0^+)=2g\Phi_0^L(0)=2g\Psi_0(0)$.

Following the notation of
Bachmayr~\cite[Sec.~2]{bachmayr_hyperbolic_2012} and
Yserentant~\cite[Sec.~2]{yserentant_mixed_2011}, we use $a(\cdot,\cdot)$
for the forms in the weak formulation.
All forms below are defined for even functions $f, v \in H^1(\mathbb R)$
with $xf,xv \in L^2(\mathbb R)$; the inner product is that of $L^2(\mathbb R)$,
antilinear in the first argument.
This formulation involves only first derivatives and allows both transformed
states to be treated in the same space, without imposing their contact
conditions on the test functions.
For the SHO Hamiltonian, integration by parts gives us the form
\begin{equation}
a_{\mathrm{SHO}}(f,v)=\tfrac12\inner{f'}{v'}+\tfrac12\inner{xf}{xv}= \inner{f}{\hat H_{\mathrm{SHO}}v},
\label{eq:sho-form}      
\end{equation}
where the last equality holds when $\hat H_{\mathrm{SHO}}v\in L^2$.
 
The half-line equation and the contact condition give the weak equation
of $\hat H$,
\begin{equation}
 a_{\mathrm{SHO}}(f,\Psi_0)+g\overline{f(0)}\Psi_0(0)
 =E_0\inner{f}{\Psi_0},
\end{equation}
for every test function $f$.

For the weak formulation of $\hat H_{\mathrm{TC}}$, we define the associated form
\begin{equation}
\label{eq:E-tc-form}
 \begin{aligned}
 a_{\mathrm{TC}}(f,v)&:=a_{\mathrm{SHO}}(e^{-u}f,e^uv)
                      +g\overline{f(0)}v(0)\\
 &=a_{\mathrm{SHO}}(f,v)+\inner{f}{\hat Wv}.
 \end{aligned}
\end{equation}
The second equality follows by integration by parts,
with the jump in $F$ cancelling the contact term.
As in Bachmayr's TC formulation~\cite[Props.~2.1 and 2.3]{bachmayr_hyperbolic_2012},
the physical weak equation gives the following right and left weak identities
for the exact transformed states:
\begin{equation}
 \begin{aligned}
 a_{\mathrm{TC}}(f,\Phi_0^R)&=E_0\inner{f}{\Phi_0^R},\\
 a_{\mathrm{TC}}(\Phi_0^L,v)&=E_0\inner{\Phi_0^L}{v},
 \end{aligned}
 \label{eq:E-biorth}
\end{equation}
for all $f,v$ in this space.

Finally, we need to estimate the correction $\inner{f}{\hat Wv}$
in Eq.~\eqref{eq:E-tc-form}.
We write $\|\cdot\|$ for the $L^2(\mathbb R)$ norm and $\|\cdot\|_\infty$
for the $L^\infty(\mathbb R)$ norm.
The $L^\infty$ bounds on the coefficients in
$\hat Wv=-Fv'+V_{\mathrm{TC,eff}}v$ reduce the $L^2$ estimate of $\hat Wv$
to control of $\|v'\|$ and $\|v\|$.
We therefore use the energy norm associated with $a_{\mathrm{SHO}}$, defined by
\begin{equation}
 \|v\|_{\mathrm{en}}^2=a_{\mathrm{SHO}}(v,v)+\|v\|^2 .
 \label{eq:E-en-norm}
\end{equation}
This gives $\|v'\|\le\sqrt2\|v\|_{\mathrm{en}}$ and
$\|v\|\le\|v\|_{\mathrm{en}}$.
Combining these bounds with the triangle inequality gives the first estimate
below. Cauchy--Schwarz then gives the second, which requires only $f\in L^2$:
\begin{equation}
 \begin{aligned}
 \|\hat Wv\|&\le\|F\|_\infty\|v'\|
             +\|V_{\mathrm{TC,eff}}\|_\infty\|v\|
             \le C_W\|v\|_{\mathrm{en}},\\
 |\inner{f}{\hat Wv}|&\le\|f\|\,\|\hat Wv\|
             \le C_W\|f\|\,\|v\|_{\mathrm{en}},
 \end{aligned}
 \label{eq:E-K2}
\end{equation}
where $C_W=\sqrt2\|F\|_\infty+\|V_{\mathrm{TC,eff}}\|_\infty$.

\subsection{Exact error formula via Schur-complement reduction}
\label{app:E:exact}

The numerical calculations in Sec.~\ref{sec:SHO-delta} replace the full
operator $\hat H_{\mathrm{TC}}$ by its finite-basis projection, as described
in Sec.~\ref{sec:finite-dimension-tc}.
Our aim is to estimate the energy error $E_0^{(N)}-E_0$ in terms of the
approximation errors of the exact left and right states in this basis.
Following Hill's analysis~\cite[Sec.~III]{hill_rates_1985}, we use the
projected exact right state as a reference direction within the retained
space. For the present non-Hermitian problem, we define the complementary
subspace using the projected exact left state.
Schur elimination of the numerical eigenvector's component in this
subspace gives an energy identity.
The weak identities~\eqref{eq:E-biorth} of the preceding subsection then
express the energy error through the tail components of both exact states.

Let $w_n=\phi_{2n}$ be the complete orthonormal even SHO basis of
Appendix~\ref{app:sho-integrals}, with energies $\varepsilon_n=2n+\tfrac12$.
We retain the first $N=N_{\mathrm{even}}$ functions and let $P_N$ be the
$L^2$-orthogonal projection onto their span, with $Q_N=I-P_N$.
For $f,v$ in the form space, write
$f_n=\inner{w_n}{f}$ and $v_n=\inner{w_n}{v}$.
Orthonormality, $\hat H_{\mathrm{SHO}}w_n=\varepsilon_nw_n$, and
Eq.~\eqref{eq:E-en-norm} give
\begin{equation}
 \|v\|^2=\sum_n|v_n|^2,
 \qquad
 a_{\mathrm{SHO}}(f,v)=\sum_n\varepsilon_n\overline{f_n}v_n,
 \qquad
 \|v\|_{\mathrm{en}}^2=\sum_n(\varepsilon_n+1)|v_n|^2 ,
 \label{eq:E-coeff-norms}
\end{equation}
where the sums run over all $n\ge0$.
The corresponding formulas for $P_Nf,P_Nv$ restrict the sums to $n<N$;
those for $Q_Nf,Q_Nv$ restrict them to $n\ge N$.
We consider the ground-state eigenvalue problem for the finite SHO matrix
$H_{\mathrm{TC}}^{(N)}$:
\[
 \quad H_{\mathrm{TC}}^{(N)}\Phi_0^{(N)}=E_0^{(N)}\Phi_0^{(N)}.
\]

The right eigenvector $\Phi_0^{(N)}$ need not be proportional to the projected
exact state $P_N\Phi_0^R$.
To relate them, we introduce an oblique projection $\Pi_N$ onto the direction
$P_N\Phi_0^R$ within the retained space.
Its complement $\Pi_N^\perp$ selects vectors that pair to zero with
$P_N\Phi_0^L$.
When the projected states have nonzero overlap, define
\[
 \Pi_N=\frac{\ket{P_N\Phi_0^R}\bra{P_N\Phi_0^L}}{\sigma_N},
 \qquad \Pi_N^\perp=P_N-\Pi_N,
\]
where $\sigma_N:=\inner{P_N\Phi_0^L}{P_N\Phi_0^R}=\inner{P_N\Phi_0^L}{\Phi_0^R}$.
At fixed $N$, suppose that the generalized inverse
\[
 A_N=\bigl(\Pi_N^\perp H_{\mathrm{TC}}^{(N)}\Pi_N^\perp
       -E_0^{(N)}\Pi_N^\perp\bigr)^{-1}
 \quad\text{on }\Ran\Pi_N^\perp
\]
exists. This implies $\Pi_N\Phi_0^{(N)}\ne0$; otherwise, the eigenvalue
equation would contradict this invertibility.
Since $\Pi_N\Phi_0^{(N)}$ is a nonzero multiple of $P_N\Phi_0^R$,
we may rescale the eigenvector so that
$\Pi_N\Phi_0^{(N)}=P_N\Phi_0^R$.
The oblique splitting then gives
\[
 \begin{aligned}
 \Phi_0^{(N)}&=P_N\Phi_0^R+s_N,
 &s_N&=\Pi_N^\perp\Phi_0^{(N)},\\
 \inner{P_N\Phi_0^L}{s_N}&=0,
 &\inner{P_N\Phi_0^L}{\Phi_0^{(N)}}&=\sigma_N.
 \end{aligned}
\]

Substituting this decomposition into the eigenvalue equation and applying
$\Pi_N^\perp$, we use $\Pi_N^\perp P_N\Phi_0^R=0$ and
$\Pi_N^\perp s_N=s_N$ to obtain
\[
 \bigl(\Pi_N^\perp H_{\mathrm{TC}}^{(N)}\Pi_N^\perp
       -E_0^{(N)}\Pi_N^\perp\bigr)s_N
 =-\Pi_N^\perp H_{\mathrm{TC}}^{(N)}P_N\Phi_0^R.
\]
The operator on the left is the one inverted by $A_N$, so
$s_N=-A_N\Pi_N^\perp H_{\mathrm{TC}}^{(N)}P_N\Phi_0^R$.
Pairing the eigenvalue equation with $P_N\Phi_0^L$ and substituting this
expression yields the following energy identity for the projected eigenvalue:
\begin{equation}
 \begin{aligned}
 E_0^{(N)}
 &=\frac{\inner{P_N\Phi_0^L}{H_{\mathrm{TC}}^{(N)}P_N\Phi_0^R}}{\sigma_N}
   +\frac{\inner{P_N\Phi_0^L}{H_{\mathrm{TC}}^{(N)}s_N}}{\sigma_N}\\
 &=\frac{\inner{P_N\Phi_0^L}{H_{\mathrm{TC}}^{(N)}P_N\Phi_0^R}}{\sigma_N}
   -\frac{\inner{P_N\Phi_0^L}{H_{\mathrm{TC}}^{(N)}A_N\Pi_N^\perp
                      H_{\mathrm{TC}}^{(N)}P_N\Phi_0^R}}{\sigma_N}.
 \end{aligned}
 \label{eq:E-scalar}
\end{equation}
The first term is the two-sided Rayleigh quotient of the projected exact
states~\cite{ostrowski_convergence_1959}.
The second is the Schur correction from their coupling to the rest of the
retained space, and the second line displays $A_N$ between the two coupling
factors.

To obtain $E_0^{(N)}-E_0$ from Eq.~\eqref{eq:E-scalar}, we first write
$E_0$ with the same denominator $\sigma_N$.
Use the right weak identity~\eqref{eq:E-biorth} with $f=P_N\Phi_0^L$.
We then have
\[
 E_0=\frac{a_{\mathrm{TC}}(P_N\Phi_0^L,\Phi_0^R)}{\sigma_N}.
\]
Subtracting this from Eq.~\eqref{eq:E-scalar} gives
\begin{equation}
 (E_0^{(N)}-E_0)\sigma_N
 =\inner{P_N\Phi_0^L}{H_{\mathrm{TC}}^{(N)}P_N\Phi_0^R}
 -a_{\mathrm{TC}}(P_N\Phi_0^L,\Phi_0^R)
 +\inner{P_N\Phi_0^L}{H_{\mathrm{TC}}^{(N)}s_N}.
 \label{eq:error-formula}
\end{equation}
The finite SHO matrix acts on the retained space as
\[
 H_{\mathrm{TC}}^{(N)}=P_N\hat H_{\mathrm{TC}}P_N
 =P_N(\hat H_{\mathrm{SHO}}+\hat W)P_N.
\]
The projections can be removed inside the pairing because
$P_N^\dagger=P_N$ and $P_N^2=P_N$.
Both projected states are finite SHO expansions, so the SHO form identity
applies with $f=P_N\Phi_0^L$ and $v=P_N\Phi_0^R$. Thus,
\[
 \begin{aligned}
 \inner{P_N\Phi_0^L}{H_{\mathrm{TC}}^{(N)}P_N\Phi_0^R}
 &=\inner{P_N\Phi_0^L}
          {P_N(\hat H_{\mathrm{SHO}}+\hat W)P_N\Phi_0^R}\\
 &=\inner{P_N\Phi_0^L}{\hat H_{\mathrm{SHO}}P_N\Phi_0^R}
   +\inner{P_N\Phi_0^L}{\hat WP_N\Phi_0^R}\\
 &=a_{\mathrm{SHO}}(P_N\Phi_0^L,P_N\Phi_0^R)
   +\inner{P_N\Phi_0^L}{\hat WP_N\Phi_0^R}
   \qquad\text{by Eq.~\eqref{eq:sho-form}}\\
 &=a_{\mathrm{TC}}(P_N\Phi_0^L,P_N\Phi_0^R)
   \qquad\text{by Eq.~\eqref{eq:E-tc-form}.}
 \end{aligned}
\]
Using $\Phi_0^R=P_N\Phi_0^R+Q_N\Phi_0^R$, the first two terms become
\[
 a_{\mathrm{TC}}(P_N\Phi_0^L,P_N\Phi_0^R)
 -a_{\mathrm{TC}}(P_N\Phi_0^L,\Phi_0^R)
 =-a_{\mathrm{TC}}(P_N\Phi_0^L,Q_N\Phi_0^R).
\]
The right argument now contains the tail component.  
It remains to rewrite the projected left component in terms of the left tail.
We therefore write $P_N\Phi_0^L=\Phi_0^L-Q_N\Phi_0^L$ and apply the left
weak identity to the exact left state:
\[
 \begin{aligned}
 &-a_{\mathrm{TC}}(P_N\Phi_0^L,Q_N\Phi_0^R)\\
 &\quad=a_{\mathrm{TC}}(Q_N\Phi_0^L,Q_N\Phi_0^R)
         -a_{\mathrm{TC}}(\Phi_0^L,Q_N\Phi_0^R)
         &&(P_N\Phi_0^L=\Phi_0^L-Q_N\Phi_0^L)\\
 &\quad=a_{\mathrm{TC}}(Q_N\Phi_0^L,Q_N\Phi_0^R)
         -E_0\inner{\Phi_0^L}{Q_N\Phi_0^R}
 &&\text{(left weak identity~\eqref{eq:E-biorth})}\\
 &\quad=a_{\mathrm{TC}}(Q_N\Phi_0^L,Q_N\Phi_0^R)
         -E_0\inner{Q_N\Phi_0^L}{Q_N\Phi_0^R}
 &&(\inner{P_N\Phi_0^L}{Q_N\Phi_0^R}=0).
 \end{aligned}
\]
By Eq.~\eqref{eq:E-tc-form} (now with $f=Q_N\Phi_0^L$ and $v=Q_N\Phi_0^R$),
the last expression is
$T_N^{(0)}+T_N^{(W)}$, with
\[
 \begin{aligned}
 T_N^{(0)}&:=a_{\mathrm{SHO}}(Q_N\Phi_0^L,Q_N\Phi_0^R)
       -E_0\inner{Q_N\Phi_0^L}{Q_N\Phi_0^R},\\
 T_N^{(W)}&:=\inner{Q_N\Phi_0^L}{\hat WQ_N\Phi_0^R}.
 \end{aligned}
\]
Having rewritten the first two terms in Eq.~\eqref{eq:error-formula},
we now turn to the remaining Schur term
\[
 \inner{P_N\Phi_0^L}{H_{\mathrm{TC}}^{(N)}s_N}
 =-\inner{P_N\Phi_0^L}{H_{\mathrm{TC}}^{(N)}A_N\Pi_N^\perp H_{\mathrm{TC}}^{(N)}P_N\Phi_0^R}.
\]
We use a similar strategy as above. We first rewrite the contribution
involving the projected right state, $H_{\mathrm{TC}}^{(N)}P_N\Phi_0^R$, in
terms of the right tail $Q_N\Phi_0^R$, and then rewrite the projected left
state $P_N\Phi_0^L$ in terms of the left tail $Q_N\Phi_0^L$.

Let $f\in\Ran P_N$. Since the SHO operator preserves the retained space,
$\hat H_{\mathrm{SHO}}f\in\Ran P_N$. Hence, for any $v$ in the form space,
Hermitian symmetry of $a_{\mathrm{SHO}}$ and Eq.~\eqref{eq:sho-form} give
\[
 a_{\mathrm{SHO}}(f,Q_Nv)
 =\overline{a_{\mathrm{SHO}}(Q_Nv,f)}
 =\overline{\inner{Q_Nv}{\hat H_{\mathrm{SHO}}f}}=0 .
\]
Using this with $v=\Phi_0^R$, together with $\inner{f}{Q_N\Phi_0^R}=0$,
we obtain
\[
 \begin{aligned}
 \inner{f}{H_{\mathrm{TC}}^{(N)}P_N\Phi_0^R}
 &=a_{\mathrm{TC}}(f,P_N\Phi_0^R)
 &&\text{(finite-basis form identity)}\\
 &=a_{\mathrm{TC}}(f,\Phi_0^R)-a_{\mathrm{TC}}(f,Q_N\Phi_0^R)
 &&\bigl(\Phi_0^R=P_N\Phi_0^R+Q_N\Phi_0^R\bigr)\\
 &=E_0\inner{f}{P_N\Phi_0^R}-\inner{f}{\hat WQ_N\Phi_0^R}
 &&\text{(right weak identity~\eqref{eq:E-biorth};
   }a_{\mathrm{SHO}}(f,Q_N\Phi_0^R)=0\text{)}.
 \end{aligned}
\]
Since this holds for every $f\in\Ran P_N$, and both sides belong to the
retained space,
\[
 H_{\mathrm{TC}}^{(N)}P_N\Phi_0^R
 =E_0P_N\Phi_0^R-P_N\hat WQ_N\Phi_0^R .
\]
Substituting this into
$s_N=-A_N\Pi_N^\perp H_{\mathrm{TC}}^{(N)}P_N\Phi_0^R$ and using
$\Pi_N^\perp P_N\Phi_0^R=0$ gives
\begin{equation}
 s_N=A_N\Pi_N^\perp P_N\hat WQ_N\Phi_0^R ,
 \label{eq:E-sN-tail}
\end{equation}
so $s_N$ is generated by the right tail $Q_N\Phi_0^R$.

We now turn to the projected left state. First, $\inner{\Phi_0^L}{s_N}
=\inner{P_N\Phi_0^L}{s_N}+\inner{Q_N\Phi_0^L}{s_N}=0$, where the first term
vanishes by construction of $s_N$ and the second by $P_N$--$Q_N$
orthogonality. Second, $a_{\mathrm{SHO}}(Q_N\Phi_0^L,s_N)
=\inner{Q_N\Phi_0^L}{\hat H_{\mathrm{SHO}}s_N}=0$, since $s_N\in\Ran P_N$
and the SHO operator preserves the retained space. The Schur term therefore
reduces in weak form to
\[
 \begin{aligned}
 \inner{P_N\Phi_0^L}{H_{\mathrm{TC}}^{(N)}s_N}
 &=a_{\mathrm{TC}}(P_N\Phi_0^L,s_N)
 &&\text{(finite-basis form identity)}\\
 &=a_{\mathrm{TC}}(\Phi_0^L,s_N)-a_{\mathrm{TC}}(Q_N\Phi_0^L,s_N)
 &&\bigl(P_N\Phi_0^L=\Phi_0^L-Q_N\Phi_0^L\bigr)\\
 &=-a_{\mathrm{TC}}(Q_N\Phi_0^L,s_N)
 &&\text{(left weak identity~\eqref{eq:E-biorth};
   }\inner{\Phi_0^L}{s_N}=0\text{)}\\
 &=-\inner{Q_N\Phi_0^L}{\hat Ws_N}
 &&\bigl(a_{\mathrm{SHO}}(Q_N\Phi_0^L,s_N)=0\bigr).
 \end{aligned}
\]
Combining with Eq.~\eqref{eq:E-sN-tail}, the Schur contribution is $-R_N$,
\begin{equation}
 R_N:=\inner{Q_N\Phi_0^L}{\hat Ws_N}
 =\inner{Q_N\Phi_0^L}{\hat WA_N\Pi_N^\perp P_N\hat WQ_N\Phi_0^R}.
 \label{eq:E-Schur-RN}
\end{equation}
Combining these expressions with Eq.~\eqref{eq:error-formula}
gives the exact energy-error identity in terms of the left and right tails:
\begin{equation}
 \boxed{\begin{aligned}
 (E_0^{(N)}-E_0)\sigma_N&=T_N^{(0)}+T_N^{(W)}-R_N,\\[3pt]
 T_N^{(0)}&=a_{\mathrm{SHO}}(Q_N\Phi_0^L,Q_N\Phi_0^R)
       -E_0\inner{Q_N\Phi_0^L}{Q_N\Phi_0^R},\\
 T_N^{(W)}&=\inner{Q_N\Phi_0^L}{\hat WQ_N\Phi_0^R},\\
 R_N&=\inner{Q_N\Phi_0^L}{\hat WA_N\Pi_N^\perp P_N\hat WQ_N\Phi_0^R}.
 \end{aligned}}
 \label{eq:E-error-id}
\end{equation}

Before proceeding to the estimates, we emphasize the distinction between
fixed-$N$ existence and uniform-in-$N$ stability.
The derivation above requires only that $A_N$ exist at each fixed $N$,
whereas the estimates below require a uniform bound on its action as $N$
grows.  Since $R_N$ contains $\hat W s_N$ and $\hat W$ contains a first derivative,
Eq.~\eqref{eq:E-K2} requires control of $s_N$ in the energy norm.
We therefore make the following assumption.
\begin{assumption}[Ritz sequence and uniform reduced-inverse stability]
\label{asm:branch}
There exists $N_0$ such that, for all $N\ge N_0$, the Ritz value
$E_0^{(N)}$ selected in Sec.~\ref{sec:SHO-delta} is real and simple,
and the sequence satisfies
\[
E_0^{(N)}\longrightarrow E_0
\qquad (N\to\infty).
\]
For all $N\ge N_0$, the reduced inverse $A_N$ defined above exists,
and there is a constant $C_A>0$, independent of $N$, such that
\begin{equation}
 \|A_N\Pi_N^\perp P_N z\|_{\mathrm{en}}
 \le C_A\|z\|,
 \qquad
 z\in L^2_{\mathrm{even}}(\mathbb R).
 \label{eq:E-CA}
\end{equation}
\end{assumption}

\begin{remark}
Assumption~\ref{asm:branch} plays a role analogous to the 
reduced-inverse control established by Hill in his Hermitian
Rayleigh--Ritz analysis~\cite[Sec.~III~B and App.~B]{hill_rates_1985}.
There, the bound follows from the variational and positivity structure;
for the non-Hermitian TC projection, we retain the corresponding
stability requirement as an explicit assumption.
\end{remark}

Having established the exact error identity~\eqref{eq:E-error-id} and stated
the additional assumption needed for the Schur term, we now
treat the three contributions separately.
For $T_N^{(0)}$, we first expand both tails in the SHO basis $\{w_n\}$ and set
\[
 c_n^L:=\inner{w_n}{\Phi_0^L},
 \qquad
 c_n^R:=\inner{w_n}{\Phi_0^R}.
\]
With this notation,
\begin{equation}
\begin{aligned}
 T_N^{(0)}
 &=a_{\mathrm{SHO}}(Q_N\Phi_0^L,Q_N\Phi_0^R)
   -E_0\inner{Q_N\Phi_0^L}{Q_N\Phi_0^R}
 &&\text{(definition of $T_N^{(0)}$)}
 \\
 &=\sum_{n\ge N}(\varepsilon_n-E_0)
   \overline{c_n^L}c_n^R
 &&\text{(Eq.~\eqref{eq:E-coeff-norms})}.
\end{aligned}
\label{eq:E-sho-diag}
\end{equation}
For $T_N^{(W)}$, using Eq.~\eqref{eq:E-K2}, we obtain
\begin{equation}
\begin{aligned}
 |T_N^{(W)}|
 &=\bigl|\inner{Q_N\Phi_0^L}{\hat WQ_N\Phi_0^R}\bigr|
 &&\text{(definition of $T_N^{(W)}$)}
 \\
 &\le C_W\|Q_N\Phi_0^L\|\,
       \|Q_N\Phi_0^R\|_{\mathrm{en}}
 &&\text{(Eq.~\eqref{eq:E-K2})}
 \\
 &=C_W
 \left(
   \sum_{n\ge N}|c_n^L|^2
 \right)^{1/2}
 \left(
   \sum_{n\ge N}(\varepsilon_n+1)|c_n^R|^2
 \right)^{1/2}
 &&\text{(Eq.~\eqref{eq:E-coeff-norms})}.
\end{aligned}
\label{eq:E-TW-bound}
\end{equation}
For the Schur term $R_N$, similarly we have:
\begin{equation}
\begin{aligned}
 |R_N|
 &=\left|
 \inner{Q_N\Phi_0^L}
 {\hat WA_N\Pi_N^\perp P_N\hat WQ_N\Phi_0^R}
 \right|
 &&\text{(Eq.~\eqref{eq:E-Schur-RN})}
 \\
 &\le C_W\|Q_N\Phi_0^L\|\,
 \|A_N\Pi_N^\perp P_N\hat WQ_N\Phi_0^R\|_{\mathrm{en}}
 &&\text{(Eq.~\eqref{eq:E-K2})}
 \\
 &\le C_AC_W\|Q_N\Phi_0^L\|\,
 \|\hat WQ_N\Phi_0^R\|
 &&\text{(Eq.~\eqref{eq:E-CA})}
 \\
 &\le C_AC_W^2\|Q_N\Phi_0^L\|\,
 \|Q_N\Phi_0^R\|_{\mathrm{en}}
 &&\text{(Eq.~\eqref{eq:E-K2})}
 \\
 &=C_AC_W^2
 \left(\sum_{n\ge N}|c_n^L|^2\right)^{1/2}
 \left(\sum_{n\ge N}(\varepsilon_n+1)|c_n^R|^2\right)^{1/2}
 &&\text{(Eq.~\eqref{eq:E-coeff-norms})}.
\end{aligned}
\label{eq:E-RN-bound}
\end{equation}
Finally, the normalization
$\inner{\Phi_0^L}{\Phi_0^R}=1$ and $P_N$--$Q_N$ orthogonality give
\begin{equation}
\sigma_N
=\inner{P_N\Phi_0^L}{P_N\Phi_0^R}
=1-\inner{Q_N\Phi_0^L}{Q_N\Phi_0^R}
=1-\sum_{n\ge N}\overline{c_n^L}c_n^R .
\label{eq:E-overlap-tail}
\end{equation}
The tail sum in Eq.~\eqref{eq:E-overlap-tail} vanishes
as $N\to\infty$ by completeness, so $\sigma_N\to1$ and hence
$\sigma_N\ne0$ for sufficiently large $N$.

At this point, the three terms in Eq.~\eqref{eq:E-error-id}, together with
$\sigma_N$, have been reduced to quantities determined by the SHO
coefficient tails of the exact left and right states. The next subsection 
derives the large-$n$ asymptotics of these
coefficients from the local behavior of the exact states at the origin.
Section~\ref{app:E:rate} then compares the resulting asymptotic orders.

\subsection{Local cusp structure and coefficient tails}
\label{app:E:asympt}

To determine the large-$n$ behavior of the coefficients
$c_n^L$ and $c_n^R$, we analyze the local structure of the exact left and
right states near the contact point $x=0$. Since all functions considered here are even, we first expand them on
$x>0$ and then extend the expansions evenly across the origin. Thus odd
powers in the one-sided expansion appear as $|x|^{2k+1}$, whereas even
powers remain $x^{2k}$.

The exact physical ground state of Sec.~\ref{app:E:setup} has
\[
 \frac{\Psi_0(x)}{\Psi_0(0)}
 =1+g|x|-E_0x^2-\frac{gE_0}{3}|x|^3+O(x^4).
\]
For the two correlators, taken in the shifted form $u(0)=0$ of
Sec.~\ref{app:E:setup},
\[
 \begin{aligned}
 u_1(x)
 &=g|x|-g\mu x^2+\frac{g\mu^2}{2}|x|^3+O(x^4),\\
 u_2(x)
 &=g|x|-\frac{g\mu}{\sqrt\pi}x^2+O(x^4).
 \end{aligned}
\]
Note that the second expansion (erfc correlator) contains no odd powers
beyond the linear term:
$u_2-g|x|$ is an even power series in $x$ near the origin.
Substituting these expansions into the exact right state $\Phi_0^R=e^{-u}\Psi_0$ gives
\[
 \begin{aligned}
 \frac{\Phi_{0,\exp}^R(x)}{\Psi_0(0)}
 &=1+\left(g\mu-E_0-\frac{g^2}{2}\right)x^2
      +\beta_{\exp}(\mu)|x|^3+O(x^4),\\
 \frac{\Phi_{0,\operatorname{erfc}}^R(x)}{\Psi_0(0)}
 &=1+\left(\frac{g\mu}{\sqrt\pi}-E_0-\frac{g^2}{2}\right)x^2
      +\beta_{\operatorname{erfc}}|x|^3+O(x^4),
 \end{aligned}
\]
where the subscript identifies the correlator and 
\begin{equation}
 \beta_{\exp}(\mu)=\frac{g(2E_0+g^2)}{3}-\frac{g\mu^2}{2},\qquad
 \beta_{\operatorname{erfc}}=\frac{g(2E_0+g^2)}{3}.
 \label{eq:E-beta}
\end{equation}
For both correlators, applying $e^{-u}$ to $\Psi_0$ cancels the linear
cusp. Since the quadratic term is smooth across
$x=0$, the first potentially nonanalytic term is
$\Psi_0(0)\beta|x|^3$, with $\beta$ given by Eq.~\eqref{eq:E-beta}.
Hence
\[
 (\Phi_0^R)'(0^+)=0,
 \qquad
 (\Phi_0^R)'''(0^+)=6\beta\Psi_0(0).
\]
For the left state, applying $e^u$ to $\Psi_0$ instead reinforces the
linear cusp:
\[
 \frac{\Phi_0^L(x)}{\Psi_0(0)}
 =1+2g|x|+O(x^2).
\]
Thus the TC transformation produces asymmetric contact regularity between
the exact left and right states. Nevertheless, both satisfy the
corresponding weak identities in Eq.~\eqref{eq:E-biorth} with the same
eigenvalue $E_0$; this asymmetry will reappear in their coefficient
decay below.

We next translate the local behavior into the large-$n$ asymptotics of
the SHO coefficients. We first keep the state $G$ general, so that the
bare and TC cases can be analyzed under the same framework.
Since our analysis is restricted to the even sector
(Sec.~\ref{app:E:setup}), both the states and the basis functions considered here
are even. Hence each coefficient can be written as twice its integral
over the positive half-line.
For an even $G$, we therefore write
\begin{equation}
C_n(G)=2\int_0^\infty w_n(x)G(x)\,dx,
\label{eq:coeff_eq}
\end{equation}
so that $c_n^{L,R}=C_n(\Phi_0^{L,R})$.
This half-line representation allows us to relate the coefficients to
the local behavior of $G$ at the contact point. To make this connection explicit, 
it is convenient to introduce the shifted 
half-line operator and use the differential equation satisfied by the SHO
basis functions:
\begin{equation}
   \mathcal L_+
   :=\hat H_{\mathrm{SHO}}-E_0
   =-\tfrac12\partial_x^2+\tfrac12x^2-E_0,
   \qquad x>0,
\end{equation}
with $\lambda_n:=\varepsilon_n-E_0$.
Since
$\mathcal L_+w_n=\lambda_n w_n$,
we can rewrite the coefficient equation as:
\begin{equation}
   \lambda_n C_n(G)
   =2\int_0^\infty \lambda_n w_n(x)G(x)\,dx
   =2\int_0^\infty (\mathcal L_+w_n)(x)G(x)\,dx.
   \label{eq:coeff-eq}
\end{equation}
Integration by parts then gives
\begin{equation}
\lambda_n C_n(G)
 =C_n(\mathcal L_+G)+[w_nG'-w_n'G]_0^\infty.
 \label{eq:ibp-form}
\end{equation} 
For the states considered here, the boundary contribution at infinity
vanishes. Since $w_n$ is even, $w_n'(0)=0$, and hence
\begin{equation}
   \lambda_n C_n(G)
   =C_n(\mathcal L_+G)-w_n(0)G'(0^+).
   \label{eq:E-ibp}
\end{equation}
Now, for the bare ground state $\Psi_0$, Eq.~\eqref{eq:E-ibp} closes immediately,
and we obtain an exact coefficient formula for $\Psi_0$:
\begin{align}
   C_n(\Psi_0)
   &=\frac{C_n(\mathcal L_+\Psi_0)
            -w_n(0)\Psi_0'(0^+)}{\lambda_n}
   \notag\\
   &=-\frac{g\Psi_0(0)w_n(0)}{\lambda_n}
   \qquad
   \text{using }\;
   \substack{
      \mathcal L_+\Psi_0=0 \ \text{on } x>0,\\
      \Psi_0'(0^+)=g\Psi_0(0)
   }
   \notag\\
   &=-\frac{(-1)^n g\Psi_0(0)}{\pi^{1/4}}
     \frac{\sqrt{(2n)!}}
          {2^n n!\left(2n+\tfrac12-E_0\right)}.
   \label{eq:E-bare-coeff}
\end{align}

For the TC states ($\Phi_0^R,\Phi_0^L$), by contrast,
$\mathcal L_+G$ does not vanish in general, so
Eq.~\eqref{eq:E-ibp} does not close immediately.
For the two states considered here, their half-line smoothness and
decay allow Eq.~\eqref{eq:E-ibp} to be applied successively to the
finite orders needed below. After $m$ applications,
\begin{equation}
   C_n(G)
   =
   -w_n(0)\sum_{j=0}^{m-1}
      \frac{(\mathcal L_+^jG)'(0^+)}{\lambda_n^{j+1}}
   +R_{n,m}(G),
   \qquad
   R_{n,m}(G)
   :=
   \frac{C_n(\mathcal L_+^mG)}{\lambda_n^m}.
   \label{eq:E-finite-recursion}
\end{equation}
This finite iterated identity is the even-sector counterpart of the
repeated-integration-by-parts expansion underlying Hill's
Theorem~1~\cite{hill_rates_1985}.

For the right TC state, the first boundary contribution in
Eq.~\eqref{eq:E-finite-recursion} vanishes because
\[
   (\Phi_0^R)'(0^+)=0.
\]
The next contribution is determined by
\begin{align}
   (\mathcal L_+\Phi_0^R)'(0^+)
   &=
   -\tfrac12(\Phi_0^R)'''(0^+)
   -E_0(\Phi_0^R)'(0^+)
   \notag\\
   &=-3\beta\Psi_0(0),
   \label{eq:E-right-boundary}
\end{align}
where we used
$(\Phi_0^R)'''(0^+)=6\beta\Psi_0(0)$.
Thus the $j=1$ boundary term for the right state is proportional to
$\beta$ and vanishes when $\beta=0$.

For the left TC state, by contrast we have:
\[
   (\Phi_0^L)'(0^+)=2g\Psi_0(0),
\]
so the $j=0$ boundary term in
Eq.~\eqref{eq:E-finite-recursion} is already nonzero.

Thus, after retaining the first potentially nonzero boundary contribution,
the three cases (including the bare state $\Psi_0$) have the forms
\begin{equation}
\boxed{
\begin{aligned}
 C_n(\Psi_0)
 &=-\frac{g\Psi_0(0)w_n(0)}{\lambda_n},
 \\[1mm]
 C_n(\Phi_0^L)
 &=-\frac{2g\Psi_0(0)w_n(0)}{\lambda_n}
   +R_n^L,
 \\[1mm]
 C_n(\Phi_0^R)
 &=\frac{3\beta\Psi_0(0)w_n(0)}{\lambda_n^2}
   +R_n^R,
\end{aligned}}
\label{eq:E-coeff-leading-forms}
\end{equation}
Here the bare expression is exact, while the TC remainders are
\begin{equation}
 R_n^L
 :=\frac{C_n(\mathcal L_+\Phi_0^L)}{\lambda_n},
 \qquad
 R_n^R
 :=\frac{C_n(\mathcal L_+^2\Phi_0^R)}{\lambda_n^2}.
 \label{eq:E-coeff-pre-remainders}
\end{equation}
Applying Eq.~\eqref{eq:E-ibp} once more to each remainder gives
\begin{align}
 R_n^L
 &=
 -\frac{w_n(0)(\mathcal L_+\Phi_0^L)'(0^+)}{\lambda_n^2}
 +\frac{C_n(\mathcal L_+^2\Phi_0^L)}{\lambda_n^2},
 \label{eq:E-left-remainder}
 \\
 R_n^R
 &=
 -\frac{w_n(0)(\mathcal L_+^2\Phi_0^R)'(0^+)}{\lambda_n^3}
 +\frac{C_n(\mathcal L_+^3\Phi_0^R)}{\lambda_n^3}.
 \label{eq:E-right-remainder}
\end{align}

To estimate the remainder , we first use the elementary bound.
For each fixed $h\in L^2(0,\infty)$, Cauchy--Schwarz gives
\begin{equation}
   |C_n(h)|
   \leq \sqrt{2}\,\|h\|_{L^2(0,\infty)}
   =O(1),
   \label{eq:E-Cn-bound}
\end{equation}
uniformly in $n$.

Moreover,
\begin{equation}
   w_n(0)
   =\frac{(-1)^n}{\pi^{1/4}}
     \frac{\sqrt{(2n)!}}{2^n n!}
   =\frac{(-1)^n}{\sqrt{\pi}}\,n^{-1/4}
     \bigl[1+O(n^{-1})\bigr],
   \qquad
   \lambda_n
   =2n+\tfrac12-E_0
   =2n\bigl[1+O(n^{-1})\bigr].
   \label{eq:E-wn-lambda-asympt}
\end{equation}
It follows from
Eqs.~\eqref{eq:E-left-remainder}--\eqref{eq:E-right-remainder} that
\[
   R_n^L=O(n^{-2}),
   \qquad
   R_n^R=O(n^{-3}).
\]
Substituting these estimates into
Eq.~\eqref{eq:E-coeff-leading-forms} gives
\begin{equation}
\boxed{
\begin{aligned}
   C_n(\Psi_0)
   &=
   -\frac{g\Psi_0(0)}{2\sqrt{\pi}}
    (-1)^n n^{-5/4}
    +O(n^{-9/4}),
   \\[1mm]
   C_n(\Phi_0^L)
   &=
   -\frac{g\Psi_0(0)}{\sqrt{\pi}}
    (-1)^n n^{-5/4}
    +O(n^{-2}),
   \\[1mm]
   C_n(\Phi_0^R)
   &=
   \frac{3\beta\Psi_0(0)}{4\sqrt{\pi}}
   (-1)^n n^{-9/4}
   +O(n^{-3}).
\end{aligned}}
\label{eq:E-coeff-asymptotics}
\end{equation}

Note that for $\beta\neq0$ the last expression has the displayed
$n^{-9/4}$ leading term. If $\beta=0$, this contribution vanishes and
the next nonzero boundary term must be further examined.
These coefficient asymptotics imply the following tail estimates lemma:
\begin{lemma}[Tail estimates]
\label{lem:E-tail-estimates}
The coefficient asymptotics in Eq.~\eqref{eq:E-coeff-asymptotics}
imply
\begin{equation}
\begin{aligned}
   \|Q_N\Phi_0^L\|
   &=O(N^{-3/4}),
   &
   \|Q_N\Phi_0^R\|
   &=O(N^{-7/4}),
   \\
   \|Q_N\Phi_0^R\|_{\mathrm{en}}
   &=O(N^{-5/4}),
   &
   \sigma_N
   &=1+O(N^{-5/2}).
\end{aligned}
\label{eq:E-tailnorms}
\end{equation}
\end{lemma}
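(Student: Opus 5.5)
The plan is to read everything off Eq.~\eqref{eq:E-coeff-asymptotics}. From there we only need crude upper bounds $|c_n^L|\le C n^{-5/4}$ and $|c_n^R|\le C n^{-9/4}$ for $n\ge1$; oscillating signs and the exact leading constants play no role. Each tail quantity in Eq.~\eqref{eq:E-coeff-norms} and Eq.~\eqref{eq:E-overlap-tail} is a sum over $n\ge N$. We will bound each such sum by the integral comparison $\sum_{n\ge N} n^{-s}=O(N^{1-s})$, valid for $s>1$.

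First, the $L^2$ tails. We have
\begin{equation*}
\|Q_N\Phi_0^L\|^2=\sum_{n\ge N}|c_n^L|^2\le C^2\sum_{n\ge N}n^{-5/2}=O(N^{-3/2}),
\end{equation*}
so $\|Q_N\Phi_0^L\|=O(N^{-3/4})$. In the same way, $\sum_{n\ge N}n^{-9/2}=O(N^{-7/2})$, which gives $\|Q_N\Phi_0^R\|=O(N^{-7/4})$.

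Second, the energy-norm tail. Here we use $\varepsilon_n+1=2n+\tfrac32\le 4n$ for $n\ge1$, which gives
\begin{equation*}
\|Q_N\Phi_0^R\|_{\mathrm{en}}^2\le 4C^2\sum_{n\ge N}n^{-7/2}=O(N^{-5/2}),
\end{equation*}
and hence $\|Q_N\Phi_0^R\|_{\mathrm{en}}=O(N^{-5/4})$.

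Third, the overlap. By Eq.~\eqref{eq:E-overlap-tail}, Cauchy--Schwarz, or directly the termwise bound $|c_n^Lc_n^R|\le C^2n^{-7/2}$, we get $|\sigma_N-1|\le\sum_{n\ge N}|c_n^L||c_n^R|=O(N^{-5/2})$. Equivalently, $|\sigma_N-1|\le\|Q_N\Phi_0^L\|\,\|Q_N\Phi_0^R\|=O(N^{-3/4-7/4})$, which is the same order.

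The only point needing care is justifying the uniform bounds on $c_n^{L,R}$, since Eq.~\eqref{eq:E-coeff-asymptotics} holds for large $n$. The remainder estimates there rest on Eq.~\eqref{eq:E-Cn-bound} being applied to $\mathcal L_+^j\Phi_0^{L,R}$ for $j\le3$. So we must check that these functions lie in $L^2(0,\infty)$, and that the boundary terms at infinity vanish at each integration by parts. Both properties follow from the Gaussian decay of $\Psi_0$ and the smoothness and boundedness, with bounded derivatives, of $u$ on the half-line. We will take this as already established where Eq.~\eqref{eq:E-coeff-asymptotics} was derived.

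With that in place, the bounds hold for all $n\ge n_0$. The finitely many remaining $n$ do not affect a tail sum once $N\ge n_0$. Also, for $\beta=0$ the right-state bound only improves, so the stated $O$-estimates remain valid as upper bounds.
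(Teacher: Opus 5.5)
Your proposal is correct and follows the paper's proof essentially step for step. Both arguments bound the coefficient tail sums with $\sum_{n\ge N}n^{-p}=O(N^{1-p})$, using the extra $O(n)$ weight for the energy norm, and both apply Cauchy--Schwarz to $1-\sigma_N=\langle Q_N\Phi_0^L,Q_N\Phi_0^R\rangle$; your remarks on the large-$n$ validity of the coefficient bounds and on the $\beta=0$ case are harmless refinements that the paper leaves implicit.
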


\begin{proof}
By the coefficient identities in Eq.~\eqref{eq:E-coeff-norms},
\begin{equation}
   \|Q_N G\|^2
   =
   \sum_{n\geq N}|C_n(G)|^2,
   \qquad
   \|Q_N G\|_{\mathrm{en}}^2
   =
   \sum_{n\geq N}
   (\varepsilon_n+1)|C_n(G)|^2 .
   \label{eq:E-tail-coeff-norms}
\end{equation}
Using
\[
   \sum_{n\geq N} n^{-p}=O(N^{1-p}),
   \qquad p>1,
\]
together with
\[
   |C_n(\Phi_0^L)|=O(n^{-5/4}),
   \qquad
   |C_n(\Phi_0^R)|=O(n^{-9/4}),
\]
we obtain
\begin{align}
   \|Q_N\Phi_0^L\|^2
   &=
   O\!\left(
      \sum_{n\geq N}n^{-5/2}
   \right)
   =
   O(N^{-3/2}),
   \label{eq:E-left-L2-tail}
   \\
   \|Q_N\Phi_0^R\|^2
   &=
   O\!\left(
      \sum_{n\geq N}n^{-9/2}
   \right)
   =
   O(N^{-7/2}).
   \label{eq:E-right-L2-tail}
\end{align}
Since $\varepsilon_n+1=O(n)$,
\begin{equation}
   \|Q_N\Phi_0^R\|_{\mathrm{en}}^2
   =
   O\!\left(
      \sum_{n\geq N}n\,n^{-9/2}
   \right)
   =
   O\!\left(
      \sum_{n\geq N}n^{-7/2}
   \right)
   =
   O(N^{-5/2}).
   \label{eq:E-right-en-tail}
\end{equation}
Taking square roots yields the first three estimates in
Eq.~\eqref{eq:E-tailnorms}.

Finally, using
\[
   1-\sigma_N
   =
   \inner{Q_N\Phi_0^L}{Q_N\Phi_0^R},
\]
Cauchy--Schwarz gives
\begin{equation}
   |1-\sigma_N|
   \leq
   \|Q_N\Phi_0^L\|\,
   \|Q_N\Phi_0^R\|
   =
   O(N^{-3/4})O(N^{-7/4})
   =
   O(N^{-5/2}),
\end{equation}
and therefore
\[
   \sigma_N=1+O(N^{-5/2}).
\]
\end{proof}

\subsection{Energy convergence and correlator dependence}
\label{app:E:rate}

\begin{theorem}[Rate and prefactor]
\label{thm:rate}
Under {Assumption}~\ref{asm:branch}, either correlator at fixed $\mu>0$ satisfies
\begin{equation}
 E_0^{(N)}-E_0=-\frac{g\Psi_0(0)^2\beta}{\pi}N^{-3/2}+O(N^{-2}).
 \label{eq:E-final}
\end{equation}
\end{theorem}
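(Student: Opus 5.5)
The plan is to start from the exact error identity~\eqref{eq:E-error-id}, $(E_0^{(N)}-E_0)\sigma_N=T_N^{(0)}+T_N^{(W)}-R_N$, and compare orders. By Lemma~\ref{lem:E-tail-estimates}, $\sigma_N=1+O(N^{-5/2})$, so dividing by $\sigma_N$ changes the right-hand side only by a relative factor $1+O(N^{-5/2})$. The leading term should therefore come from $T_N^{(0)}$, and the other two terms should be relegated to the remainder.

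\textbf{Bounding $T_N^{(W)}$ and $R_N$.} From~\eqref{eq:E-TW-bound} and~\eqref{eq:E-RN-bound}, where the latter is valid under Assumption~\ref{asm:branch}, both terms are bounded by a constant times $\|Q_N\Phi_0^L\|\,\|Q_N\Phi_0^R\|_{\mathrm{en}}$. Lemma~\ref{lem:E-tail-estimates} gives
\[
\|Q_N\Phi_0^L\|\,\|Q_N\Phi_0^R\|_{\mathrm{en}}=O(N^{-3/4})\,O(N^{-5/4})=O(N^{-2}).
\]
Hence $|T_N^{(W)}|+|R_N|=O(N^{-2})$, which is exactly the claimed remainder order.

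\textbf{Leading term from $T_N^{(0)}$.} Use~\eqref{eq:E-sho-diag}, $T_N^{(0)}=\sum_{n\ge N}\lambda_n\overline{c_n^L}c_n^R$, with real coefficients, and insert the asymptotics~\eqref{eq:E-coeff-asymptotics}. The leading product is
\[
\Bigl(-\tfrac{g\Psi_0(0)}{\sqrt\pi}(-1)^n n^{-5/4}\Bigr)\Bigl(\tfrac{3\beta\Psi_0(0)}{4\sqrt\pi}(-1)^n n^{-9/4}\Bigr)=-\tfrac{3g\beta\Psi_0(0)^2}{4\pi}\,n^{-7/2}.
\]
The signs $(-1)^n$ square away, so there is no oscillatory cancellation. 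Multiplying by $\lambda_n=2n+O(1)$ gives the summand $-\tfrac{3g\beta\Psi_0(0)^2}{2\pi}n^{-5/2}$. Summing via $\sum_{n\ge N}n^{-5/2}=\tfrac23N^{-3/2}+O(N^{-5/2})$ yields
\[
-\frac{g\Psi_0(0)^2\beta}{\pi}N^{-3/2},
\]
which matches~\eqref{eq:E-final}.

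\textbf{Error terms in $T_N^{(0)}$.} Each cross term contributes a summand $\lambda_n\times(\text{leading})\times(\text{remainder})$, and these must total $O(N^{-2})$. The left remainder $O(n^{-2})$ against the right leading term $n^{-9/4}$, times $n$, gives $n^{-13/4}$, summing to $O(N^{-9/4})$. The right remainder $O(n^{-3})$ against the left leading term $n^{-5/4}$, times $n$, gives $n^{-13/4}$ again. The product of the two remainders is smaller still. The $O(1)$ correction in $\lambda_n$ and the $O(n^{-1})$ relative corrections inside $w_n(0)$ contribute $O(n^{-7/2})$ summands, hence $O(N^{-5/2})$. All of these fall within $O(N^{-2})$.

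\textbf{Main obstacle.} The delicate point is bookkeeping rather than any single estimate. The coefficient asymptotics in~\eqref{eq:E-coeff-asymptotics} must be used with the explicit subleading structure from~\eqref{eq:E-left-remainder}--\eqref{eq:E-right-remainder}, including the facts that $\mathcal L_+^j\Phi_0^{L,R}$ lies in $L^2$ and that the boundary terms at infinity vanish. I would check these first, since the remainders $R_n^{L,R}$ must be uniformly $O(n^{-2})$ and $O(n^{-3})$ with constants independent of $n$.

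Assumption~\ref{asm:branch} enters only through $R_N$. If $\beta=0$, the formula degenerates consistently to $O(N^{-2})$.
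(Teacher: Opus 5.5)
Your proposal is correct and follows the paper's proof essentially line by line. The leading term comes from $T_N^{(0)}=\sum_{n\ge N}\lambda_n\overline{c_n^L}c_n^R$ through the coefficient asymptotics and $\sum_{n\ge N}n^{-5/2}=\tfrac23N^{-3/2}+O(N^{-5/2})$, with cross terms of order $N^{-9/4}$ exactly as in the paper; $T_N^{(W)}$ and $R_N$ are $O(N^{-2})$ by the asymmetric bounds and Lemma~\ref{lem:E-tail-estimates}; and $\sigma_N^{-1}=1+O(N^{-5/2})$. One minor correction: Assumption~\ref{asm:branch} is also what guarantees, for $N\ge N_0$, that $A_N$ exists, and hence that the exact identity~\eqref{eq:E-error-id} holds, so it does not enter only through the bound on $R_N$.
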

\begin{proof}
$T_N^{(0)}=\sum_{n\ge N} \lambda_n\overline{c_n^L}c_n^R$.
Inserting the coefficient
asymptotics~\eqref{eq:E-coeff-asymptotics} and using
$\sum_{n\ge N}n^{-5/2}=\tfrac23N^{-3/2}+O(N^{-5/2})$ gives the first line
below; evaluating the asymmetric bounds~\eqref{eq:E-TW-bound}
and~\eqref{eq:E-RN-bound} with Lemma~\ref{lem:E-tail-estimates} gives the
other two:
\begin{equation*}
 \begin{aligned}
 T_N^{(0)}
 &=-\frac{g\Psi_0(0)^2\beta}{\pi}N^{-3/2}+O(N^{-9/4}),
 \\[1mm]
 |T_N^{(W)}|
 &\le C_W\|Q_N\Phi_0^L\|\,\|Q_N\Phi_0^R\|_{\mathrm{en}}
  =O(N^{-3/4})\,O(N^{-5/4})=O(N^{-2}),
 \\[1mm]
 |R_N|
 &\le C_AC_W^2\|Q_N\Phi_0^L\|\,\|Q_N\Phi_0^R\|_{\mathrm{en}}
  =O(N^{-3/4})\,O(N^{-5/4})=O(N^{-2}).
 \end{aligned}
\end{equation*}
Finally, $\sigma_N^{-1}=1+O(N^{-5/2})$ by Lemma~\ref{lem:E-tail-estimates},
so Eq.~\eqref{eq:E-error-id} gives the stated result.
\end{proof}

\begin{remark}[Bare convergence]
For the bare self-adjoint Hamiltonian
$\hat H=\hat H_{\mathrm{SHO}}+g\delta(x)$,
Eq.~\eqref{eq:E-coeff-asymptotics} gives
$|C_n(\Psi_0)|=O(n^{-5/4})$, so that
\[
   \|Q_N\Psi_0\|_{\mathrm{en}}^2
   =
   \sum_{n\geq N}(\varepsilon_n+1)|C_n(\Psi_0)|^2
   =
   O(N^{-1/2}).
\]
The standard Rayleigh--Ritz estimate with the trial function
$P_N\Psi_0$ then gives
$E_{0,\mathrm{bare}}^{(N)}-E_0=O(N^{-1/2})$,
in agreement with Grining
\emph{et al.}~\cite{grining_many_2015}.
\end{remark}

The correlator dependence enters through the cubic
coefficients in Eq.~\eqref{eq:E-beta}.
The rescaled errors in Sec.~\ref{sec:SHO-delta} have limits
$A_j(\mu):=\lim_{N\to\infty}A_j(\mu;N)=|g|\Psi_0(0)^2|\beta_j|/\pi$,
where $j\in\{\exp,\operatorname{erfc}\}$. Their ratio is
\begin{equation}
 \frac{A_{\exp}(\mu)}{A_{\operatorname{erfc}}(\mu)}
 =\left|1-\frac{3\mu^2}{2(2E_0+g^2)}\right|.
 \label{eq:E-ratio}
\end{equation}
This gives the correlator dependence of Fig.~\ref{fig:prefactor_tc} under
the same {assumption}.

\paragraph{Possible cancellation of the leading term.}
For TC-exp, the correlator parameter can be chosen to cancel
the leading $N^{-3/2}$ energy-error term. This occurs at the unique positive parameter
\begin{equation}
 \mu_\ast^2=\tfrac23(2E_0+g^2)
 \label{eq:E-mustar}
\end{equation}
for which $\beta_{\exp}=0$.
Positivity follows from
\[
2E_0+g^2
=
\|\Psi_0'-g\operatorname{sgn}(x)\Psi_0\|^2
+\|x\Psi_0\|^2
>0.
\]
TC-erfc has no corresponding cancellation because
$\beta_{\operatorname{erfc}}$ is nonzero and independent of $\mu$.

At $\mu_\ast$, the $|x|^3$ term vanishes and the next
potentially nonanalytic term is $\Psi_0(0)\beta_5|x|^5$, where
\begin{equation}
 \beta_5=
 \frac{g}{270}
 \left(124E_0^2+160E_0g^2+49g^4-9\right).
 \label{eq:E-beta5}
\end{equation}
With $\beta=0$, the first two boundary data in
Eq.~\eqref{eq:E-finite-recursion} vanish, and carrying the same
calculation one order further gives
\[
   (\mathcal L_+^2\Phi_0^R)'(0^+)
   =30\beta_5\Psi_0(0).
\]
Hence
\begin{equation}
   c_n^R
   =
   -\frac{30\beta_5\Psi_0(0)w_n(0)}{\lambda_n^3}
   +\frac{a_n}{\lambda_n^3},
   \qquad
   a_n:=C_n(\mathcal L_+^3\Phi_0^R).
   \label{eq:E-tuned-cn}
\end{equation}
Since $\mathcal L_+^3\Phi_0^R\in L^2(0,\infty)$, Parseval implies
$(a_n)\in\ell^2$. Together with the asymptotics in
Eq.~\eqref{eq:E-wn-lambda-asympt},
Eq.~\eqref{eq:E-tuned-cn} yields
\begin{align*}
   \|Q_N\Phi_0^R\|_{\mathrm{en}}^2
   &=
   O\!\left(\sum_{n\geq N}n^{-11/2}\right)
   +
   O\!\left(
      N^{-5}\sum_{n\geq N}|a_n|^2
   \right)
   \\
   &=O(N^{-9/2}),
\end{align*}
and therefore
\begin{equation}
   \|Q_N\Phi_0^R\|_{\mathrm{en}}
   =O(N^{-9/4}).
   \label{eq:E-tuned-energy-tail}
\end{equation}
Consequently, Eqs.~\eqref{eq:E-TW-bound} and
\eqref{eq:E-RN-bound} improve to
\[
   T_N^{(W)},\,R_N
   =O(N^{-3})
   =o(N^{-5/2}).
\]
The same square-summability bounds the $a_n$ contribution to the diagonal
tail,
\[
 \sum_{n\ge N}
 \frac{|c_n^L a_n|}{\lambda_n^2}
 \leq
 C'
 \left(\sum_{n\ge N}|a_n|^2\right)^{1/2}
 \left(\sum_{n\ge N}n^{-13/2}\right)^{1/2}
 =
 o(N^{-11/4}),
\]
so that this contribution is subleading as well. The explicit boundary term
in Eq.~\eqref{eq:E-tuned-cn} therefore gives, by the same calculation as in
Theorem~\ref{thm:rate},
\begin{equation}
 E_0^{(N)}-E_0
 =
 \frac{3g\Psi_0(0)^2\beta_5}{\pi}N^{-5/2}
 +o(N^{-5/2}),
 \qquad
 \mu=\mu_\ast .
 \label{eq:E-tuned-rate}
\end{equation}
For $\beta_5\ne0$, this gives $N^{-5/2}$ convergence.
If $\beta_5=0$, this term also cancels and the error is $o(N^{-5/2})$.
The cancellation is specific to the target state and uses its exact $E_0$;
$\mu_\ast$ is held fixed as $N$ changes.
At the level of local coefficients, this is a multiplicative analogue of
Hill's additive comparison functions~\cite[Eqs.~(4.21)--(4.24)]{hill_rates_1985}.
Figure~\ref{fig:appendix-e-tuning} illustrates the generic and tuned behavior.

\begin{figure}[ht!]
 \centering
 \includegraphics[width=0.85\linewidth]{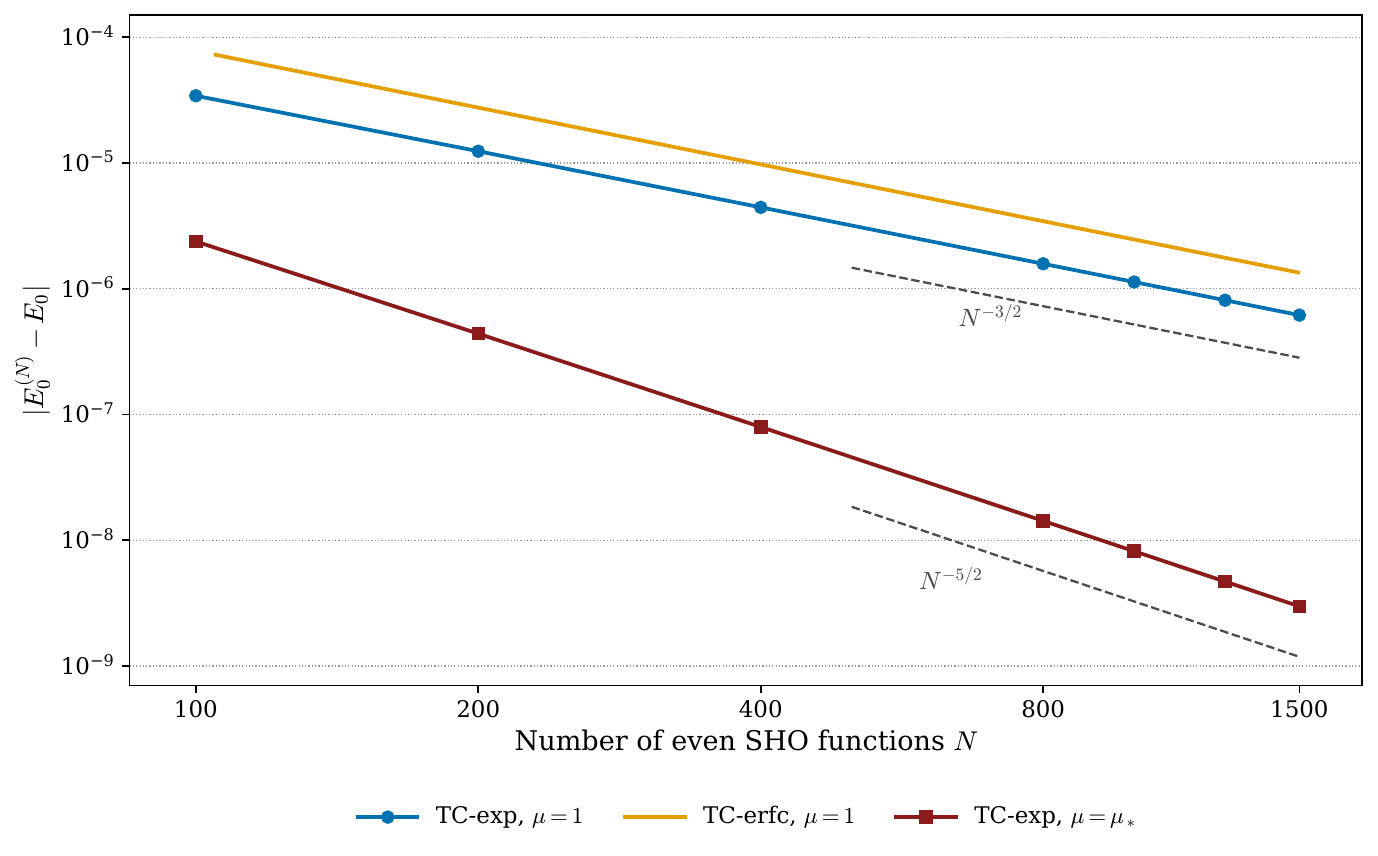}
 \caption{Basis convergence at $g=1$ for TC-exp with $\mu=1$ and fixed
 $\mu_\ast\simeq1.362715$, and TC-erfc with $\mu=1$.
 Here $N$ counts retained even SHO functions; dashed lines indicate
 $N^{-3/2}$ and $N^{-5/2}$ slopes.}
 \label{fig:appendix-e-tuning}
\end{figure}

\section{Re-analysis of published TC and xTC Hamiltonians}
\label{app:uvarov-reanalysis}

We test the sensitivity of solver-returned condition numbers to
eigenbasis choice on an independent set of operators. We re-analyze the
published full-TC and xTC qubit Hamiltonians of Uvarov and
Izmaylov~\cite{uvarov_accuracy_2025}. The data comprise Li, Be,
B, C, and N, with ten variational-Monte-Carlo correlator realizations for each
atom and each of the full-TC and xTC constructions. Their Hamiltonians use the
same Jordan--Wigner qubit representation as the molecular calculations in
Sec.~\ref{sec:molecular} and are analyzed on the full qubit space.
We use the untruncated operators and compare them with the corresponding
reported exact-operator condition numbers.

We evaluate these operators using the conditioning conventions of
Sec.~\ref{sec:cond}. Because their optimized correlators differ from the
fixed forms used here, this comparison concerns the eigenbasis convention rather
than physical conditioning magnitudes. Except for most full-TC Li
realizations and one full-TC N realization, $\kappa(V_{\mathrm{os}})$ remains
close to unity; all xTC values are at most $1.21$. The most extreme
full-TC Li realization has the maximum published raw value over the full data
set, $\kappa_S=33.8$, and retains a large convention-defined upper bound
after eigenspace orthonormalization, with
$\kappa(V_{\mathrm{os}})=19.45$ (Fig.~\ref{fig:uvarov-reanalysis} and
Table~\ref{tab:uvarov-reanalysis}).

\begin{figure*}[!ht]
  \centering
  \includegraphics[width=\linewidth]{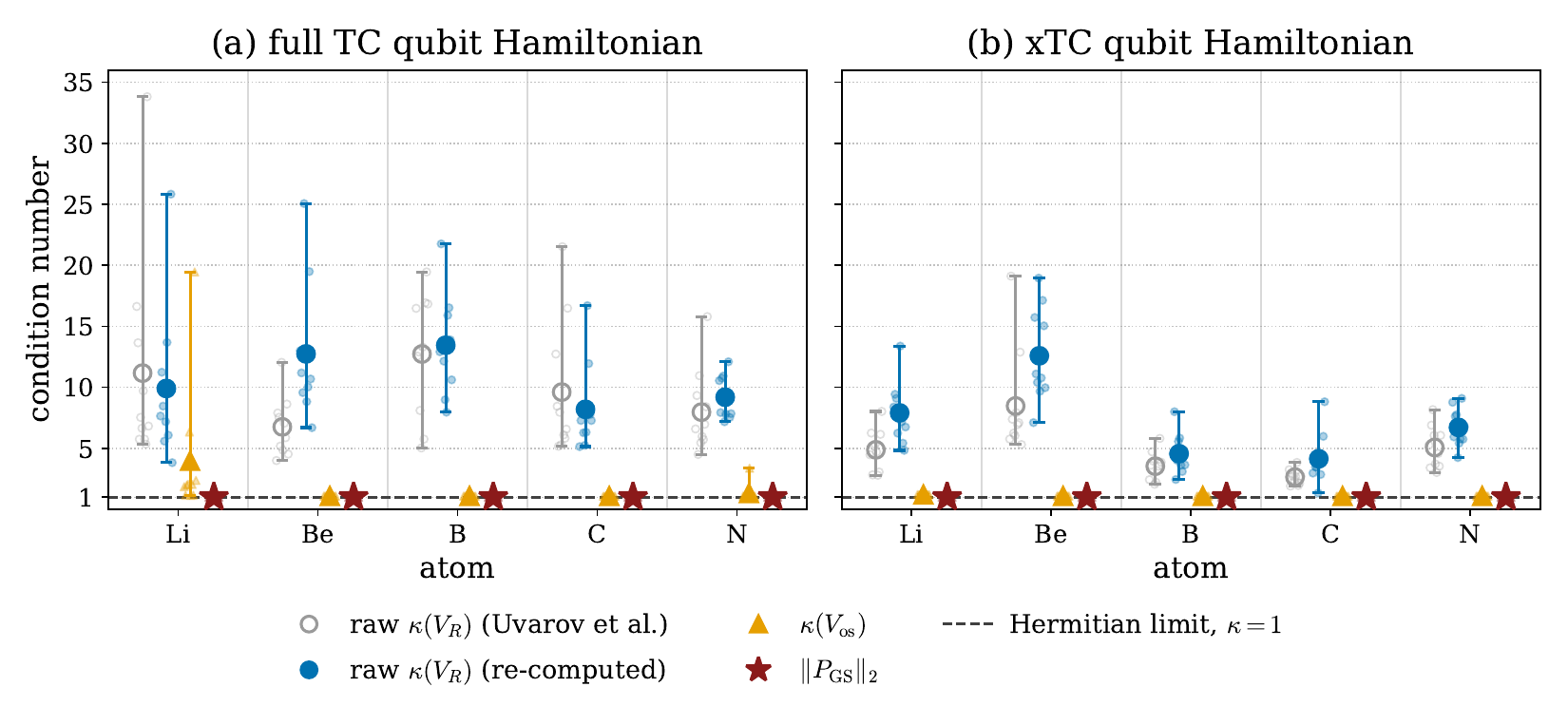}
  \caption{\label{fig:uvarov-reanalysis}%
  Re-analysis of the Li--N transcorrelated qubit Hamiltonians of
  Ref.~\cite{uvarov_accuracy_2025} (full TC, left; xTC, right).
  Large markers and error bars show the mean and full $[\min,\max]$ range
  over ten realizations; faint symbols show individual values.
  Gray and blue circles denote the published and re-computed raw condition
  numbers, while orange triangles and red stars denote
  $\kappa(V_{\mathrm{os}})$ and $\|P_{\mathrm{GS}}\|_2$; the basis-fixed quantities
  lie closer to the Hermitian limit except for most full-TC Li realizations and
  one full-TC N realization.}
\end{figure*}

\begin{table*}[ht!]
\caption{\label{tab:uvarov-reanalysis}%
Per-atom means and full $[\min,\max]$ ranges over ten realizations for the
published operators of Ref.~\cite{uvarov_accuracy_2025}.
Definitions and eigenspace conventions follow Sec.~\ref{sec:cond};
$\kappa_S$ and $\kappa(V_R)$ denote the published and re-computed raw values,
respectively. The final three columns are basis-fixed quantities, with
$\max_i\|P_i\|_2\le\kappa^*\le\kappa(V_{\mathrm{os}})$.}
\squeezetable
\begin{ruledtabular}
\begin{tabular}{lccccc}
Atom & $\kappa_S$~\cite{uvarov_accuracy_2025} & $\kappa(V_R)$ & $\kappa(V_{\mathrm{os}})$ & $\max_i\|P_i\|_2$ & $\|P_{\mathrm{GS}}\|_2$\\
\colrule
\multicolumn{6}{c}{full TC}\\
Li & $11.2\,[5.3,33.8]$ & $9.9\,[3.8,25.8]$  & $3.96\,[1.16,19.45]$ & $1.172\,[1.011,1.605]$ & $1.0006\,[1.0006,1.0006]$\\
Be & $6.8\,[4.0,12.0]$  & $12.7\,[6.7,25.1]$ & $1.09\,[1.09,1.09]$  & $1.003\,[1.003,1.004]$ & $1.0018\,[1.0016,1.0020]$\\
B  & $12.7\,[5.0,19.5]$ & $13.5\,[8.0,21.8]$ & $1.09\,[1.08,1.09]$  & $1.003\,[1.003,1.003]$ & $1.0014\,[1.0011,1.0014]$\\
C  & $9.6\,[5.2,21.5]$  & $8.2\,[5.1,16.7]$  & $1.08\,[1.07,1.08]$  & $1.003\,[1.003,1.003]$ & $1.0007\,[1.0006,1.0007]$\\
N  & $8.0\,[4.5,15.8]$  & $9.2\,[7.2,12.1]$  & $1.32\,[1.09,3.37]$  & $1.004\,[1.003,1.004]$ & $1.0000\,[1.0000,1.0000]$\\
\colrule
\multicolumn{6}{c}{xTC}\\
Li & $4.9\,[2.8,8.0]$   & $7.9\,[4.8,13.4]$  & $1.20\,[1.20,1.21]$  & $1.017\,[1.016,1.018]$ & $1.0006\,[1.0006,1.0006]$\\
Be & $8.5\,[5.3,19.1]$  & $12.6\,[7.1,18.9]$ & $1.09\,[1.09,1.09]$  & $1.003\,[1.003,1.004]$ & $1.0018\,[1.0016,1.0020]$\\
B  & $3.5\,[2.1,5.8]$   & $4.6\,[2.4,8.0]$   & $1.09\,[1.08,1.09]$  & $1.003\,[1.003,1.003]$ & $1.0014\,[1.0011,1.0014]$\\
C  & $2.6\,[1.9,3.8]$   & $4.1\,[1.3,8.8]$   & $1.08\,[1.07,1.08]$  & $1.003\,[1.003,1.003]$ & $1.0007\,[1.0006,1.0007]$\\
N  & $5.1\,[3.0,8.2]$   & $6.7\,[4.2,9.1]$   & $1.09\,[1.09,1.09]$  & $1.004\,[1.003,1.004]$ & $1.0000\,[1.0000,1.0000]$\\
\end{tabular}
\end{ruledtabular}
\end{table*}

\newpage

\bibliographystyle{apsrev4-2}
\bibliography{Refs}

\end{document}